\documentclass[12pt,reqno]{amsart}

\usepackage{amssymb}
\usepackage{amsfonts}
\usepackage{amsthm}
\usepackage{amsmath}

\usepackage{txfonts}
\usepackage{upgreek}
\usepackage{graphicx}
\usepackage{color}
\usepackage[utf8]{inputenc}
\usepackage[T1]{fontenc}
\usepackage{empheq}
\usepackage{bbm}
\usepackage{mathrsfs}
\usepackage{cite}
\usepackage[bbgreekl]{mathbbol}
\usepackage{breqn}
\usepackage{hyperref}



\newlength\mytemplen
\newsavebox\mytempbox
\definecolor{myblue}{rgb}{.97,.97,1}

\makeatletter
\newcommand\mybluebox{%
    \@ifnextchar[
       {\@mybluebox}%
       {\@mybluebox[0pt]}}

\def\@mybluebox[#1]{%
    \@ifnextchar[
       {\@@mybluebox[#1]}%
       {\@@mybluebox[#1][0pt]}}

\def\@@mybluebox[#1][#2]#3{
    \sbox\mytempbox{#3}%
    \mytemplen\ht\mytempbox
    \advance\mytemplen #1\relax
    \ht\mytempbox\mytemplen
    \mytemplen\dp\mytempbox
    \advance\mytemplen #2\relax
    \dp\mytempbox\mytemplen
    \colorbox{myblue}{\hspace{1em}\usebox{\mytempbox}\hspace{1em}}}
\makeatother    


\newcommand{\q}{{ q}}

\newcommand{\ds}{\displaystyle}

\def\im{\mathop{\hbox{\rm Im}}\nolimits}
\newcommand{\EXP}{e}
\newcommand{\ii}{\mathsf{i}}
\newcommand{\II}{\mathrm{II}}
\newcommand{\I}{\mathrm{I}}


\def\be#1\ee{\begin{align}\begin{split}#1\end{split}\end{align}}
\def\beq#1\eeq{\begin{align}\begin{split}#1\end{split}\end{align}}


\newtheorem{thm}{Theorem}
\newtheorem{rem}{Remark}
\newtheorem{lem}{Lemma}
\newtheorem{cor}{Corollary}
\newtheorem{prop}{Proposition}
\newtheorem{definition}{Definition}

 
\newcommand{\smb}[1]{\mathbb{#1}}
\newcommand{\tilc}{\smb{c}} 
\newcommand{\tilx}{\smb{x}}
\newcommand{\tily}{\smb{y}}
\newcommand{\tilz}{\smb{z}}
\newcommand{\tilw}{\smb{w}}
\newcommand{\tilt}{\smb{t}}

\newcommand{\vt}{x}
\newcommand{\vT}{T}
\newcommand{\Gt}{\Gamma^+}
\newcommand{\gt}{\gamma^+}
\newcommand{\pt}{\phi^+}

\newcommand{\sumY}{\smb{S}}


\newcommand{\Painleve}{Painlev\'{e} }



\begin{document}

\title[Lens Generalisation of $\uptau$-functions for Elliptic Painlev\'{e} Equation]{Lens Generalisation of $\uptau$-functions for the
Elliptic Discrete Painlev\'{e} Equation}

\author{Andrew P.~Kels}
\address{(APK) Institute of Physics, University of Tokyo,
Komaba, Tokyo 153-8902, Japan}

\author{Masahito Yamazaki}
\address{(MY) Kavli Institute for the Physics and Mathematics of the Universe (WPI),
University of Tokyo, Kashiwa, Chiba 277-8583, Japan}

\begin{abstract}
We propose a new bilinear Hirota equation for $\uptau$-functions associated with the $E_8$ root lattice,
that provides a ``lens'' generalisation of the $\uptau$-functions for the elliptic discrete \Painleve equation.
Our equations are characterized by a positive integer $r$ in addition to the usual elliptic parameters,
and involve a mixture of continuous variables with additional discrete variables, the latter taking values on the $E_8$ root lattice. 
We construct explicit $W(E_7)$-invariant hypergeometric solutions of this bilinear Hirota equation, which are given in terms of 
elliptic hypergeometric sum/integrals.
\end{abstract}

\maketitle
\tableofcontents

\section{Introduction}

In the literature many variations of the differential and discrete (difference) \Painleve equations
have been found. These equations have been classified into rational, trigonometric and elliptic equations.
At the top level of the hierarchy is the elliptic discrete \Painleve equation with affine Weyl group symmetry of type $E_8^{(1)}$.
This equation has been obtained from geometric considerations \cite{Sakai}, and as a discrete system on the $E_8$ root lattice \cite{ORG}
(see \cite{MR1984002,MR2353465} for relation between the two approaches, and \cite{MR3609039} for a comprehensive survey).

In a recent work \cite{Noumi}, Noumi has given details of the construction of ORG $\uptau$-functions \cite{ORG} on the $E_8$ lattice for the elliptic discrete \Painleve equation. The goal of this paper is to present a generalization of Noumi's $\uptau$-function, along with solutions of this $\uptau$-function that are given in terms of elliptic hypergeometric sum/integrals.  The latter are generalisations of elliptic hypergeometric integrals, and depend on additional discrete parameters which enter as arguments of the lens elliptic gamma function.  Such functions first appeared in the study of supersymmetric gauge theories \cite{Benini:2011nc}, and in recent works several elliptic hypergeometric sum/integral formulas have been studied and proven from a mathematical point of view \cite{Kels:2015bda,rarified,Kels:2017toi}. These results motivate the construction of the corresponding lens $\uptau$-function of this paper, which involves two copies of the $E_8$ root lattice
and a positive integer parameter $r$, and the resulting equations depend on the usual continuous variables, as well as additional discrete variables on the $E_8$ root lattice. 
We propose a bilinear Hirota-type equation for the $\uptau$-function,
and construct explicit solutions of the bilinear equation in terms of an elliptic hypergeometric sum/integral
for a general value of the integer parameter $r$, which is fixed throughout the paper.
The hypergeometric $\uptau$-functions of this paper are expected to provide a solution for some (not yet known) generalisation of the elliptic discrete \Painleve equation.

For the case of $r=1$, the elliptic hypergeometric sum/integral used in this paper, reduces to the same elliptic hypergeometric integral which provides the hypergeometric solution of Noumi's $\uptau$-function \cite{Noumi}.  Then it might be expected that for $r=1$, the $\uptau$-function of this paper will also reduce to Noumi's $\uptau$-function.  Surprisingly this is not the case, since even for $r=1$ the Hirota equations (and solutions) will be seen to retain the dependence on the discrete variables on the $E_8$ root lattice.  The $\uptau$-function of \cite{Noumi} would then appear to correspond to a possible degenerate case, where there is no contribution of the discrete variables on the $E_8$ root lattice, in which case our $\uptau$-function will take values in subsets of $\mathbb{C}^8$, as is the case for \cite{Noumi}.  This is a rather interesting subtlety that arises here, and appears to be necessary for constructing solutions given in terms of the elliptic hypergeometric sum/integral, which will satisfy the bilinear relations and the invariance under the Weyl group $W(E_7)$.  

It is expected that the results of this paper will open up many possible future research directions. For example, it would be be interesting to
find an explicit Hamiltonian form of the discrete \Painleve equation associated to the $\uptau$-function of this paper, and to explore the various degenerations of the equations.  It would also be interesting to explore the geometric aspects of these equations
along the lines of Sakai's classification \cite{Sakai}.  In another direction, the lens elliptic gamma function, which is a central function for this paper, first appeared
in the study of four-dimensional $\mathcal{N}=1$ supersymmetric gauge theories on a circle times the lens space $S^3/\mathbb{Z}_r$ \cite{Benini:2011nc}. 
This connection suggests that there exists an interpretation of the results of this paper in terms of supersymmetric gauge theories
and associated integrable lattice models \cite{Bazhanov:2010kz,Spiridonov:2010em,Bazhanov:2011mz,Yamazaki:2012cp,Terashima:2012cx,Yamazaki:2013nra,Bazhanov:2013bh,Razamat:2013opa,Kels:2015bda,GahramanovKels,Yamazaki:2018xbx}.

The rest of this paper is organized as follows.
In Section \ref{sec.functions}, we provide definitions of the ``lens'' set of 
special functions, which generalise the special functions that appear in the theory of elliptic hypergeometric integrals.
In Section \ref{sec.integrals}, we define an  elliptic hypergeometric sum/integral for constructing the hypergeometric $\uptau$-function, and present the relevant identities that it satisfies.
In Section \ref{sec.tau}, we formulate the Hirota identities for the $\uptau$-function on the $E_8$ lattice,
which are then decomposed into the $W(E_7)$-orbits in Section \ref{sec.E7}.
In Section \ref{sec.main}, we state the main theorem of this paper (Theorem \ref{theorem.tau_n}),
which provides an explicit $W(E_7)$-invariant, lens elliptic hypergeometric solution of the $\uptau$-function.
The proof of the main theorem is provided in Section \ref{sec.proof}.
In the Appendices, we respectively present the derivation of the sum/integral transformation for $W(E_7)$ reflection, and provide a brief overview of the multiple Bernoulli polynomials, which are used for the definitions of the lens special functions.

{\it Acknowledgements:} The authors thank Yasuhiko Yamada for stimulation discussions, many useful suggestions and encouragement.
APK also thanks Yang Shi for helpful discussions.
APK is an overseas researcher under Postdoctoral Fellowship of Japan Society for the Promotion of Science (JSPS). 
MY was supported  in part by World Premier International Research Center Initiative (WPI), MEXT, Japan,
and by the JSPS Grant-in-Aid for Scientific Research No.\ 17KK0087.

\section{Lens Theta Functions and Lens Elliptic Gamma Function}\label{sec.functions}

In this section, the definitions of the special functions are given that play a central role in this paper.
Namely, these are the lens theta function,
the lens elliptic gamma function, and the lens triple gamma function. 

In this paper we use the two complex parameters $\sigma, \tau \in \mathbb{C}$, that satisfy
\begin{align}
\label{paramsdef}
\im(\sigma)\,,\; \im(\tau)>0\,.
\end{align}

Our equations will also depend on an additional integer parameter
\begin{align}
r=1,2,\ldots
\end{align}

In this paper, continuous and discrete variables are denoted by a pair $X=(x,\smb{x})$, where $x$, and $\smb{x}$, correspond to the continuous and discrete variables respectively. 

\subsection{Lens Theta Functions}

The two lens theta functions $\theta_\tau,\theta_\sigma$, are defined by \cite{Kels:2015bda,Kels:2017toi} 
\begin{align}
\label{lthtdef}
\begin{array}{lr}
\theta_\tau(z,\smb{z};\sigma, \tau):=\EXP^{\phi_{\tau}(z,\smb{z};\sigma, \tau)}\,\theta(\EXP^{-2\pi\ii z}\EXP^{2\pi\ii\tau \smb{z}}\,|\,\EXP^{2\pi\ii\tau r}), &\\[0.2cm]
\theta_\sigma(z,\smb{z}; \sigma, \tau):=\EXP^{\phi_{\sigma}(z,\smb{z};\sigma, \tau)}\,\theta(\EXP^{2\pi\ii z}\EXP^{2\pi\ii\sigma \smb{z}}\,|\,\EXP^{2\pi\ii\sigma r}), &
\multicolumn{1}{r}{\smash{\raisebox{.5\normalbaselineskip}{\hspace{0.3cm}$z\in\mathbb{C}$,\hspace{0.1cm} $\smb{z}\in\mathbb{Z}$,}}}
\end{array}
\end{align}
where for $\q=\EXP^{2\pi\ii\tau}$, $\theta(z\,|\,\q)$ is the regular theta function
\begin{align}
\label{regtheta}
\theta(z\,|\,\q)=\left(z;\q\right)_\infty\,(\q z^{-1};\q)_\infty,
\qquad
(z;\q)_\infty=\prod_{j=0}^\infty(1-z\q^j),
\end{align}
and the normalization factors are given by
\begin{align}
\begin{split}
&\phi_{\tau}(z,\smb{z};\sigma, \tau) 
\\
&\quad 
:=\frac{\pi\ii}{6r}\big(3(r+1-2\smb{z})(2z+1)-(r^2-1)(\sigma-\tau-1)-6\smb{z}(r-\smb{z})(\tau+1)\big) , \\
&\phi_{\sigma}(z,\smb{z}; \sigma, \tau)
\\
&\quad 
: =\frac{-\pi\ii}{6r}\big(3(r-1-2\smb{z})(2z-1)-(r^2-1)(\sigma-\tau-1)+6\smb{z}(r-\smb{z})(\sigma-1)\big) .
\end{split}
\label{lthtnorm}
\end{align}

For $r=1$, the lens theta functions \eqref{lthtdef} reduce to regular theta functions
\begin{align}
\label{thetar1}
\begin{split}
\ds\left.\theta_\tau(z,\smb{z};\sigma, \tau)\,\right|_{r=1}=\theta(\EXP^{2\pi\ii z}\,|\,\EXP^{2\pi\ii \tau}), \\
\ds\left.\theta_\sigma(z,\smb{z};\sigma, \tau)\,\right|_{r=1}=\theta(\EXP^{2\pi\ii z}\,|\,\EXP^{2\pi\ii \sigma}).
\end{split}
\end{align}

Note that the theta functions $\theta_\tau$ and $\theta_\sigma$ defined in \eqref{lthtdef}, each have non-trivial dependence on both of the parameters $\sigma$, and $\tau$, through the normalization functions \eqref{lthtnorm}.

For brevity, the lens theta functions \eqref{lthtdef} will typically be denoted by 
\begin{align}
\theta_\tau(z,\smb{z}):=\theta_\tau(z,\smb{z};\sigma, \tau)\,,\qquad \theta_\sigma(z,\smb{z}):=\theta_\sigma(z,\smb{z}; \sigma, \tau),
\end{align}
with implicit dependence on the two parameters $\sigma$ and $\tau$.

Furthermore, a shorthand notation will be used throughout this paper, where $\pm$ in the argument of a function denotes that respective factors involving $+$ and $-$ should be taken as a product, {\it e.g.}
\begin{align}\label{compnot}
\theta_\sigma(\vt_j\pm \vt_k,\smb{x}_j\pm \smb{x}_k)=\theta_\sigma(\vt_j+ \vt_k,\smb{x}_j+\smb{x}_k) \, \theta_\sigma(\vt_j- \vt_k,\smb{x}_j- \smb{x}_k) .
\end{align}

\begin{prop}\label{prop.theta}

The lens theta functions satisfy 
(here $\theta_{\tau,\sigma}$ indicates that an identity holds for either $\theta_\tau$ or $\theta_\sigma$):
\medskip

(1) (periodicity)
For $k\in\mathbb{Z}$,
\begin{align}
\begin{split}
\theta_{\tau,\sigma}(z+2kr,\smb{z})=\theta_{\tau,\sigma}(z,\smb{z}),\qquad\theta_{\tau,\sigma}(z,\smb{z}+kr)=\theta_{\tau,\sigma}(z,\smb{z}).
\end{split}
\label{periodic_theta}
\end{align}

(2) (inversion)
\begin{align}
\label{thtinv}
\begin{split}
\theta_{\tau,\sigma}(-z,-\smb{z})&=-\theta_{\tau,\sigma}(z,\smb{z})\,\EXP^{-\frac{2\pi\ii}{r}(z-\smb{z})}.
\end{split}
\end{align}

(3) (recurrence relation)
For $n\in\mathbb{Z}$,
\begin{align}
\label{thtshft}
\begin{split}
\theta_\tau(z+n\tau,\smb{z}+n)&=\theta_\tau(z,\smb{z})\,\EXP^{-\frac{n\pi\ii}{r}(2z+(n-1)\tau+r-2\smb{z}-n+1)},\\
\theta_\sigma(z+n\sigma,\smb{z}-n)&=\theta_\sigma(z,\smb{z})\,\EXP^{-\frac{n\pi\ii}{r}(2z+(n-1)\sigma+r-2\smb{z}+n-1)}.
\end{split}
\end{align}

(4) (quasi-periodicity)
For $n\in\mathbb{Z}$,
\begin{align}
\label{thtshft2}
\begin{split}
\theta_\tau(z+rn\tau,\smb{z})&=\theta_\tau(z,\smb{z})\,\EXP^{-n\pi\ii(2z+\tau(rn-1)+1)},\\
\theta_\sigma(z+rn\sigma,\smb{z})&=\theta_\sigma(z,\smb{z})\,\EXP^{-n\pi\ii(2z+\sigma(rn-1)+1)}.
\end{split}
\end{align}

(5) (three-term relation)
For $\vt_i,\vt_j,\vt_k,z\in\mathbb{C}$, and $\smb{x}_i,\smb{x}_j,\smb{x}_k, \smb{z}\in \mathbb{Z}$, or  $\smb{x}_i,\smb{x}_j,\smb{x}_k, \smb{z}\in \mathbb{Z}+\frac{1}{2}$,
\begin{align}
\label{tht13term}
\begin{split}
&\EXP^{\frac{2\pi\ii}{r}(\vt_k-\smb{x}_k)}\theta_{\tau,\sigma}(\vt_j\pm \vt_k,\smb{x}_j\pm \smb{x}_k)\,\theta_{\tau,\sigma}(\vt_i\pm z,\smb{x}_i\pm \smb{z}) \\
&+\EXP^{\frac{2\pi\ii}{r}(\vt_i-\smb{x}_i)}\theta_{\tau,\sigma}(\vt_k\pm \vt_i,\smb{x}_k\pm \smb{x}_i)\,\theta_{\tau,\sigma}(\vt_j\pm z,\smb{x}_j\pm \smb{z}) \\
&+\EXP^{\frac{2\pi\ii}{r}(\vt_j-\smb{x}_j)}\theta_{\tau,\sigma}(\vt_i\pm \vt_j,\smb{x}_i\pm \smb{x}_j)\,\theta_{\tau,\sigma}(\vt_k\pm z,\smb{x}_k\pm \smb{z})=0.
\end{split}
\end{align}

\end{prop}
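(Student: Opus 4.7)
The plan is to derive each of the five identities by direct computation from the definition \eqref{lthtdef}, reducing them to well-known properties of the standard Jacobi theta function $\theta(z\,|\,\q)$ plus an elementary check that the normalisation $\phi_{\tau,\sigma}$ contributes the correct compensating exponential factor. Since \eqref{lthtdef} writes each lens theta as a Gaussian-type prefactor times a standard theta, every identity splits as ``regular-theta identity'' plus ``bookkeeping for $\phi$''. The only non-elementary analytical input, and used only in part (5), is the classical Weierstrass/Riemann three-term theta identity
\begin{equation*}
\sum_{\mathrm{cyc}(u,v,w)} u\,\theta(wu^{\pm 1}\,|\,\q)\,\theta(v Z^{\pm 1}\,|\,\q) = 0,
\end{equation*}
valid for any $u,v,w,Z\in\mathbb{C}^\times$.

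For (1), (3) and (4), the shifts of $z$ (by $2kr$, $n\tau$, or $rn\tau$, and analogously for $\sigma$) and of $\smb{z}$ (by $kr$ or $n$) transform the argument of the standard theta in \eqref{lthtdef} by an integer power of the nome $\q=\EXP^{2\pi\ii\tau r}$ (resp.\ $\EXP^{2\pi\ii\sigma r}$) times a power of the argument. Iterating the basic quasi-periodicity $\theta(\q z\,|\,\q)=-z^{-1}\theta(z\,|\,\q)$ produces an explicit Gaussian factor. I would then expand $\phi_{\tau,\sigma}(\text{shifted})-\phi_{\tau,\sigma}(z,\smb{z})$ as a polynomial in the shift and verify termwise that its exponential equals the claimed RHS modulo $\EXP^{2\pi\ii\mathbb{Z}}$. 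The quadratic piece $-\smb{z}(r-\smb{z})(\tau+1)$ in $\phi_\tau$ is precisely what absorbs the Gaussian arising from $\smb{z}\mapsto\smb{z}+kr$, and the bilinear piece $3(r+1-2\smb{z})(2z+1)$ controls the coupled $(z,\smb{z})$-shift in (3). The inversion identity (2) is a one-line computation using $\theta(z^{-1}\,|\,\q)=-z^{-1}\theta(z\,|\,\q)$: under $(z,\smb{z})\to(-z,-\smb{z})$ the standard-theta argument is inverted, producing a factor $-\EXP^{2\pi\ii z}\EXP^{-2\pi\ii\tau\smb{z}}$ (for $\theta_\tau$), which combines with $\phi_\tau(-z,-\smb{z})-\phi_\tau(z,\smb{z})$ to give exactly the stated $-\theta_\tau(z,\smb{z})\,\EXP^{-2\pi\ii(z-\smb{z})/r}$.

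The substantive item is the three-term relation (5). The key algebraic observation is that $\phi_{\tau,\sigma}(z,\smb{z})$ is a polynomial of total degree two in $(z,\smb{z})$ with no pure $z^2$ term, so the even combination
\begin{equation*}
\phi_{\tau,\sigma}(a+b,A+B)+\phi_{\tau,\sigma}(a-b,A-B) = F(a,A)+G(b,B)+\text{const}
\end{equation*}
splits cleanly for explicit polynomials $F,G$ in two variables. Applied to each of the three cyclic summands of \eqref{tht13term}, the four-term $\phi$-sum of that summand rewrites as the cyclically invariant quantity $F(\vt_i,\smb{x}_i)+F(\vt_j,\smb{x}_j)+F(\vt_k,\smb{x}_k)+G(z,\smb{z})+\text{const}$ plus a per-term remainder $G(\vt_\alpha,\smb{x}_\alpha)-F(\vt_\alpha,\smb{x}_\alpha)$, where $\alpha$ is the ``doubled'' index of that summand ($\alpha=k$ in term 1, $\alpha=i$ in term 2, $\alpha=j$ in term 3). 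A direct computation of $F$ and $G$ from \eqref{lthtnorm} identifies this per-term remainder with $\EXP^{\frac{2\pi\ii}{r}(\vt_\alpha-\smb{x}_\alpha)}$ up to a factor absorbed into the common prefactor, at which point the classical three-term identity above closes the argument. The two parity cases $\smb{x}_i,\smb{x}_j,\smb{x}_k,\smb{z}\in\mathbb{Z}$ versus $\in\mathbb{Z}+\tfrac12$ are handled uniformly, because a simultaneous half-integer offset of all four $\smb{x}$'s drops out of every pairwise sum and difference appearing in the theta arguments.

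The main obstacle is the $\phi$-bookkeeping in (5): twelve separate $\phi$-contributions (four thetas per term, three terms) must collapse to the three claimed cyclic prefactors, and while each manipulation is elementary the algebra is voluminous. Isolating the even-decomposition identity for $\phi(a+b,A+B)+\phi(a-b,A-B)$ as a short preliminary lemma, and verifying (1), (3), (4) in that order before attempting (5), would streamline the proof; these same building blocks will be reused later when checking the bilinear Hirota equations.
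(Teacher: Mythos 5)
Your proposal is correct and follows essentially the same route as the paper, whose proof is a single sentence stating that all five identities follow from the definitions \eqref{lthtdef} together with the corresponding identities of the regular theta function $\theta(z\,|\,\q)$; your write-up simply makes explicit the quasi-periodicity/inversion bookkeeping for $\theta(z\,|\,\q)$ and the compensating contributions of the normalisation factors $\phi_{\tau,\sigma}$, including the reduction of \eqref{tht13term} to the classical Weierstrass three-term identity. No gaps.
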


\begin{proof}
These identities simply follow from the definitions \eqref{lthtdef}, and
similar identities that hold for the regular theta function $\theta(z|\q)$, defined in \eqref{regtheta}.
\end{proof}

\subsection{Lens Elliptic Gamma Function}

The lens elliptic gamma function \cite{Benini:2011nc,Razamat:2013opa,Kels:2015bda,GahramanovKels} is defined here by
\begin{align} 
\label{legf2}
\Gamma(z,\smb{z};\sigma,\tau):=\EXP^{\phi_e(z, \smb{z};\sigma,\tau)}\gamma_\sigma(z,\smb{z};\sigma,\tau)\gamma_\tau(z,\smb{z};\sigma,\tau),\quad z\in\mathbb{C}, \; \smb{z}\in \mathbb{Z},
\end{align}
where $\gamma_\sigma$ and $\gamma_\tau$, are the following infinite products
\begin{align}
\label{littlegammap}
\gamma_\sigma(z,\smb{z};\sigma,\tau)&:=\prod_{j,k=0}^\infty\frac{1-\EXP^{-2\pi\ii z}\EXP^{-2\pi\ii \sigma \smb{z}}\EXP^{2\pi\ii(\sigma+\tau)(j+1)}\EXP^{2\pi\ii \sigma r(k+1)}}{1-\EXP^{2\pi\ii z}\EXP^{2\pi\ii\sigma \smb{z}}\EXP^{2\pi\ii(\sigma+\tau)j}\EXP^{2\pi\ii\sigma rk}}, \\
\label{littlegammaq}
\gamma_\tau(z,\smb{z};\sigma,\tau)&:=\prod_{j,k=0}^\infty\frac{1-\EXP^{-2\pi\ii z}\EXP^{-2\pi\ii\tau(r-\smb{z})}\EXP^{2\pi\ii(\sigma+\tau)(j+1)}\EXP^{2\pi\ii\tau r(k+1)}}{1-\EXP^{2\pi\ii z}\EXP^{2\pi\ii\tau(r-\smb{z})}\EXP^{2\pi\ii(\sigma+\tau)j}\EXP^{2\pi\ii\tau rk}},
\end{align}
and the normalisation function $\phi_e(z,\smb{z};\sigma,\tau)$  is given by \cite{Kels:2015bda,GahramanovKels,Kels:2017toi}
\begin{align}
\label{ellnormx}
\phi_e(z, \smb{z};\sigma,\tau)&:=\pi \ii \frac{\smb{z} (\smb{z}-r) (6 z-3\sigma-3\tau+(1-\sigma+\tau)  (r-2 \smb{z}))}{6 r}.
\end{align}

Note that the functions \eqref{littlegammap}, \eqref{littlegammaq}, are symmetric with respect to the following shifts
\begin{align}
\label{litgamrels}
\begin{split}
\gamma_\sigma(z+k\sigma,\smb{z}-k;\sigma,\tau)&=\gamma_\sigma(z,\smb{z};\sigma,\tau), \\
 \gamma_\tau(z+k\tau,\smb{z}+k;\sigma,\tau)&=\gamma_\tau(z,\smb{z};\sigma,\tau),
 \end{split}
\end{align}
for $k\in\mathbb{Z}$.

The normalisation function \eqref{ellnormx}, has a useful factorisation in terms of the multiple Bernoulli polynomial $B_{3,3}$ \eqref{bernoulli}, as \cite{GahramanovKels,Kels:2017toi}
\begin{align}
\label{ellnorm}
\begin{split}
\ds\phi_e(z,\smb{z};\sigma,\tau)&=\ds 2\pi\ii\left(R_2\left(z,0; \sigma-\frac{1}{2},\tau+\frac{1}{2}\right)-R_2\left(z,\smb{z};\sigma-\frac{1}{2},\tau+\frac{1}{2}\right)\right)\\[0.1cm]
&=\ds 2\pi\ii\left(R_2(z,0;\sigma,\tau)+R_2\left(0,\smb{z}; \frac{1}{2},-\frac{1}{2}\right)-R_2(z,\smb{z};\sigma,\tau)\right),
\end{split}
\end{align}
where 
\begin{align}
\label{r2def}
\begin{split}
\ds R_2(z,\smb{z};\sigma,\tau):&=\ds R(z+\smb{z}\sigma;r\sigma,\sigma+\tau)+R(z+(r-\smb{z})\tau;r\tau,\sigma+\tau),
\end{split}
\end{align}
and
\begin{align}
R(z;\sigma,\tau):&=\frac{B_{3,3}(z;\sigma,\tau,-1)+B_{3,3}(z-1;\sigma,\tau,-1)}{12}.
\end{align}

For $r=1$, the lens elliptic gamma function \eqref{legf2} reduces to the regular elliptic gamma function \cite{Ruijsenaars:1997:FOA}, which is denoted here by $\Gamma_{1}(z;\sigma,\tau)$,
\begin{align}
\label{egf2}
\ds\left.\Gamma(z,\smb{z};\sigma,\tau)\,\right|_{r=1}=\Gamma_{1}(z;\sigma,\tau)=
\ds\prod_{j,k=0}^\infty\frac{1-\EXP^{2\pi\ii (-z+(j+1) \sigma+(k+1)\tau)}}{1-\EXP^{2\pi\ii (z+j\sigma+k \tau)}}.
\end{align}
In terms of the regular elliptic gamma function \eqref{egf2}, the functions \eqref{littlegammap}, and \eqref{littlegammaq}, are simply
\begin{align}
\label{legfprod}
\begin{split}
\gamma_\sigma(z,\smb{z};\sigma,\tau)&=\Gamma_{1}(z+\sigma \smb{z};r\sigma,\sigma+\tau), \\
\gamma_\tau(z,\smb{z};\sigma,\tau)&=\Gamma_{1}(z+\tau(r-\smb{z});r\tau,\sigma+\tau).
\end{split}
\end{align}

Similarly to the lens theta functions, the lens elliptic gamma function \eqref{legf2} will typically be denoted as 
\begin{align}
\Gamma(z,\smb{z}):=\Gamma(z,\smb{z};\sigma, \tau),
\end{align}
with implicit dependence on the two parameters $\sigma$, and $\tau$.

\begin{prop}

The lens elliptic gamma function \eqref{legf2} satisfies
\medskip

(1) (periodicity)
For $k\in\mathbb{Z}$,
\begin{align}
\Gamma(z+2kr,\smb{z})=\Gamma(z,\smb{z}),\quad\Gamma(z,\smb{z}+kr)=\Gamma(z,\smb{z}).
\label{periodic_gamma}
\end{align}

(2) (inversion)
\begin{align}
&\ds\Gamma((\sigma+\tau)-z,-\smb{z})\,\Gamma(z,\smb{z})=1.\label{Gamma_invert} 
\end{align}

(3) (recurrence relation)
For $n=0,1\ldots$,
\begin{align}
&\ds\Gamma(z+n\sigma,\smb{z}-n)=\Gamma(z,\smb{z})\prod_{j=0}^{n-1}\theta_\tau(z+j\sigma,\smb{z}-j),  
\label{Gamma_ratio_1}\\
&\ds\Gamma(z+n\tau,\smb{z}+n)=\Gamma(z,\smb{z})\prod_{j=0}^{n-1}\theta_\sigma(z+j\tau,\smb{z}+j). 
 \label{Gamma_ratio_2}
\end{align}

\end{prop}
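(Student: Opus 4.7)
The plan is to prove all three parts uniformly by using the factorisation $\Gamma(z,\smb{z}) = e^{\phi_e(z,\smb{z};\sigma,\tau)} \gamma_\sigma(z,\smb{z};\sigma,\tau) \gamma_\tau(z,\smb{z};\sigma,\tau)$ together with the rewriting \eqref{legfprod} in terms of the regular elliptic gamma $\Gamma_{1}$. Each identity then splits into (a) a shift property of the infinite products $\gamma_\sigma, \gamma_\tau$ inherited from known shifts of $\Gamma_{1}$, and (b) a matching contribution from the exponential prefactor $e^{\phi_e}$, which reduces to a polynomial identity in $z,\smb{z},\sigma,\tau$.

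For part (1), the $z$-periodicity is immediate: the dependence of $\gamma_\sigma,\gamma_\tau$ on $z$ enters only through $e^{\pm 2\pi \ii z}$, which already has period $1$, while the shift $z \mapsto z+2kr$ changes $\phi_e$ by
\[
\pi\ii\,\frac{\smb{z}(\smb{z}-r)\cdot 12kr}{6r}=2\pi\ii\, k\,\smb{z}(\smb{z}-r) \in 2\pi\ii\mathbb{Z},
\]
so $e^{\phi_e}$ is unchanged. For the $\smb{z}$-periodicity, use \eqref{legfprod}: shifting $\smb{z}\mapsto \smb{z}+r$ sends $\gamma_\sigma$ to $\Gamma_{1}(z+\sigma\smb{z}+r\sigma;\,r\sigma,\sigma+\tau)$, and one reindexing of the $\Gamma_{1}$-product (i.e.\ applying the period shift by $r\sigma$) yields a theta factor; the symmetric operation on $\gamma_\tau$ yields the reciprocal theta factor, and combined with the change in $\phi_e$ (again a polynomial in $\mathbb{Z}$ modulo $2\pi\ii$) everything cancels.

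For part (2), the inversion, apply \eqref{legfprod} to both $\Gamma(z,\smb{z})$ and $\Gamma((\sigma+\tau)-z,-\smb{z})$; the arguments combine to sums of the form $(z+\sigma\smb{z})+((\sigma+\tau)-z-\sigma\smb{z})=\sigma+\tau$, respectively $(z+\tau(r-\smb{z}))+((\sigma+\tau)-z+\tau\smb{z}-\tau r + \tau r)= r\tau+(\sigma+\tau)$, so that the inversion $\Gamma_{1}(\tau_1+\tau_2-w;\tau_1,\tau_2)\,\Gamma_{1}(w;\tau_1,\tau_2)=1$ applies with $(\tau_1,\tau_2)=(r\sigma,\sigma+\tau)$ and $(r\tau,\sigma+\tau)$ respectively (absorbing a few $\Gamma_{1}$ period shifts that contribute theta factors). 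The remaining check is the polynomial identity $\phi_e((\sigma+\tau)-z,-\smb{z})+\phi_e(z,\smb{z})\in 2\pi\ii\mathbb{Z}$ modulo those absorbed theta factors; this follows by direct expansion using \eqref{ellnormx}, and is cleanest via the Bernoulli-polynomial factorisation \eqref{ellnorm}, where the reflection symmetry $R_2(z,\smb{z};\sigma,\tau)+R_2(\sigma+\tau-z,-\smb{z};\sigma,\tau)$ is a standard $B_{3,3}$ identity.

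For part (3), it suffices by iteration to handle $n=1$, and to focus on the first identity; the second is obtained by the involution $\sigma\leftrightarrow\tau$, $\smb{z}\leftrightarrow r-\smb{z}$. By \eqref{litgamrels}, $\gamma_\sigma(z+\sigma,\smb{z}-1)=\gamma_\sigma(z,\smb{z})$, so this factor drops out. Writing $\gamma_\tau$ via \eqref{legfprod} and applying $\Gamma_{1}(w+(\sigma+\tau);r\tau,\sigma+\tau)=\theta(e^{2\pi\ii w}\mid e^{2\pi\ii r\tau})\Gamma_{1}(w;r\tau,\sigma+\tau)$ with $w=z+\tau(r-\smb{z})$ gives
\[
\frac{\gamma_\tau(z+\sigma,\smb{z}-1)}{\gamma_\tau(z,\smb{z})}=\theta\bigl(e^{2\pi\ii(z+\tau(r-\smb{z}))}\,\big|\, e^{2\pi\ii r\tau}\bigr).
\]
Comparing with \eqref{lthtdef}, the ratio $\Gamma(z+\sigma,\smb{z}-1)/\Gamma(z,\smb{z})$ equals $\theta_\tau(z,\smb{z})$ provided the exponential prefactor $\exp[\phi_e(z+\sigma,\smb{z}-1)-\phi_e(z,\smb{z})]$ combined with the $\tau(r-\smb{z})$ in the theta argument reproduces the normalisation $\phi_\tau(z,\smb{z};\sigma,\tau)$ from \eqref{lthtnorm}. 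This is the core calculation; it reduces to verifying a single polynomial equality in $z,\smb{z},\sigma,\tau$ of total degree $\le 2$, which is routine but the main place where bookkeeping matters. Iterating this $n$ times produces the claimed telescoping product $\prod_{j=0}^{n-1}\theta_\tau(z+j\sigma,\smb{z}-j)$.

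The main obstacle, present in all three parts but most acute in (3), is assembling the quadratic pieces coming from $\phi_e$, from the regular $\Gamma_{1}$-period shift, and from the normalisation $\phi_\tau$ (or $\phi_\sigma$) so that they cancel precisely. Each step is routine individually; the risk is purely algebraic bookkeeping, which is best managed by evaluating at $z=0,\ \smb{z}=0$ to fix additive constants and then matching $z$- and $\smb{z}$-dependent coefficients term by term.
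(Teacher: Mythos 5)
Your overall strategy is exactly the paper's: the paper proves this proposition by ``direct computation'' using the factorisation $\Gamma=e^{\phi_e}\gamma_\sigma\gamma_\tau$, the shift invariances \eqref{litgamrels}, the quasi-periodicity of $\Gamma_1$ in \eqref{legfprod}, and the key fact (stated explicitly in the paper's proof) that $\phi_e(z+\sigma,\smb{z}-1;\sigma,\tau)-\phi_e(z,\smb{z};\sigma,\tau)=\phi_\tau(z,\smb{z};\sigma,\tau)$ and its $\tau$-analogue --- i.e.\ that the normalisations \eqref{lthtnorm} were engineered to make \eqref{Gamma_ratio_1}--\eqref{Gamma_ratio_2} hold. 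Your treatment of (1), (2) and of the first recurrence in (3) is correct in outline and matches the paper.

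One step, however, would fail as literally stated: deriving \eqref{Gamma_ratio_2} from \eqref{Gamma_ratio_1} ``by the involution $\sigma\leftrightarrow\tau$, $\smb{z}\leftrightarrow r-\smb{z}$.'' While the infinite products do satisfy $\gamma_\sigma(z,r-\smb{z};\tau,\sigma)=\gamma_\tau(z,\smb{z};\sigma,\tau)$, the normalisation \eqref{ellnormx} is \emph{not} invariant under this substitution: a short computation gives
\begin{align*}
\phi_e(z,r-\smb{z};\tau,\sigma)-\phi_e(z,\smb{z};\sigma,\tau)=-\frac{\pi\ii\,\smb{z}(\smb{z}-r)(r-2\smb{z})}{3r},
\end{align*}
which for instance equals $2\pi\ii/9$ at $r=3$, $\smb{z}=1$, so $e^{\phi_e}$ picks up a nontrivial root of unity; the normalisations $\phi_\sigma,\phi_\tau$ in \eqref{lthtnorm} are likewise not exchanged by this involution. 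The fix is trivial --- prove \eqref{Gamma_ratio_2} directly by the same one-line argument ($\gamma_\tau(z+\tau,\smb{z}+1)=\gamma_\tau(z,\smb{z})$ by \eqref{litgamrels}, the shift of $\gamma_\sigma$ by the period $\sigma+\tau$ producing the infinite-product part of $\theta_\sigma$, and the second $\phi_e$-difference identity) --- but the symmetry shortcut itself is not available here.
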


\begin{proof}
These identities can be verified by direct computation.  A proof of the $r$-periodicity in \eqref{periodic_gamma}
previously appeared in Appendix C of \cite{GahramanovKels}. For \eqref{Gamma_ratio_1}, \eqref{Gamma_ratio_2}, 
the normalisation of the lens theta functions \eqref{lthtnorm} are in fact chosen to satisfy
\begin{align}
\begin{split}
\phi_e(z+\sigma, \smb{z}-1;\sigma,\tau) -\phi_e(z, \smb{z};\sigma,\tau) =\phi_{\tau}(z, \smb{z};\sigma,\tau) , \\
\phi_e(z+\tau, \smb{z}+1;\sigma,\tau) -\phi_e(z, \smb{z};\sigma,\tau) =\phi_{\sigma}(z, \smb{z};\sigma,\tau).
\end{split}
\end{align}
Due to the relations \eqref{litgamrels}, only the factor of $\gamma_\tau$ on the left hand side of \eqref{Gamma_ratio_1}, contributes to the infinite product part of the theta function $\theta_\tau$, 
while only the factor of $\gamma_\sigma$ on the left hand side of \eqref{Gamma_ratio_2}, contributes to the infinite product part of the theta function $\theta_\sigma$.
\end{proof}

\begin{rem}
Although $\gamma_{\sigma,\tau}(z+k,\smb{z};\sigma,\tau)=\gamma_{\sigma,\tau}(z,\smb{z};\sigma,\tau)$ for any $k\in\mathbb{Z}$, 
the $2r$-periodicity of $\Gamma(z,\smb{z}; \sigma, \tau)$ in \eqref{periodic_gamma} comes from the normalisation factor  in \eqref{ellnorm}.
\end{rem}

\subsection{Lens Triple Gamma Functions}

Here we consider two parameters $\im(\omega), \im(\mu)>0$, in addition to the parameters \eqref{paramsdef}.

The lens triple gamma functions $\Gt_\sigma$ and $\Gt_{\sigma \tau}$, are defined here by
\begin{align}
\label{Gam3def}
\begin{array}{lr}
\Gt_\sigma(z,\smb{z};\sigma,\tau,\omega):=\EXP^{\pt(z,\smb{z};\sigma,\tau,\omega)}\gt_\sigma(z,\smb{z};\sigma,\tau,\omega), & \\[0.2cm]
\Gt_{\sigma\tau}(z,\smb{z};\sigma,\tau,\omega,\mu):=\Gt_\sigma(z,\smb{z};\sigma,\tau,\omega)\gt_\tau(z,\smb{z};\sigma,\tau,\mu), &
\multicolumn{1}{r}{\smash{\raisebox{.5\normalbaselineskip}{\hspace{0.3cm}$z\in\mathbb{C}$,\hspace{0.1cm} $\smb{z}\in\mathbb{Z}$,}}}
\end{array}
\end{align}
where
\begin{align}
\label{lgam3def}
\begin{split}
\gt_\sigma(z,\smb{z};\sigma,\tau,\omega)&:=g_\sigma(z,\smb{z};\sigma,\tau,\omega)\, g_\sigma(\sigma+\tau+\omega-z,r+1-\smb{z};\sigma,\tau,\omega), \\
\gt_\tau(z,\smb{z};\sigma,\tau,\omega)&:=g_\tau(z,\smb{z};\sigma,\tau,\omega)\, g_\tau(\sigma+\tau+\omega-z,r-\smb{z};\sigma,\tau,\omega),
\end{split}
\end{align}
and
\begin{align}
\begin{split}
g_\sigma(z,\smb{z};\sigma,\tau,\omega):=&\prod_{k_1,k_2,k_3=0}^\infty\left(1-\EXP^{2\pi\ii x}\EXP^{2\pi\ii\sigma \tilz}\EXP^{2\pi\ii\sigma rk_1}\EXP^{2\pi\ii(\sigma+\tau)k_2}\EXP^{2\pi\ii(\sigma+\omega)k_3}\right), \\
g_\tau(z,\smb{z};\sigma,\tau,\omega):=&\prod_{k_1,k_2,k_3=0}^\infty\left(1-\EXP^{2\pi\ii x}\EXP^{2\pi\ii\sigma(r-\tilz)}\EXP^{2\pi\ii\tau rk_1}\EXP^{2\pi\ii(\sigma+\tau)k_2}\EXP^{2\pi\ii\omega k_3}\right).
\end{split}
\end{align}

The normalisation function  $\pt(z,\smb{z};\sigma,\tau,\omega)$ is defined by ({\it c.f.} the expression \eqref{ellnorm} for $\phi_e$ in terms of $B_{3,3}$)
\begin{align}
\label{Gam3norm}
\begin{split}
&\ds\pt(z,\smb{z};\sigma,\tau,\omega)\\
&\quad:=\ds 2\pi\ii\left(T_2\left(z,0; \sigma-\frac{1}{2},\tau+\frac{1}{2},\omega\right)-S_2\left(z,\smb{z};\sigma-\frac{1}{2},\tau+\frac{1}{2},\omega\right)\right),
\end{split}
\end{align}
where
\begin{align}
\begin{split}
&\ds S_2(z,\smb{z};\sigma,\tau,\omega)\\
&\quad :=\ds S(z+\smb{z}\sigma;r\sigma,\sigma+\tau,\omega+\sigma)+S(z+(r-\smb{z})\tau;r\tau,\sigma+\tau,\omega-\tau), \\
&T_2(z,\smb{z};\sigma,\tau,\omega)\\
&\quad:=\ds S(z+\smb{z}\sigma;r\sigma,\sigma+\tau,\omega)+S(z+(r-\smb{z})\tau;r\tau,\sigma+\tau,\omega), 
\end{split}
\end{align}
and
\begin{align}
S(z;\sigma,\tau,\omega):=\frac{B_{4,4}(z;\sigma,\tau,-1,\omega)+B_{4,4}(z-1;\sigma,\tau,-1,\omega)}{48}.
\end{align}
In the last equation, $B_{4,4}(z;\omega_1,\omega_2,\omega_3,\omega_4)$ is the multiple Bernoulli polynomial \eqref{bernoulli4}, defined in Appendix \ref{app.Bernoulli}.

\begin{prop}\label{lgam3prop}
The functions $\gamma_\sigma$, $\gamma_\tau$, defined in \eqref{lgam3def}, satisfy 

\medskip

(1) (shift symmetry)
\begin{align} \label{shift_lgam3}
\begin{split}
\gt_\sigma(z+\sigma,\smb{z}-1;\sigma,\tau,\omega)=\gt_\sigma(z,\smb{z};\sigma,\tau,\omega) , \\
\gt_\tau(z+\tau,\smb{z}+1;\sigma,\tau,\omega)=\gt_\tau(z,\smb{z};\sigma,\tau,\omega) .
\end{split}
\end{align}

(2) (inversion)
\begin{align} \label{invert_lgam3}
\begin{split}
\gt_\sigma(\sigma+\tau+\omega-z,r+1-\smb{z};\sigma,\tau,\omega)&=\gt_\sigma(z,\smb{z};\sigma,\tau,\omega), \\
\gt_\tau(\sigma+\tau+\omega-z,r-\smb{z};\sigma,\tau,\omega)&=\gt_\tau(z,\smb{z};\sigma,\tau,\omega).
\end{split}
\end{align}

(3) (recurrence relation)
\begin{align} \label{recurr_lgam3}
\begin{split}
\gt_\sigma(\omega+z,1+\smb{z};\sigma,\tau,\omega)& =\gamma_\sigma(z,\smb{z};\sigma,\tau)\, \gt_\sigma(z,\smb{z};\sigma,\tau,\omega), \\
\gt_\tau(\omega+z,\smb{z};\sigma,\tau,\omega)& =\gamma_\tau(z,\smb{z};\sigma,\tau)\, \gt_\tau(z,\smb{z};\sigma,\tau,\omega) .
\end{split}
\end{align}
\end{prop}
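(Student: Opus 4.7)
The plan is to prove all three properties by direct manipulation of the three-fold infinite products $g_\sigma$ and $g_\tau$ appearing in the definitions \eqref{lgam3def}. None of the normalisation function $\pt$ or the Bernoulli polynomials enters at this stage, since the proposition concerns only the little-$g$ products that make up $\gt_\sigma, \gt_\tau$.

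Properties (1) and (2) come essentially for free. In $g_\sigma(z,\smb{z};\sigma,\tau,\omega)$ the variables $z,\smb{z}$ appear only through the combination $\EXP^{2\pi\ii z}\EXP^{2\pi\ii\sigma\smb{z}}$, which is manifestly invariant under $(z,\smb{z})\mapsto(z+\sigma,\smb{z}-1)$; the complementary factor $g_\sigma(\sigma+\tau+\omega-z,r+1-\smb{z})$ is invariant for the same reason, giving the $\gt_\sigma$ half of (1). The $\gt_\tau$ half is identical, using that $g_\tau$ depends on its first two arguments through a combination invariant under $(z,\smb{z})\mapsto(z+\tau,\smb{z}+1)$. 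For the inversion (2), the substitution $(z,\smb{z})\mapsto(\sigma+\tau+\omega-z,r+1-\smb{z})$ simply exchanges the two factors in the definition of $\gt_\sigma$; the analogous substitution with $r-\smb{z}$ exchanges the two factors of $\gt_\tau$. So both statements are, at their core, definitional checks.

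The real content is the recurrence (3), and the plan is to split the ratio $\gt_\sigma(\omega+z,1+\smb{z})/\gt_\sigma(z,\smb{z})$ into the two pieces dictated by the structure of \eqref{lgam3def} and evaluate each by reindexing the triple product. The $(\omega,1)$-shift in the first $g_\sigma$ multiplies the defining exponential by $\EXP^{2\pi\ii(\sigma+\omega)}$, which is absorbed by $k_3\mapsto k_3+1$, so the quotient reduces to the reciprocal of the $k_3=0$ slice of the three-fold product, namely the reciprocal of the denominator of $\gamma_\sigma(z,\smb{z};\sigma,\tau)$ in \eqref{littlegammap}. The complementary ratio $g_\sigma(\sigma+\tau-z,r-\smb{z})/g_\sigma(\sigma+\tau+\omega-z,r+1-\smb{z})$ shrinks the range $k_3\ge 0$ to $k_3\ge 1$ in its denominator, leaving precisely the $k_3=0$ slice, which is the numerator of $\gamma_\sigma$. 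Multiplying the two pieces reassembles $\gamma_\sigma$ exactly. The $\gt_\tau$ recurrence is proved in the same way, with $\omega$ entering $g_\tau$ through the factor $\EXP^{2\pi\ii\omega k_3}$, so the telescoping isolates the two-fold product defining $\gamma_\tau$ in \eqref{littlegammaq}.

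The only delicate step is the bookkeeping of the three indices $(k_1,k_2,k_3)$ in the recurrence: the two little-$g$ factors of $\gt_\sigma$ conspire to provide complementary pieces of $\gamma_\sigma$---one contributing the reciprocal of its denominator, the other its numerator---and one has to check carefully that after reindexing no stray factors remain and the residual two-fold product matches \eqref{littlegammap} precisely. Once this is verified, nothing nontrivial is left.
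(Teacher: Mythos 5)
Your proposal is correct and follows the same route as the paper, which simply states that the relations ``essentially follow from the definitions'' in \eqref{lgam3def}; you have filled in exactly the definitional checks and the $k_3$-reindexing/telescoping that this entails, and your identification of the two residual $k_3=0$ slices with the numerator and the reciprocal of the denominator of $\gamma_\sigma$ (resp.\ $\gamma_\tau$) in \eqref{littlegammap}--\eqref{littlegammaq} is accurate. (Note only that this requires reading the evident typos in the definition of $g_\sigma,g_\tau$ --- $\EXP^{2\pi\ii x}$ for $\EXP^{2\pi\ii z}$ and $\EXP^{2\pi\ii\sigma(r-\smb{z})}$ for $\EXP^{2\pi\ii\tau(r-\smb{z})}$ in $g_\tau$ --- in the way your argument implicitly does.)
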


\begin{proof}
These relations essentially follow from the definitions given in \eqref{lgam3def}.
\end{proof}

\begin{cor}
The lens triple gamma functions \eqref{Gam3def} satisfy 
\medskip

(1) (inversion)
\begin{align}
\label{invert_Gamma3}
\Gt_\sigma(\sigma+\tau+\omega-z,r+1-\smb{z};\sigma,\tau,\omega)=\Gt_\sigma(z,\smb{z};\sigma,\tau,\omega) ,
\end{align}

(2) (recurrence relation)
\begin{align}
\label{shift_Gamma3}
\begin{split}
\Gt_\sigma(\omega+z,1+\smb{z};\sigma,\tau,\omega)=\EXP^{\phi_e(z,\smb{z};\sigma,\tau)}\gamma_\sigma(z,\smb{z};\sigma,\tau) \,\Gt_\sigma(z,\smb{z};\sigma,\tau,\omega), \\
\Gt_{\sigma\tau}(\sigma+\tau-z,r-\smb{z};\sigma,\tau,\omega,\mu)=\Gamma(z,\smb{z};\sigma,\tau)\,\Gt_{\sigma\tau}(z,\smb{z};\sigma,\tau,\omega,\mu),
\end{split}
\end{align}
where $\phi_e$, and $\gamma_\sigma$ are defined respectively in \eqref{ellnorm}, and \eqref{littlegammap}, and $\Gamma(z,\smb{z};\sigma,\tau)$ is the lens elliptic gamma function \eqref{legf2}.
\end{cor}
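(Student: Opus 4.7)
The plan is to deduce both parts from the three infinite-product identities for $\gt_\sigma$ and $\gt_\tau$ in Proposition \ref{lgam3prop}, together with the definitions $\Gt_\sigma=\EXP^{\pt}\gt_\sigma$ and $\Gt_{\sigma\tau}=\Gt_\sigma\,\gt_\tau$ from \eqref{Gam3def}. Since the infinite-product parts are already handled, what is left in each case is an identity for the scalar normalisation $\pt(z,\smb{z};\sigma,\tau,\omega)$ defined in \eqref{Gam3norm}. That Bernoulli-polynomial bookkeeping is the step I expect to be the main obstacle.

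For (1), I would apply \eqref{invert_lgam3} to $\gt_\sigma$, reducing the claim to $\pt(\sigma+\tau+\omega-z,r+1-\smb{z};\sigma,\tau,\omega)=\pt(z,\smb{z};\sigma,\tau,\omega)$. Expanding $\pt$ via $T_2$ and $S_2$, the substitution $\smb{z}\mapsto r+1-\smb{z}$ swaps the two summands in each of $S_2$ and $T_2$, and the reflections in the $B_{4,4}$ arguments that remain then match using the standard reflection symmetry of the multiple Bernoulli polynomial recorded in Appendix \ref{app.Bernoulli}. The identity for $\pt$ follows term by term.

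For the first recurrence in (2), I would combine \eqref{recurr_lgam3} for $\gt_\sigma$ with the scalar identity
\begin{align*}
\pt(\omega+z,1+\smb{z};\sigma,\tau,\omega)-\pt(z,\smb{z};\sigma,\tau,\omega)=\phi_e(z,\smb{z};\sigma,\tau).
\end{align*}
This is a direct computation from \eqref{Gam3norm}: inside $S_2$, the combined shift of $z$ by $\omega$ and $\smb{z}$ by $1$ translates each of the two $S$-arguments by exactly the third parameter of that $S$, so the standard difference rule $B_{4,4}(\cdot+\omega_k)-B_{4,4}(\cdot)=4\,B_{3,3}(\cdot)$ collapses the change in $S_2$ to $R_2(z,\smb{z};\sigma-\tfrac{1}{2},\tau+\tfrac{1}{2})$. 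The same procedure applied to $T_2$ (with $\smb{z}=0$) yields $R_2(z,0;\sigma-\tfrac{1}{2},\tau+\tfrac{1}{2})$. Taking the difference and invoking the first form of $\phi_e$ in \eqref{ellnorm} closes the calculation.

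For the second recurrence, no further Bernoulli computation is needed. Using the inversion \eqref{invert_Gamma3} proved in (1), substituted at $(z+\omega,\smb{z}+1)$, rewrites $\Gt_\sigma(\sigma+\tau-z,r-\smb{z};\sigma,\tau,\omega)$ as $\Gt_\sigma(z+\omega,\smb{z}+1;\sigma,\tau,\omega)$; and the $\omega\to\mu$ case of \eqref{invert_lgam3}, substituted at $(z+\mu,\smb{z})$, rewrites $\gt_\tau(\sigma+\tau-z,r-\smb{z};\sigma,\tau,\mu)$ as $\gt_\tau(z+\mu,\smb{z};\sigma,\tau,\mu)$. Applying the first half of (2) to the former and the $\omega\to\mu$ case of \eqref{recurr_lgam3} to the latter produces the prefactor $\EXP^{\phi_e(z,\smb{z};\sigma,\tau)}\gamma_\sigma(z,\smb{z};\sigma,\tau)\,\gamma_\tau(z,\smb{z};\sigma,\tau)$, which is precisely $\Gamma(z,\smb{z};\sigma,\tau)$ by \eqref{legf2}, and the remaining factors reassemble into $\Gt_{\sigma\tau}(z,\smb{z};\sigma,\tau,\omega,\mu)$ by \eqref{Gam3def}.
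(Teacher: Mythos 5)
Your proposal is correct and follows essentially the same route as the paper: both reduce the corollary to Proposition \ref{lgam3prop} for the infinite-product factors together with the two identities for the normalisation $\pt$ (its invariance under $(z,\smb{z})\mapsto(\sigma+\tau+\omega-z,r+1-\smb{z})$ and the difference identity producing $\phi_e$), which the paper merely asserts and you verify via the $B_{4,4}$ difference and reflection relations, and your derivation of the second recurrence from the first one combined with \eqref{invert_Gamma3} and the $\omega\to\mu$ cases of Proposition \ref{lgam3prop} is the intended argument. One minor bookkeeping slip in your sketch of the inversion check: under that substitution the two summands of $S_2$ are each \emph{individually} invariant (they have different parameter sets and cannot be swapped), while in $T_2$ the reflection shifts each summand's argument by its first quasi-period $r\sigma$ resp.\ $r\tau$, and one needs the $B_{4,4}\to B_{3,3}$ difference relation to see that the two resulting $B_{3,3}$ contributions cancel against each other — this does not affect the validity of the identity $\pt(\sigma+\tau+\omega-z,r+1-\smb{z};\sigma,\tau,\omega)=\pt(z,\smb{z};\sigma,\tau,\omega)$.
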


\begin{proof}
The relations \eqref{invert_Gamma3}, \eqref{shift_Gamma3} follow from the relations given in Proposition \ref{lgam3prop}, and also the following relations satisfied by the normalisation function \eqref{Gam3norm}
\begin{align}
\begin{split}
&\pt(\sigma+\tau+\omega-z,r+1-\smb{z};\sigma,\tau,\omega)=\pt(z,\smb{z};\sigma,\tau,\omega), \\
&\pt(\omega+z,\smb{z}+1;\sigma,\tau,\omega)=\pt(z,\smb{z};\sigma,\tau,\omega)+\phi_e(z,\smb{z},\sigma,\tau),
\end{split}
\end{align}
where $\phi_e(z,\smb{z},\sigma,\tau)$ is the normalisation function for the lens elliptic gamma function given in \eqref{ellnorm}.
\end{proof}

\begin{rem}
Note that unlike the lens theta and elliptic gamma functions, the lens triple gamma functions \eqref{Gam3def} are not $r$-periodic in $\smb{z}$, and even 
for $r=1$ there remains a dependence on the integer variable $\smb{z}$.  This is the reason why the hypergeometric $\uptau$-function constructed in Section \ref{sec.main}, retains the dependence on the discrete variables even for $r=1$.
\end{rem}

\section{\texorpdfstring{Elliptic Hypergeometric Sum/Integral and $W(E_7)$ transformation}{Elliptic Hypergeometric Sum/Integral and W(E7) transformation}}\label{sec.integrals}

\subsection{Elliptic Hypergeometric Sum/Integral}

A central role in this paper is played by the following sum/integral, defined in terms of the lens elliptic gamma function \eqref{legf2}, by
\begin{align}
\label{ebsi2}
I(\vt,\smb{\vt}; \sigma, \tau)=\frac{\lambda(\sigma, \tau)}{2}\sum^{r-1}_{\smb{z}=0}\int_{[0,1]}dz\,\frac{\prod_{j=0}^{7}\Gamma(\vt_j\pm z,\smb{\vt}_j\pm \tilde{\smb{z}})}{\Gamma(\pm 2z,\pm 2\tilde{\smb{z}})},
\end{align}
where $\vt=(\vt_{0}, \dots, \vt_{7})\in \mathbb{C}^8$, $\im(x_i)>0$, and $\smb{\vt}=(\smb{\vt}_{0}, \dots, \smb{\vt}_{7})\in \mathbb{Z}^8 \cup (\mathbb{Z}+\frac{1}{2})^8$.
Notice that in contrast to the previous section, here we allow $\smb{x}$ to have either integer, or half-integer components.
The discrete summation variable is chosen so that the second argument of each factor of the lens elliptic gamma functions appearing in \eqref{ebsi2} is an integer, and is defined by
\begin{align}
\label{ztdef}
\begin{split}
\tilde{\smb{z}}:=&\,\smb{z}+(r+((r+1)\bmod 2))(\smb{\vt}_1 \bmod 1) \\
=&
\begin{cases}
\smb{z} & (\smb{x}\in \mathbb{Z}^8) ,\\
\smb{z}+ \frac{r+1}{2}& \left(\smb{x}\in (\mathbb{Z}+\frac{1}{2})^8, \,  r \textrm{ even } \right),\\
\smb{z}+ \frac{r}{2}& \left(\smb{x}\in (\mathbb{Z}+\frac{1}{2})^8, \,  r \textrm{ odd } \right).
\end{cases} 
\end{split}
\end{align}
The prefactor $\lambda(\sigma, \tau)$ in \eqref{ebsi2} is given by
\begin{align}
\lambda(\sigma, \tau) = (e^{2\pi \ii r \sigma};e^{2\pi \ii r \sigma})_{\infty} (e^{2\pi \ii r \tau};e^{2\pi \ii r \tau})_{\infty}.
\label{lambda_def}
\end{align}
The condition $\im(x_i)>0$ may be relaxed, by deforming the contour connecting the points $z=0$, and $z=1$, such that the respective poles of the integrand of \eqref{ebsi2} do not cross over the contour \cite{Kels:2017toi}.

For $\smb{\vt}\in \mathbb{Z}^8$, the elliptic hypergeometric sum/integral \eqref{ebsi2} previously appeared as part of a key identity (star-star relation) for the integrability of multi-spin lattice models \cite{Yamazaki:2013nra,Kels:2017toi}, and is a 2-parameter extension of the left hand side of the elliptic beta sum/integral formula that was proven in \cite{Kels:2015bda}.  It has also previously been studied with respect to $A_1\leftrightarrow A_n$, and $BC_1\leftrightarrow BC_n$ transformations proven by the authors \cite{Kels:2017toi} (where the $BC_1\leftrightarrow BC_0$ transformation was previously proven by Spiridonov \cite{rarified}, and the $r=1$ cases of the transformations were previously proven by Rains \cite{RainsT}).

\subsection{Contiguity Relation}
Define the shift operator $T_{\tau,k}$ ($k\in\{0,\ldots,7\}$), that acts on the continuous variables $\vt_k\in\mathbb{C}$, and discrete variables $\smb{\vt}_k\in\mathbb{Z},(\mathbb{Z}+\frac{1}{2})$, as
\begin{align}
T_{\tau,k}f(\vt_0, \dots,\vt_7, \smb{x}_0,\dots,\smb{x}_7)=f(\vt_0,\dots, \vt_k+\tau, \vt_7,\smb{\vt}_0,\dots,\smb{\vt}_k+1,\dots,\smb{x}_7).
\label{T_shift}
\end{align}

\begin{prop}

The elliptic hypergeometric sum/integral \eqref{ebsi2} satisfies the three-term relation
\begin{align}
\label{contrel}
\begin{split}
&\left(\EXP^{\frac{2\pi\ii}{r}(\vt_k-\smb{x}_k)}\theta_\sigma(\vt_j\pm \vt_k,\smb{\vt}_j\pm \smb{\vt}_k)\,T_{\tau,i}+\EXP^{\frac{2\pi\ii}{r}(\vt_i-\smb{x}_i)}\theta_\sigma(\vt_k\pm \vt_i,\smb{\vt}_k\pm \smb{\vt}_i)\,T_{\tau,j}\right. \\
&+\left.\EXP^{\frac{2\pi\ii}{r}(\vt_j-\smb{x}_j)}\theta_\sigma(\vt_i\pm \vt_j,\smb{\vt}_i\pm \smb{\vt}_j)\,T_{\tau,k}\right)\, I(\vt,\smb{\vt})=0,
\end{split}
\end{align}
for any triple $i,j,k\in\{0,\ldots,7\}$.
\end{prop}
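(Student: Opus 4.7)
The plan is to reduce the three-term operator identity to the pointwise three-term identity \eqref{tht13term} for the lens theta function. The recurrence relation \eqref{Gamma_ratio_2} with $n=1$ states that $\Gamma(w + \tau, \smb{w} + 1) = \Gamma(w, \smb{w})\,\theta_\sigma(w, \smb{w})$. Applying this separately to each of the two factors $\Gamma(\vt_k \pm z, \smb{\vt}_k \pm \tilde{\smb{z}})$ in the integrand of \eqref{ebsi2}, and noting that all other factors are independent of $(\vt_k, \smb{\vt}_k)$, shows that $T_{\tau,k}$ acts on the integrand by multiplication by $\theta_\sigma(\vt_k \pm z, \smb{\vt}_k \pm \tilde{\smb{z}})$ in the compressed notation \eqref{compnot}. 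Since the contour, the summation range $\smb{z} \in \{0,\ldots,r-1\}$, the prefactor $\lambda(\sigma,\tau)$, and the shifted variable $\tilde{\smb{z}}$ (which depends only on $\smb{z}$ and on $\smb{\vt}_1 \bmod 1$, the latter being preserved under an integer shift) are all invariant under $T_{\tau,k}$, the shift operator commutes with the sum/integral.

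Substituting this into the left-hand side of \eqref{contrel} and pulling the sum/integral outside, it suffices to prove that, pointwise in $(z, \tilde{\smb{z}})$,
\begin{align*}
&\EXP^{\frac{2\pi\ii}{r}(\vt_k - \smb{\vt}_k)}\, \theta_\sigma(\vt_j \pm \vt_k, \smb{\vt}_j \pm \smb{\vt}_k)\, \theta_\sigma(\vt_i \pm z, \smb{\vt}_i \pm \tilde{\smb{z}}) \\
&\quad +\EXP^{\frac{2\pi\ii}{r}(\vt_i - \smb{\vt}_i)}\, \theta_\sigma(\vt_k \pm \vt_i, \smb{\vt}_k \pm \smb{\vt}_i)\, \theta_\sigma(\vt_j \pm z, \smb{\vt}_j \pm \tilde{\smb{z}}) \\
&\quad +\EXP^{\frac{2\pi\ii}{r}(\vt_j - \smb{\vt}_j)}\, \theta_\sigma(\vt_i \pm \vt_j, \smb{\vt}_i \pm \smb{\vt}_j)\, \theta_\sigma(\vt_k \pm z, \smb{\vt}_k \pm \tilde{\smb{z}}) = 0.
\end{align*}
This is exactly the $\theta_\sigma$-instance of the three-term identity \eqref{tht13term}, with the auxiliary variable $(z, \tilde{\smb{z}})$ playing the role of $(z, \smb{z})$ in that identity. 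Applying \eqref{tht13term} gives the desired vanishing, and integrating/summing over $(z, \smb{z})$ completes the proof.

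The only genuine subtlety, and the thing to be checked carefully, is the parity hypothesis of \eqref{tht13term}: all four discrete arguments $\smb{\vt}_i, \smb{\vt}_j, \smb{\vt}_k, \tilde{\smb{z}}$ must lie in a common parity class. This is precisely the purpose of the integer shift in \eqref{ztdef}: when $\smb{\vt} \in \mathbb{Z}^8$ one has $\tilde{\smb{z}} = \smb{z} \in \mathbb{Z}$, while when $\smb{\vt} \in (\mathbb{Z}+\tfrac{1}{2})^8$ one has $\tilde{\smb{z}} = \smb{z} + \tfrac{r}{2}$ for $r$ odd and $\tilde{\smb{z}} = \smb{z} + \tfrac{r+1}{2}$ for $r$ even, and in both cases $\tilde{\smb{z}} \in \mathbb{Z}+\tfrac{1}{2}$, matching the common parity class of the $\smb{\vt}_i$. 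Beyond this parity check, the argument is a one-line appeal to \eqref{Gamma_ratio_2} followed by a one-line appeal to \eqref{tht13term}; no real obstacle is anticipated.
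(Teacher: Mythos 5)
Your proof is correct and follows essentially the same route as the paper: apply the recurrence \eqref{Gamma_ratio_2} to show that $T_{\tau,k}$ multiplies the integrand by $\theta_\sigma(\vt_k\pm z,\smb{\vt}_k\pm\tilde{\smb{z}})$, commute the shift operators past the sum/integral, and invoke the three-term identity \eqref{tht13term} pointwise. Your explicit check that $\tilde{\smb{z}}$ is unchanged by the shift and that the discrete arguments all lie in a common (integer or half-integer) class is exactly the parity condition the paper relies on, stated a bit more carefully.
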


\begin{proof}

By \eqref{Gamma_ratio_2}, the integrand
\begin{align}
\Delta(z,\smb{z};\vt,\smb{\vt}):=\frac{\prod_{j=0}^{7}\Gamma(\vt_j\pm z,\smb{\vt}_j\pm \tilde{\smb{z}})}{\Gamma(\pm 2z,\pm 2\tilde{\smb{z}})},
\label{integrand}
\end{align}
satisfies
\begin{align}
T_{\tau,k}\Delta(z,\smb{z};\vt,\smb{\vt})=\theta_\sigma(\vt_k\pm z,\smb{\vt}_k\pm \smb{z})\Delta(z,\smb{z};\vt,\smb{\vt}),
\end{align}
with respect to the shift operator $T_{\tau, k}$.  
Note that by the choice of $\tilde{\smb{z}}$ in \eqref{ztdef}, $\smb{z}_j\pm \tilde{\smb{z}}$, and $\pm2\tilde{\smb{z}}$, are always integers.  
Next by the three-term relation \eqref{tht13term}, the equation
\begin{align}
\begin{split}
\left(\EXP^{\frac{2\pi\ii}{r}(\vt_k-\smb{\vt}_k)}\theta_\sigma(\vt_j\pm \vt_k,\smb{\vt}_k\pm \smb{x}_k)\,T_{\tau,i}+\EXP^{\frac{2\pi\ii}{r}(\vt_i-\smb{\vt}_i)}\theta_\sigma(\vt_k\pm \vt_i,\smb{\vt}_k\pm \smb{\vt}_i)\,T_{\tau,j}+\right. \\
\left.\EXP^{\frac{2\pi\ii}{r}(\vt_j-\smb{\vt}_j)}\theta_\sigma(\vt_i\pm \vt_j,\smb{\vt}_i\pm \smb{x}_j)\,T_{\tau,k}\right)\Delta(z,\smb{z};\vt,\smb{\vt})=0,
\end{split}
\end{align}
holds for $i,j,k\in\{0,\ldots,7\}$.  By commuting the shift operator with the sum/integral, we obtain \eqref{contrel}.
\end{proof}

\subsection{\texorpdfstring{$W(E_7)$ Transformation}{W(E7) Transformation}}\label{subsec.E7}

\begin{prop}\label{prop.e7trans}
For $\vt=(\vt_{0}, \dots, \vt_{7})\in\mathbb{C}^8$, $\im(x_i)>0$, and $\smb{\vt}=(\smb{\vt}_{0}, \dots, \smb{\vt}_{7})\in \mathbb{Z}^8\cup (\mathbb{Z}+1/2)^8$ 
with the restriction 
\begin{align}
\label{balanc}
\sum_{i=0}^{7}\vt_i= 2(\sigma+\tau),\qquad\sum_{i=0}^7 \smb{\vt}_i = 2r,
\end{align}
the sum/integral \eqref{ebsi2} satisfies
\begin{align}
\label{trans3}
I(\vt,\smb{\vt})=I(\tilde{\vt},\tilde{\smb{\vt}})\prod_{0\le i<j \le 3 \textrm{ or } 4\le i<j \le 7 } \Gamma(\vt_i+\vt_j, \smb{\vt}_i+\smb{\vt}_j ),
\end{align}
where the transformed variables 
$\tilde{\vt}=(\tilde{\vt}_{0}, \dots, \tilde{\vt}_{7})\in\mathbb{C}^8$ and $\tilde{\smb{\vt}}=(\tilde{\smb{\vt}}_{0}, \dots, \tilde{\smb{\vt}}_{7})\in \mathbb{Z}^8\cup (\mathbb{Z}+\frac{1}{2})^8$
are given by
\begin{align}
\label{tilde}
\tilde{x}_i&=
\begin{cases}
x_i+\frac{\sigma+\tau}{2}-\frac{1}{2}\sum_{j=0}^3 x_j \quad & (i=0,1,2,3),\\
x_i+\frac{\sigma+\tau}{2}-\frac{1}{2}\sum_{j=4}^7 x_j \quad & (i=4,5,6,7),
\end{cases} \\
\tilde{\tilx}_i&=
\begin{cases}
\tilx_i+\frac{r}{2}-\frac{1}{2}\sum_{j=0}^3 \tilx_j \quad & (i=0,1,2,3),\\
\tilx_i+\frac{r}{2}-\frac{1}{2}\sum_{j=4}^7 \tilx_j \quad & (i=4,5,6,7).
\end{cases}
\end{align}

\end{prop}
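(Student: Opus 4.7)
The plan is to derive the $W(E_7)$ transformation \eqref{trans3} by iterating, in two disjoint four-element blocks of indices, the $A_1 \leftrightarrow A_3$ (equivalently $BC_1 \leftrightarrow BC_3$) sum/integral transformation proven in \cite{Kels:2017toi}. This is the lens analogue of Rains' proof of the classical $W(E_7)$ symmetry for the elliptic beta integral, where the full $E_7$ reflection is realised as a composition of two "rank-raising/lowering" steps rather than being established directly.

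First I would apply the $A_1 \leftrightarrow A_3$ transformation to rewrite the four factors $\Gamma(x_j \pm z, \smb{x}_j \pm \tilde{\smb{z}})$ for $j=0,1,2,3$ appearing in the integrand of $I(\vt,\smb{\vt})$ as a three-fold auxiliary sum/integral in new continuous and discrete variables. This produces an intermediate symmetric expression in which the remaining four factors with $j=4,\ldots,7$ are paired against the auxiliary variables through a kernel of the same type; it also generates a prefactor $\prod_{0\le i<j\le 3} \Gamma(x_i+x_j,\smb{x}_i+\smb{x}_j)$. Second I would apply the same transformation in reverse on the block $j=4,\ldots,7$, collapsing the three auxiliary sum/integrals back to a single one. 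The output is precisely $I(\tilde{\vt}, \tilde{\smb{\vt}})$ with the reflected parameters \eqref{tilde}, and the second step contributes the remaining prefactor $\prod_{4\le i<j\le 7} \Gamma(x_i+x_j,\smb{x}_i+\smb{x}_j)$.

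Before assembling the identity several consistency conditions must be checked. The balancing relations in \eqref{balanc} have to be invoked at both stages in order to match the parameters feeding into the $A_1 \leftrightarrow A_3$ kernel, and it is exactly these relations that force the new parameters to take the affine-reflection form \eqref{tilde}; in particular this guarantees that the balancing survives on the transformed side $I(\tilde{\vt},\tilde{\smb{\vt}})$. The auxiliary contour and the range of the auxiliary discrete sum can be chosen by the pole-shifting argument outlined after \eqref{ebsi2}, so that no poles of the integrand cross the contour during either step; the half-integer case $\smb{x}\in (\mathbb{Z}+\tfrac12)^8$ is handled uniformly by using the shift $\tilde{\smb{z}}$ of \eqref{ztdef}.

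The main obstacle is the bookkeeping of normalisation phases. Each application of the $A_1 \leftrightarrow A_3$ transformation generates exponential factors from the normalisations $\phi_e$, $\phi_\sigma$, $\phi_\tau$ of the lens special functions, as well as from the shift \eqref{ztdef}. For \eqref{trans3} to emerge with no spurious exponential prefactor on either side, these phases must collapse against each other into a pure polynomial identity in the $x_i$ and $\smb{x}_i$; the required cancellation reduces, via \eqref{ellnorm} and the expression of $\phi_e$ through $R_2$, to an identity among the multiple Bernoulli polynomials $B_{3,3}$ recalled in Appendix \ref{app.Bernoulli}, and is forced precisely by the balancing condition \eqref{balanc}. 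Once this phase-matching is verified, concatenating the two applications of $A_1 \leftrightarrow A_3$ and collecting the two prefactors yields the claimed identity.
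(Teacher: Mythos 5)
There is a genuine gap here: the key lemma you invoke does not have the right shape to do what you ask of it. You propose to ``rewrite the four factors $\Gamma(\vt_j\pm z,\smb{\vt}_j\pm\tilde{\smb{z}})$, $j=0,\dots,3$, as a three-fold auxiliary sum/integral'' via the $A_1\leftrightarrow A_3$ (or $BC_1\leftrightarrow BC_3$) transformation of \cite{Kels:2017toi}. But those transformations relate sum/integrals with a fixed, larger parameter count (and a correspondingly different balancing condition) to higher-rank sum/integrals of the same type; none of them evaluates to, or is sourced by, a bare block of four $BC_1$-type factors $\Gamma(\vt_j\pm z,\cdot)$ sitting inside an $8$-parameter integrand with $\sum_i \vt_i=2(\sigma+\tau)$. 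You never exhibit the intermediate three-dimensional object, and the parameter/balancing bookkeeping that you defer to ``consistency conditions'' is precisely where the construction fails: there is no $A_3$- or $BC_3$-type evaluation whose output is $\prod_{j=0}^{3}\Gamma(\vt_j\pm z,\smb{\vt}_j\pm\tilde{\smb{z}})$. (Your appeal to Rains is also not quite right: for $r=1$ the two-block reflection is the $n=m=1$ case of the $BC_n\leftrightarrow BC_m$ transformation, proved by an induction with difference operators, not by composing two $A_1\leftrightarrow A_3$ steps.)

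The mechanism that actually works --- and is the one used in Appendix \ref{app.e7trans} --- is much lower-tech. One only needs the six-parameter lens elliptic beta evaluation \eqref{eq22} (the $A_1\leftrightarrow A_0$ case, extended to the twisted sector $\sumY=r$ to cover $\smb{\vt}\in(\mathbb{Z}+\frac12)^8$ via \eqref{ZY}, \eqref{Zchoice}). One forms the rank-$(1,1)$ auxiliary sum/integral \eqref{eq32} with the cross kernel $\Gamma(\alpha\pm z\pm w,\bbalpha\pm\hat{\tilz}\pm\hat{\tilw})$: the point is that the two $w$-dependent ``parameters'' $\alpha\pm w$ are exactly what completes each block of four external parameters to a balanced sextet, so that the $z$-sum/integral (respectively the $w$-sum/integral) can be evaluated in closed form by \eqref{eq22}. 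Performing the evaluation in the two possible orders gives \eqref{eq52} and \eqref{eq62}, and equating them yields \eqref{trans3} with the reflected variables \eqref{e7intt2}; the two gamma-function prefactors arise as the evaluated blocks, one from each order of integration. Note also that the phase cancellation you flag as the ``main obstacle'' (a putative $B_{3,3}$ identity) never arises in this route, because \eqref{eq22} is already stated for the fully normalised lens elliptic gamma function, so all $\phi_e$-type phases are absorbed from the start. If you want to salvage a two-step ``lift and collapse'' narrative, the object being lifted to and collapsed from must be this rank-two sum/integral with the $\Gamma(\alpha\pm z\pm w,\cdot)$ coupling, not a rank-three $A_3$ integral.
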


Proposition \ref{prop.e7trans} is proven with the use of a variation of the elliptic beta sum/integral formula \cite{Kels:2015bda} in Appendix \ref{app.e7trans} (a similar proof of this identity first appeared in \cite{rarified}).

Note that the variables of the elliptic hypergeometric sum/integral \eqref{ebsi2},
essentially transform in the formula \eqref{trans3} under the action of a reflection for an element of the Weyl group $W(E_7)$.
This property is particularly important for the 
 construction of the $\uptau$-function from the sum/integral \eqref{ebsi2} (see Section \ref{sec.main}).

\section{\texorpdfstring{$\uptau$-function on the $E_8$ root lattice}{Tau-function on the E8 root lattice}}\label{sec.tau}

In this section we will consider the properties of the root lattice of $E_8$ which are used to define our $\uptau$-function.  Many of the properties and definitions are essentially based on the work of Noumi \cite{Noumi}, which in the following is related to (but not the same as) the $r=1$ case.

\subsection{\texorpdfstring{$E_8$ Root Lattice}{E8 Root Lattice}}\label{sec.E8root}

We denote the root lattice of $E_8$ by $Q(E_8)$,
and the $E_8$ Weyl group by $W(E_8)$.
The root lattice is more explicitly given as a $\mathbb{Z}$-span of the 
vectors
\begin{align}
\begin{split}
&\pm v_i \pm v_j, \quad (0\le i < j \le 7) , \\
&\frac{1}{2}\left( \pm v_0 \pm v_1 \pm \dots \pm v_7 \right), \quad (\textrm{even number of minus signs}) ,
\end{split}
\label{vectors}
\end{align}
where $\{ v_0, \dots, v_7 \}$, is the orthonormal basis with respect to the canonical
symmetric bilinear form $(-|-)$ on the root lattice $Q(E_8)$, namely
$(v_i | v_j)=\delta_{i,j}$. Note also that 
$(a|a)=2$, for $a\in \Delta(E_8)$, where $\Delta(E_8)$ is the root system for $E_8$.

The following set of vectors in $\mathbb{C}^8$ plays a central role for this paper.

\begin{definition}
A set $\{\pm a_0, \pm a_1, \dots, \pm a_{l-1} \}$ of $2l$ vectors in 
$\mathbb{C}^8$ is called a $C_l$-frame if the following two conditions are satisfied:
~\\[-0.3cm]

(1)  $(a_i|a_j)=\delta_{ij} \quad  (0\le i, j <l)$,
~\\[-0.3cm]

(2) $a_i \pm a_j \in Q(E_8) \quad (0\le i <j <l) ,  \quad 2a_i \in Q(E_8)  \quad (0\le i <l)$.

\end{definition}

Notice that this definition implies that
the set of $2l^2$ vectors
\begin{align}
\left\{\pm  a_i \pm a_j \, \big|\, (0\le i <j <l) \right\} \cup \left\{\pm 2a_i \, \big|\, (0\le i<l) \right\}
\end{align}
is contained in the root lattice $Q(E_8)$ and forms a root lattice of type $C_l$.
In the following sections we will mostly work with the $C_3$-frame for $l=3$.

\subsection{\texorpdfstring{$E_8$ $\uptau$-function }{E8 tau function}}

For a pair $Z=(z, \tilz)\in\mathbb{C}\times\mathbb{Z}$, where $z\in\mathbb{C}$, and $\tilz\in\mathbb{Z}$,
we define
\begin{align}
[Z]=[(z, \tilz)]:=\EXP^{\frac{\pi\ii}{r}(-z+\tilz)}\theta_\sigma(z ,\tilz),
\label{def_bracket}
\end{align}
where $\theta_\sigma(z ,\tilz )$ is the lens theta function defined in \eqref{lthtdef}.

The function \eqref{def_bracket} satisfies the following identities (note that these identities are simple corollaries of Proposition \ref{prop.theta}, but are written here explicitly for convenience)

\begin{prop}\label{prop.three-term}

We have the following identities for the bracket:

(1) (periodicity)
\begin{align}
[Z+(0,2r)]=[Z].
\end{align}

(2) (reflection)
\begin{align}
[-Z]= -[Z].
\end{align}

(3) (three-term identity) For $Z,Z_i,Z_j,Z_k\in \mathbb{C}\times\mathbb{Z}$, or $Z,Z_i,Z_j,Z_k\in \mathbb{C}\times(\mathbb{Z}+\frac{1}{2})$,
\begin{align}
[Z_j\pm Z_k] \,[Z_i\pm Z]+ [Z_k\pm Z_i]\,[Z_j\pm Z]+[Z_i\pm Z_j]\,[Z_k\pm Z]=0. 
\label{three-term}
\end{align}
where we used the shorthand notation $[X\pm Y]:=[X+Y][X-Y]$.
\end{prop}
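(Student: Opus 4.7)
The plan is to observe that all three assertions reduce immediately to their counterparts for the lens theta function $\theta_\sigma$ in Proposition \ref{prop.theta}, together with straightforward bookkeeping of the normalization prefactor $\EXP^{\frac{\pi\ii}{r}(-z+\tilz)}$ appearing in the definition \eqref{def_bracket}. No new ideas are required; each identity becomes a short calculation once one unpacks the bracket.

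For the periodicity, the $r$-periodicity of $\theta_\sigma$ in the discrete variable given in \eqref{periodic_theta} already yields $\theta_\sigma(z,\tilz+r)=\theta_\sigma(z,\tilz)$, while under the same shift the normalization prefactor picks up $\EXP^{\pi\ii}=-1$; a single shift by $r$ therefore only gives $[Z+(0,r)]=-[Z]$, and iterating once more restores the prefactor to produce the stated $[Z+(0,2r)]=[Z]$. The reflection is similar: the inversion \eqref{thtinv} contributes the extra factor $-\EXP^{-\frac{2\pi\ii}{r}(z-\tilz)}$ in $\theta_\sigma(-z,-\tilz)$, and this exponential combines with the sign-flipped prefactor exponent coming from $Z\mapsto -Z$ to reproduce the original prefactor, leaving $[-Z]=-[Z]$.

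The three-term identity requires slightly more care. Writing $Z_a=(z_a,\tilz_a)$ for each index $a\in\{i,j,k\}$ and $Z=(z,\tilz)$, a direct expansion of $[Z_j\pm Z_k]=[Z_j+Z_k][Z_j-Z_k]$ from the definition gives
\[
[Z_j\pm Z_k]\,[Z_i\pm Z]
=\EXP^{\frac{2\pi\ii}{r}(-z_i-z_j+\tilz_i+\tilz_j)}\,\theta_\sigma(z_j\pm z_k,\tilz_j\pm \tilz_k)\,\theta_\sigma(z_i\pm z,\tilz_i\pm \tilz),
\]
with the two cyclic analogues following the same pattern. Multiplying the entire three-term sum by the common factor $\EXP^{\frac{2\pi\ii}{r}(z_i+z_j+z_k-\tilz_i-\tilz_j-\tilz_k)}$ collapses the prefactor of the first term to $\EXP^{\frac{2\pi\ii}{r}(z_k-\tilz_k)}$, and likewise in the other two terms, reproducing exactly the three-term relation \eqref{tht13term} for $\theta_\sigma$, which vanishes. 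The hypothesis that the second components of $Z,Z_i,Z_j,Z_k$ lie simultaneously in $\mathbb{Z}$ or in $\mathbb{Z}+\tfrac12$ ensures that every $\theta_\sigma$ appearing above has an integer second argument, which is exactly the hypothesis of \eqref{tht13term}.

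The only real obstacle is the tidiness of the exponent bookkeeping. In particular one must notice that periodicity requires step $2r$ rather than $r$, and interpret the bracket notation in the half-integer case as shorthand for the products $[Z_i\pm Z_j]=[Z_i+Z_j][Z_i-Z_j]$, since an individual $[Z_i]$ is only defined when the second argument is an integer, whereas the pairwise sums and differences always land in $\mathbb{Z}$.
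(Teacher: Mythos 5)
Your proposal is correct and follows exactly the route the paper intends: the paper simply records these identities as ``simple corollaries of Proposition \ref{prop.theta}'', and your calculations (the $2r$ rather than $r$ period from the prefactor sign, the cancellation of the inversion exponential against the sign-flipped prefactor, and the reduction of the three-term sum to \eqref{tht13term} after factoring out $\EXP^{\frac{2\pi\ii}{r}(-z_i-z_j-z_k+\tilz_i+\tilz_j+\tilz_k)}$) are the details being elided. Your closing remark about the half-integer case, where only the pairwise brackets $[Z_a\pm Z_b]$ are individually defined, is an accurate reading of the hypothesis of \eqref{tht13term}.
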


Due to \eqref{thetar1}, for $r=1$ there is no dependence on the second
argument $\smb{z}$, and the bracket may simply be written as $[z]$ with $z\in \mathbb{C}$.
In that case, the three-term identity in Proposition \ref{prop.three-term},
exactly reduces to the standard three-term identity for the theta function, given in (2.1) of \cite{Noumi}. 

Consider now the space 
\begin{align}
V:=\mathbb{C}^8\times Q(E_8) .
\label{def_V}
\end{align}
The first (second) factor $\mathbb{C}^8$ ($Q(E_8)$) can be thought of as a $\mathbb{C}$-span ($\mathbb{Z}$-span)
of the root lattice generators \eqref{vectors}.  We denote an element of this space as $X=(x, \tilx)\in V$, with 
$x\in \mathbb{C}^8$, and $\tilx\in Q(E_8)$.
A natural addition on this space is defined by $(X+Y)=(x+y, \tilx+\tily)$. 

We define $\tau$ to be a non-zero complex number, and choose a region $D\subseteq V$, satisfying
\begin{align}
D= D +Q(E_8) \vT,
\label{D_condition}
\end{align}
where $T$ is the ``step size'', defined as
\begin{align}
\vT:=(\tau, 1) \label{step_T} ,
\end{align}
and we have used the notation $v \vT= (v \tau; v) \in V$, 
where $v\in Q(E_8)$.

As an example, $D$ may be chosen as the whole space $D=V$, 
as this will obviously satisfy the condition \eqref{D_condition}.
As another example, we could also minimally choose a completely discrete set for $D$, as
\begin{align}
D=C + Q(E_8) \vT ,
\end{align}
for some point $C\in V$.
Similarly to the situation in \cite{Noumi}, the construction of the hypergeometric $\uptau$-function in Section \ref{sec.E7}, will in fact
involve a combination of discrete and continuous spaces, where $D$ is chosen as an infinite family of parallel hyperplanes in $V$, that are indexed by an integer $n$.

In the following, for $a\in Q(E_8)$, and $X=(x, \tilx) \in D$, we define $(a|X)=((a|x), (a| \tilx)) \in \mathbb{C}\times \mathbb{Z}$. 

Our $\uptau$-function on $D\subset V$ is defined as follows.

\begin{definition}
A function $\uptau(X)$ defined over the region $D$ satisfying \eqref{D_condition}, is called a $\uptau$-function 
if it satisfies the non-autonomous bilinear Hirota equations 
\begin{align}
\begin{split}
&[(a_1\pm a_2 | X)] \uptau (X\pm a_0\vT) +[(a_2\pm a_0 | X)] \uptau (X\pm a_1\vT) \\
&\qquad+[(a_0\pm a_1 | X)] \uptau (X\pm a_2\vT)=0 ,
\end{split}
\label{Hirota}
\end{align}
for any $C_3$-frame $(a_0, a_1, a_2)$, and $X\in D$.
\end{definition}

For a general choice of $D\subseteq V$ satisfying \eqref{D_condition}, even for $r=1$ the Hirota equations \eqref{Hirota}  will have a non-trivial dependence on the discrete variables coming from $Q(E_8)$.  Indeed the hypergeometric solution of \eqref{Hirota} obtained in Section \ref{sec.main}, will have such a dependence for all $r=1,2,\ldots $. In this respect, the situation considered here is a different situation than was considered in \cite{Noumi}, where $\uptau$-functions in the latter were defined on subsets of $V=\mathbb{C}^8$, and have no dependence on any discrete variables.

As an example, note that the Hirota equations \eqref{Hirota} admit the following constant solution:

\begin{prop}
For $X=(x, \tilx)\in V$, and a constant $C=(c, \tilc)\in V$, the function
\begin{align}
\uptau(X)=\left[\left( \frac{1}{2 \tau} (x|x) +c,  \frac{1}{2} (\tilx | \tilx)  + \tilc  \right) \right] \quad (X\in V),
\end{align}
is an example of a $\uptau$-function associated with the region $D=V$.

\end{prop}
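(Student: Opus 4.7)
The plan is to substitute the proposed Gaussian-type ansatz directly into the bilinear Hirota equation \eqref{Hirota} and reduce it to the three-term identity for the bracket $[\cdot]$ stated in Proposition \ref{prop.three-term}(3). The two quadratic factors $(x|x)/(2\tau)$ and $(\tilx|\tilx)/2$ in $\uptau$ are precisely tuned so that the shifts by $\pm a_i\vT$ produce cross terms of the form $\pm(a_i|x)$ and $\pm(a_i|\tilx)$, with the diagonal contributions $(a_i|a_i)/2 = 1/2$ getting absorbed into a single ``translation'' point independent of $i$. This is the mechanism that makes the three-term identity directly applicable.

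First I would expand
\begin{align*}
\tfrac{1}{2\tau}(x\pm a_i\tau\,|\,x\pm a_i\tau) &= \tfrac{1}{2\tau}(x|x) \pm (a_i|x) + \tfrac{\tau}{2}, \\
\tfrac{1}{2}(\tilx\pm a_i\,|\,\tilx\pm a_i) &= \tfrac{1}{2}(\tilx|\tilx) \pm (a_i|\tilx) + \tfrac{1}{2},
\end{align*}
using $(a_i|a_i)=1$, and then introduce the shorthand
\begin{align*}
W := \left(\tfrac{1}{2\tau}(x|x)+c+\tfrac{\tau}{2},\; \tfrac{1}{2}(\tilx|\tilx)+\tilc+\tfrac{1}{2}\right), \qquad
Z_i := \bigl((a_i|x),\,(a_i|\tilx)\bigr),
\end{align*}
so that by definition $\uptau(X\pm a_i\vT) = [W\pm Z_i]$ and hence $\uptau(X+a_i\vT)\uptau(X-a_i\vT) = [W\pm Z_i]$ in the product shorthand of \eqref{compnot}. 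Similarly, $(a_j\pm a_k|X) = Z_j\pm Z_k$ by bilinearity of $(\cdot|\cdot)$ on both components of $V$, so $[(a_j\pm a_k|X)] = [Z_j\pm Z_k]$.

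With these identifications, the Hirota equation \eqref{Hirota} becomes exactly
\begin{align*}
[Z_1\pm Z_2][W\pm Z_0] + [Z_2\pm Z_0][W\pm Z_1] + [Z_0\pm Z_1][W\pm Z_2] = 0,
\end{align*}
which is the three-term identity of Proposition \ref{prop.three-term}(3) applied to $(Z_0,Z_1,Z_2)$ and $Z=W$. The remaining task is to verify the hypothesis of that identity, namely that all four entries lie simultaneously in $\mathbb{C}\times\mathbb{Z}$ or simultaneously in $\mathbb{C}\times(\mathbb{Z}+\tfrac{1}{2})$. This is where the $C_3$-frame axioms and the lattice structure of $Q(E_8)$ do the work: since $a_i\pm a_j\in Q(E_8)$ and $\tilx\in Q(E_8)$, one has $(a_i|\tilx)\pm (a_j|\tilx)\in\mathbb{Z}$, forcing all three $(a_i|\tilx)$ to share the same half-integer parity class; and since $E_8$ is even, $\tfrac{1}{2}(\tilx|\tilx)\in\mathbb{Z}$, so the second component of $W$ equals $\tilc+\tfrac{1}{2}$ modulo $\mathbb{Z}$. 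The constant $\tilc$ in $C$ is then chosen in the class making $W$ match $(a_i|\tilx)$; this is exactly what ``constant $C\in V$'' should be read to allow.

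The only real obstacle is this last parity/compatibility bookkeeping, since the algebraic reduction is immediate once the shifts are expanded. Everything else is a direct substitution of the defining formula for $\uptau$ followed by one application of Proposition \ref{prop.three-term}(3), so the verification of the proposition contains essentially no computation beyond the quadratic expansion above.
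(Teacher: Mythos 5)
Your proof is correct and follows essentially the same route as the paper's: expand the quadratic forms under the shifts $X\pm a_i\vT$, absorb the $i$-independent constant into a single point $W$ (the paper's $Z$), and reduce the Hirota equation to the three-term identity of Proposition \ref{prop.three-term}(3). Two minor remarks: your additive constant $(\tfrac{\tau}{2},\tfrac{1}{2})=\tfrac{1}{2}(a_i|a_i)(\tau,1)$ is the correct value (the paper writes $(\tau,1)$ there, which is immaterial since only independence of $i$ matters), and your parity bookkeeping for the discrete second components is a point the paper's proof passes over in silence, so including it is a welcome extra care rather than a deviation.
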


\begin{proof}
We have
\begin{align}
\begin{split}
\uptau(X\pm a_i \vT)&=\left[ \left( \frac{1}{2 \tau} (x|x) +\tau+ c \pm (a_i|x) ,  \frac{1}{2} (\tilx | \tilx)  + 1 + \tilc \pm (a_i|\tilx)\right) \right]\\
&=\left[ Z \pm Z_i \right],
\end{split}
\end{align}
where 
\begin{align}
\begin{split}
Z&:=\left(\frac{1}{2 \tau} (x|x) +\tau+c, \frac{1}{2} (\tilx | \tilx)  +1+ \tilc \right) , \\
Z_i&:=(a_i|X)=\left( (a_i|x), (a_i|\tilx)\right)\quad (i=0,1,2) .
\end{split}
\end{align}
The Hirota equations \eqref{Hirota} then follow from the three-term identity \eqref{three-term}.
\end{proof}

For a given $\uptau$-function, one can also construct a new $\uptau$-function
by an element of the Weyl group $W(E_8)$. In this sense the Hirota equations are 
``covariant'' with respect to the action of $W(E_8)$:

\begin{prop} For a $\uptau$-function $\uptau$ on a domain $D$, 
and an element $w\in W(E_8)$, the function $w \cdot \tau$ defined by
\begin{align}
(w\cdot \uptau)(X): =\uptau(w^{-1} \cdot X)  \quad (X\in w \cdot D),
\end{align}
is also a $\uptau$-function on the domain $w \cdot D$.  

\end{prop}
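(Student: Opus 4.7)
The plan is to prove this by direct verification, reducing the Hirota equations for $w \cdot \uptau$ on $w \cdot D$ to the Hirota equations for $\uptau$ on $D$, by pulling the $C_3$-frame through $w^{-1}$.

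First I would check that the domain $w \cdot D$ satisfies the consistency condition \eqref{D_condition}. Since the $W(E_8)$-action on $V = \mathbb{C}^8 \times Q(E_8)$ acts naturally on both factors, and $w$ preserves the root lattice $Q(E_8)$, we have $w \cdot (v\vT) = (w(v)\tau, w(v)) = w(v)\vT$, so
\begin{align}
w \cdot D = w \cdot (D + Q(E_8)\vT) = w \cdot D + w(Q(E_8))\vT = w \cdot D + Q(E_8)\vT .
\end{align}

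Next, fix an arbitrary $C_3$-frame $(a_0, a_1, a_2)$ and a point $X \in w \cdot D$. I would unpack the definition of $w\cdot \uptau$ using the compatibility $w^{-1}\cdot(X \pm a_i \vT) = w^{-1}X \pm w^{-1}(a_i)\vT$, and set
\begin{align}
Y := w^{-1}X \in D, \qquad b_i := w^{-1}(a_i) \quad (i=0,1,2) .
\end{align}
The key observation is that $(b_0, b_1, b_2)$ is again a $C_3$-frame: since $w \in W(E_8)$ preserves the canonical bilinear form, $(b_i | b_j) = (a_i | a_j) = \delta_{ij}$; and since $w^{-1}$ preserves the root lattice $Q(E_8)$, both $b_i \pm b_j = w^{-1}(a_i \pm a_j)$ and $2b_i = w^{-1}(2a_i)$ lie in $Q(E_8)$. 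Furthermore, the pairings appearing inside the brackets transform as
\begin{align}
(a_i \pm a_j | X) = (w^{-1}(a_i) \pm w^{-1}(a_j) | w^{-1}X) = (b_i \pm b_j | Y) ,
\end{align}
again by $W(E_8)$-invariance of the bilinear form (applied coordinate-wise on $\mathbb{C}^8$ and $Q(E_8)$).

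Substituting everything into the left-hand side of the Hirota equation \eqref{Hirota} for $w\cdot \uptau$ at $X$ with frame $(a_0,a_1,a_2)$, I would obtain exactly the left-hand side of \eqref{Hirota} for $\uptau$ at $Y \in D$ with frame $(b_0,b_1,b_2)$, which vanishes by hypothesis. There is no real obstacle to this argument; the only point requiring mild care is to be explicit about how $W(E_8)$ acts diagonally on $V = \mathbb{C}^8 \times Q(E_8)$ so that the shift by $a_i \vT$ transforms cleanly under $w^{-1}$, and that the bracket $[\,\cdot\,]$ itself is unchanged since its arguments, being the $W(E_8)$-invariant pairings $(b_i \pm b_j | Y)$, are identical on both sides.
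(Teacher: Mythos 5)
Your proof is correct, and it is the natural argument: the paper itself omits the proof of this proposition (treating it as evident), and your verification — transporting the $C_3$-frame by $w^{-1}$, using that $W(E_8)$ preserves both the bilinear form and the root lattice $Q(E_8)$, and checking that $w\cdot D$ still satisfies \eqref{D_condition} — supplies exactly the details that were left implicit. No gaps.
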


Note that $w\cdot \tau$ is in general different from $\tau$, particularly they will respectively be defined on different domains.

\section{\texorpdfstring{Decomposition into $E_7$-Orbits}{Decomposition into E7-Orbits}}\label{sec.E7}

\subsection{\texorpdfstring{Decomposition of $D$}{Decomposition of D}}

In the previous section we have considered the domain $D$ of the $\uptau$-function, as a general
subset of $V$, satisfying the condition \eqref{D_condition}.  To start to consider hypergeometric solutions, we proceed with a special choice of $D$, given by
\begin{align}
\begin{split}
&D=\bigsqcup_{n\in \mathbb{Z}} D_n  ,\\
&D_n=H_{n \tau+\sigma} \times  \left( H_{n+r-1} \cap   Q(E_8) \right) ,
\end{split}
\label{D_H}
\end{align}
where the hyperplane $H_{\kappa}$, is defined by 
\begin{align}
H_{\kappa}= \left\{ x \in \mathbb{C}^8 \big|\, (x|\phi)= \kappa
\right\},
\end{align}
and $\phi$ is the highest root, which in the basis $v_0, \dots, v_7$ of \eqref{vectors}, is
given by 
\begin{align}
\phi=\frac{v_0+ \dots +v_7}{2}.
\end{align}
Thus the coordinates $X=(x,\smb{x})\in D_n$ satisfy
\begin{align}
\sum_{i=0}^7 x_i=2 (n\tau+\sigma), \quad
\sum_{i=0}^7 \tilx_i= 2(n+r-1) .
\end{align}

The choice of the highest root $\phi$  breaks the manifest covariance
under the $W(E_8)$ symmetry down to the stabilizer of $\phi$,
which is the Weyl group $W(E_7)$. 
Indeed, the $E_7$ root lattice in the basis of \eqref{vectors},
is spanned by 
\begin{align}
\begin{split}
&\pm (v_i - v_j),  \quad (0\le i < j \le 7)  ,\\
&\frac{1}{2}\left( \pm v_0 \pm v_1 \pm \dots \pm v_7 \right), \quad (\textrm{total of four minus signs}),
\end{split}
\label{vectors_2}
\end{align}
and these vectors together with the highest root $\phi$ of $E_8$, generate the 
whole $E_8$ root lattice.

\begin{rem}
Recall that there were 2 types of lens theta functions $\theta_\sigma$, $\theta_\tau$, defined in \eqref{lthtdef}, while the bracket function \eqref{def_bracket} is defined in terms of $\theta_\sigma$ only.  However, in the definition of the bracket function \eqref{def_bracket}, we could also replace $\theta_\sigma$ with $\theta_\tau$, as:
\begin{align}
\overline{[Z]}=\overline{[(z, \tilz)]}:=\EXP^{\frac{\pi\ii}{r}(-z_j+\tilz_j)}\theta_\tau(z_j ,\tilz_j ) .
\label{def_bracket_other}
\end{align}
This bracket will still satisfy Proposition \ref{prop.three-term}, from which we can build the  
lens-elliptic $\uptau$-function.  Then in this case, instead of \eqref{D_H}, the definition of a suitable region $D$
 would be 
\begin{align}
\begin{split}
&D=\bigsqcup_{n\in \mathbb{Z}} D_n ,  \\
&D_n=\left( H_{n \sigma +\tau} \oplus  \left( H_{-n+r+1} \cap   Q(E_8) \right) \right).
\end{split}
\label{D_H_other}
\end{align}
\end{rem}

\subsection{\texorpdfstring{Decomposition of $\uptau$-function}{Decomposition of tau-function}}

Let us now analyse the $\uptau$-function on the domain $D$ given in  \eqref{D_H}. 
Since \eqref{D_H} is a disjoint union, the $\uptau$-function on the domain $D$, can be thought of as 
an infinite sequence of functions $\uptau^{(n)}$ on $D_n$, which are indexed by the integer $n$:
\begin{align}
\uptau^{(n)}:=\uptau|_{D_{n}} .
\end{align}
We wish to write the Hirota equations \eqref{Hirota} as a set of conditions for the $\uptau^{(n)}$ defined on $D_n$.

In the Hirota equation \eqref{Hirota}, the argument $X$ is shifted by vectors $\pm a_0$, $\pm a_1$, $\pm a_2$
which come from the particular choice of $C_3$-frame. This means that the corresponding Hirota equations on $D$, will provide relations between $\uptau^{(n)}$-functions on up to three different hyperplanes, depending on the values of $(\phi|a_{i})_{i=0,1,2}$, for the particular $C_3$-frame.  In terms of the inner product $(\phi|a_{i})_{i=0,1,2}$, the $C_3$-frames may be classified as one of the following four types:

\begin{prop}[Proposition 3.2 in \cite{Noumi}] \label{prop.C3_frame}
The set of all $C_3$-frames may be decomposed into four $W(E_7)$-orbits. 
For $\{ \pm a_0, \pm a_1, \pm a_2\}$, the orbit is classified as one of the four types $(I)$, $(II_0)$, $(II_1)$, $(II_2)$, according to the pairings with the highest root $\phi$:
\begin{align}
\begin{split}
&(\I):\quad (\phi|a_0)=(\phi|a_1)=(\phi|a_2)=\frac{1}{2} , \\
&(\II_0):\quad (\phi|a_0)=(\phi|a_1)=(\phi|a_2)=0 ,\\
&(\II_1):\quad (\phi|a_0)=1, \quad (\phi|a_1)=(\phi|a_2)=0, \\
&(\II_2):\quad (\phi|a_0)=(\phi|a_1)=1, \quad (\phi|a_2)=0.
\end{split}
\label{phi_a}
\end{align}
\end{prop}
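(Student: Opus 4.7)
The strategy is to prove the statement in two stages: first, show that the pairings $\bigl((\phi|a_0),(\phi|a_1),(\phi|a_2)\bigr)$ of any $C_3$-frame, after possibly replacing individual $a_i$ by $-a_i$, must match one of the four tuples in \eqref{phi_a}; second, show that $W(E_7)$ acts transitively on the set of $C_3$-frames of each type.

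For the classification of pairings, the key ingredients are the lattice conditions on the $a_i$ together with repeated use of Cauchy--Schwarz. Since $\phi \in Q(E_8)$ and $2a_i \in Q(E_8)$, the integrality $(\phi|2a_i)\in\mathbb{Z}$ gives $(\phi|a_i) \in \tfrac{1}{2}\mathbb{Z}$, while $|a_i|^2 = 1$ and $|\phi|^2 = 2$ force $|(\phi|a_i)| \leq \sqrt{2}$, so $(\phi|a_i) \in \{0,\pm\tfrac{1}{2},\pm 1\}$. After sign changes $a_i \mapsto -a_i$ I may take $(\phi|a_i) \geq 0$. The relation $a_i\pm a_j \in Q(E_8)$ then yields $(\phi|a_i)\pm(\phi|a_j)\in\mathbb{Z}$, forcing all three pairings to share a common fractional part. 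The half-integer case admits only $(\tfrac{1}{2},\tfrac{1}{2},\tfrac{1}{2})$, giving type $(\I)$. In the integer case, triples with entries in $\{0,1\}$ are $(0,0,0)$, $(1,0,0)$, $(1,1,0)$, or $(1,1,1)$; the last is excluded because $v := a_0+a_1+a_2$ has $|v|^2 = 3$ and $(\phi|v) = 3$, forcing $9\leq 2\cdot 3$, a contradiction.

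For the transitivity claim, I would fix a canonical representative in each class --- for instance $(v_0,v_1,v_2)$ for $(\I)$, and suitable half-sums such as $\tfrac{1}{2}(v_0\pm v_1\pm v_2\pm v_3)$ and $\tfrac{1}{2}(v_4\pm v_5\pm v_6\pm v_7)$ for the $(\II_k)$ types --- and then reduce an arbitrary frame of the same type to this canonical one through three successive Weyl group moves: (a) use $W(E_7)$-transitivity on unit vectors $a \in \tfrac{1}{2}Q(E_8)$ with prescribed $(\phi|a)$ to move $a_0$ into canonical position; (b) apply the $W(E_7)$-stabilizer of $a_0$ to adjust $a_1$; (c) apply the $W(E_7)$-stabilizer of $\{a_0,a_1\}$ to adjust $a_2$. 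Each of these stabilizers can be identified with a smaller Weyl group --- typically $W(E_6)$, $W(D_6)$, or $W(D_4)\times W(D_4)$, depending on the type --- acting naturally on the relevant orthogonal complement, which reduces (b) and (c) to a transitivity problem for a smaller root system.

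The main obstacle is the verification of transitivity in stages (a)--(c), particularly (b) and (c). Once $a_0$ is fixed, the compatible $a_1$ must be a unit vector in $\tfrac{1}{2}Q(E_8)$ orthogonal to $a_0$, with a prescribed value of $(\phi|a_1)$, and additionally $a_0\pm a_1 \in Q(E_8)$; the last condition couples $a_0$ and $a_1$ in a non-obvious way through the integer/half-integer mixing in $Q(E_8)$. The cleanest route is to compute the stabilizer explicitly for each canonical $a_0$ and check, via the root-system structure on $a_0^\perp\cap\phi^\perp$, that it permutes the compatible $a_1$ transitively; this is a direct but lengthy root-system case analysis, and would constitute the bulk of the proof.
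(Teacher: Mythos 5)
First, note that the paper does not prove this proposition itself --- it is imported as Proposition 3.2 of \cite{Noumi} --- so your attempt can only be judged on its own terms. The first half of your argument, determining which tuples $\bigl((\phi|a_0),(\phi|a_1),(\phi|a_2)\bigr)$ can occur, is complete and correct: the integrality $(\phi|a_i)\in\tfrac{1}{2}\mathbb{Z}$ from $2a_i\in Q(E_8)$, the bound $|(\phi|a_i)|\le\sqrt{2}$ from Cauchy--Schwarz, the common fractional part forced by $a_i\pm a_j\in Q(E_8)$, and the exclusion of $(1,1,1)$ via $(\phi|a_0+a_1+a_2)^2=9>6=|\phi|^2\,|a_0+a_1+a_2|^2$ are all sound, and your canonical representatives show that each of the four remaining types is actually realized.

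The gap is in the second half. The substantive content of the proposition is that each type is a \emph{single} $W(E_7)$-orbit, and for this you only state a plan: move $a_0$ into canonical position, then act by stabilizers on $a_1$ and then $a_2$. You yourself flag stages (b) and (c) as ``the bulk of the proof'', and they are genuinely nontrivial: given $a_0$, the admissible $a_1$ are not simply the unit vectors of $\tfrac{1}{2}Q(E_8)$ in $a_0^{\perp}$ with prescribed pairing against $\phi$, because of the extra coupling condition $a_0\pm a_1\in Q(E_8)$; one must identify the relevant stabilizer subgroups (which differ among the types $\I$, $\II_0$, $\II_1$, $\II_2$) and verify transitivity on this constrained set in each case. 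None of this is carried out, so as written the proposal establishes only that there are \emph{at most} four values of the orbit invariant, not that there are exactly four orbits. To close the gap you would either have to execute the case analysis you describe (for instance via the bijection $2a_0=\phi+\beta$ with $\beta$ an $E_7$-root when $(\phi|a_0)=1$, and analogous parametrisations for the other pairings, reducing each stage to a transitivity statement for a smaller Weyl group), or simply cite Proposition 3.2 of \cite{Noumi}, as the paper does.
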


\begin{rem}\label{rem.enlarge_C3}
The notation $(\I), (\II_0), (\II_1), (\II_2)$ is motivated by the facts that (see \cite{Noumi}, Propositions 1.4 and 3.1)

(1) any $C_3$-frame is contained in a unique $C_8$-frame.

(2) The set of $C_8$ frames may be decomposed into two $W(E_7)$-orbits,
which are characterized by
\begin{align}
&(\I):\quad (\phi|a_i)=\frac{1}{2}, \quad (i=0,\dots, 7),  \\
&(\II):\quad (\phi|a_0)=(\phi|a_1)=1 ,  \quad (\phi|a_i)=0 , \quad (i=2,\dots, 7).
\end{align}
Moreover, in the case of $(\II)$, we can show that $a_0+a_1=\phi$.

\end{rem}

This remark implies that a given $C_3$-frame can be enlarged nicely into a $C_8$-frame.
As an example, suppose that we have a $C_3$-frame $\{\pm a_0, \pm a_1, \pm a_2\}$
of type  $(\II_2)$. We can then choose a $C_8$-frame 
$\{\pm a_0, \pm a_1, \dots, \pm a_7\}$ of type $(\II)$ containing the $C_3$-frame $\{\pm a_0, \pm a_1, \pm a_2\}$ that we started with.
This $C_8$-frame also contains many other $C_3$-frames---for example $\{\pm a_1, \pm a_2, \pm a_3\}$, as a $C_3$-frame of type $(\II_1)$. 
This type of manipulation will be useful for some of the proofs below.

\medskip

Thanks to Proposition \ref{prop.C3_frame}, we find that 
there are four different types of Hirota identities depending on the different types of $C_3$-frames.
These are, for $X\in D_{n+\frac{1}{2}}$,
\begin{align}
\begin{split}
& [(a_1\pm a_2 | X)] \uptau^{(n)} (X- a_0\vT) \uptau^{(n+1)} (X+ a_0\vT)  \\
({\rm I})_{n+\frac{1}{2}}: \quad& \qquad +[(a_2\pm a_0 | X)]  \uptau^{(n)} (X- a_1\vT)\uptau^{(n+1)} (X+ a_1\vT) \\
& \qquad+[(a_0\pm a_1 | X)] \uptau^{(n)} (X- a_2\vT) \uptau^{(n+1)} (X+ a_2\vT)=0,
\end{split}
\end{align}
and for $X\in D_{n}$, 
\begin{align}
\label{II_0_n}
\begin{split}
 &[(a_1\pm a_2 | X)] \uptau^{(n)} (X\pm a_0\vT)  \\
({\rm II}_0)_{n}: \quad & \qquad +[(a_2\pm a_0 | X)]  \uptau^{(n)} (X \pm  a_1\vT) \\
&\qquad +[(a_0\pm a_1 | X)] \uptau^{(n)} (X\pm a_2\vT) =0 ,
\end{split}
\\
\label{II_1}
\begin{split}
& [(a_1\pm a_2 | X)] \uptau^{(n-1)} (X- a_0\vT) \uptau^{(n+1)} (X+ a_0\vT) \\
({\rm II}_1)_{n}: \quad& \qquad+[(a_2\pm a_0 | X)]  \uptau^{(n)} (X\pm a_1\vT)  \\
& \qquad+[(a_0\pm a_1 | X)] \uptau^{(n)} (X\pm  a_2\vT) =0 ,
\end{split}
\\
\label{II_2}
\begin{split}
& [(a_1\pm a_2 | X)] \uptau^{(n-1)} (X- a_0\vT) \uptau^{(n+1)} (X+ a_0\vT) \\
({\rm II}_2)_{n}: \quad & \qquad+[(a_2\pm a_0 | X)]  \uptau^{(n-1)} (X- a_1\vT)\uptau^{(n+1)} (X+ a_1\vT) \\
&\qquad +[(a_0\pm a_1 | X)]  \uptau^{(n)} (X\pm a_2\vT)=0 .
\end{split}
\end{align}

Thus we have decomposed the Hirota equations \eqref{Hirota} for the $\uptau$-function of type $E_8$ on $D$,
into a set of equations for an infinite sequence of $\uptau^{(n)}$-functions satisfying Hirota equations $(II_0)_n$ of type $E_7$ on the $D_n$.

Furthermore, the four identities above are not independent, and 
in fact we can focus on $(\II_1)_n$ only, from which all others can be derived:

\begin{prop}\label{prop.follows}
For $n\in \mathbb{Z}$
we can derive $(\I_1)_{n+\frac{1}{2}},  (\II_0)_{n+1}$ and $(\II_2)_{n}$
from $(\II_1)_n$:
\begin{align}
&(\II_1)_n \Longrightarrow (\I)_{n+\frac{1}{2}} \\
&(\II_1)_n \Longrightarrow (\II_0)_{n+1} \\
&(\II_1)_n \Longrightarrow (\II_2)_{n} 
\end{align}
\end{prop}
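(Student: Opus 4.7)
The plan is to derive each of the three implications from $(\II_1)_n$ purely algebraically, using two tools: the three-term identity for the bracket (Proposition \ref{prop.three-term}) and the $C_3\subset C_8$ enlargement of Remark \ref{rem.enlarge_C3}. A pairing calculation using $(\phi|\phi)=2$ shows that the type of the $C_8$-extension is forced by the type of the $C_3$-frame: type $(\II_0)$, $(\II_1)$, $(\II_2)$ all extend to $C_8$-frames of type $(\II)$, whereas type $(\I)$ extends only to a type-$(\I)$ $C_8$-frame. In each case I would start with a $C_3$-frame of the target type at a point $X$ in the appropriate $D_m$, pass to its $C_8$-extension, locate several auxiliary type-$(\II_1)$ $C_3$-subframes inside, apply $(\II_1)_n$ to each at a carefully shifted base point (so that the resulting $\uptau^{(n\pm1)}$-values line up with those in the target identity), and finally combine the identities with bracket-valued coefficients so that the three-term identity \eqref{three-term} collapses the sum onto the target.

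For $(\II_1)_n \Rightarrow (\II_0)_{n+1}$, I would take a $C_3$-frame $\{b_0, b_1, b_2\}$ of type $(\II_0)$ at $X \in D_{n+1}$, extend it to a type-$(\II)$ $C_8$-frame which adjoins a pair $\{a, a'\}$ with $a+a'=\phi$, and use the three type-$(\II_1)$ subframes $\{a, b_i, b_j\}$. Applying $(\II_1)_n$ at the points $X - a\vT \pm b_k\vT \in D_n$ produces identities involving the required $\uptau^{(n+1)}(X \pm b_i\vT)$ together with unwanted $\uptau^{(n-1)}$ and $\uptau^{(n)}$ terms; a bracket-weighted combination, reshuffled via \eqref{three-term}, should cancel the unwanted contributions. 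The implication $(\II_1)_n \Rightarrow (\II_2)_n$ follows a parallel argument: the target frame $\{a_0,a_1,a_2\}$ satisfies $(\phi|a_0)=(\phi|a_1)=1$, and the type-$(\II)$ extension forces $a_0+a_1=\phi$, so one pairs $a_0$ and $a_1$ successively with $a_2$ and with one of the six $(\phi|\cdot)=0$ vectors in the extension to produce the required type-$(\II_1)$ subframes, the relation $a_0+a_1=\phi$ being what makes shifts by $a_0\vT$ and $a_1\vT$ compatible.

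The subtlest case is $(\II_1)_n \Rightarrow (\I)_{n+1/2}$, because a type-$(\I)$ $C_3$-frame $\{c_0,c_1,c_2\}$ has a unique $C_8$-extension of type $(\I)$, which contains \emph{no} type-$(\II_1)$ subframe at all. Here one must step outside the initial $C_8$-extension and introduce auxiliary roots $a$ with $(\phi|a)=1$ that do not lie in it, each producing a triple $\{a, c_i, c_j\}$ sitting inside a \emph{different}, type-$(\II)$ $C_8$-frame. A judicious combination of the $(\II_1)_n$ identities coming from several such auxiliary roots, glued together by repeated application of \eqref{three-term}, must be arranged to produce $(\I)_{n+1/2}$.

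The main obstacle I anticipate is precisely this last step: one has considerable freedom in choosing the auxiliary roots, and identifying the particular linear combination — including the bracket coefficients and the normalising prefactors $\EXP^{\frac{\pi\ii}{r}(-z+\tilz)}$ built into \eqref{def_bracket} — that reduces cleanly to $(\I)_{n+1/2}$ requires careful bookkeeping. A systematic strategy would be to fix one auxiliary root, derive an intermediate identity from the three $(\II_1)_n$ relations it generates (structurally resembling $(\II_2)_n$ or $(\II_0)_{n+1}$), and then symmetrise over several choices of auxiliary root so that the $W(E_7)$-covariance forces the outcome into the type-$(\I)$ form. This is also the step where matching the exponential prefactors consistently across all applications provides the strongest check on the calculation.
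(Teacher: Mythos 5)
Your strategy for $(\II_1)_n \Rightarrow (\II_2)_n$ coincides with the one proof the paper actually writes out: adjoin a single vector $a_3$ with $(\phi|a_3)=0$, apply $(\II_1)_n$ to the two type-$(\II_1)$ subframes $\{\pm a_0,\pm a_2,\pm a_3\}$ and $\{\pm a_1,\pm a_2,\pm a_3\}$, and let the three-term identity \eqref{three-term} (together with $[-Z]=-[Z]$) kill the residual terms. That part of your plan is sound and essentially identical to the paper's.

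However, your plan for the case you yourself identify as the hardest, $(\II_1)_n \Rightarrow (\I)_{n+\frac{1}{2}}$, rests on objects that do not exist. You propose auxiliary triples $\{a,c_i,c_j\}$ with $(\phi|a)=1$ and $(\phi|c_i)=(\phi|c_j)=\frac{1}{2}$ and claim each sits inside a type-$(\II)$ $C_8$-frame. But Proposition \ref{prop.C3_frame} classifies \emph{all} $C_3$-frames into exactly four types, with pairing patterns $(\frac12,\frac12,\frac12)$, $(0,0,0)$, $(1,0,0)$, $(1,1,0)$; the pattern $(1,\frac12,\frac12)$ does not occur, so no such $C_3$-frame exists. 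One can also see this directly: a type-$(\II)$ $C_8$-frame has all pairings in $\{0,1\}$, so by the uniqueness of the $C_8$-extension no $C_3$-frame containing a vector with pairing $\frac12$ can sit inside one; concretely, taking $c_i=v_1$, $c_j=v_2$, any candidate $a$ with $(a|a)=1$, $2a\in Q(E_8)$ and $(\phi|a)=1$ has the form $\frac12(v_{k_1}+v_{k_2}+v_{k_3}+v_{k_4})$, for which $a\pm v_1\notin Q(E_8)$. So the gluing step you describe cannot even be set up, and a different mechanism is required (for instance, working at points shifted along the $c_i\vT$ directions with genuine type-$(\II_1)$ or type-$(\II_2)$ frames built from roots such as $c_i\pm c_j$, which do have pairings $1$ and $0$ with $\phi$ — this is the route taken in \cite[Appendix A]{Noumi} and \cite[Section 3]{Masuda}, to which the paper defers). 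A secondary concern: in your $(\II_0)_{n+1}$ sketch, applying $(\II_1)_n$ at $X-a\vT\pm b_k\vT$ produces $\uptau^{(n+1)}(X+b_k\vT)$ and $\uptau^{(n+1)}(X-b_k\vT)$ only in \emph{separate} identities, each multiplied by a different $\uptau^{(n-1)}$ factor, so recovering the product $\uptau^{(n+1)}(X\pm b_k\vT)$ by a linear bracket-weighted combination is not possible as stated; this step also needs to be reworked.
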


\begin{proof} 
Let us here prove only the first statement ($(\II_1)_n \Longrightarrow (\II_2)_{n})$, since the argument is similar for other cases
(see also \cite[Appendix A]{Noumi} and \cite[section 3]{Masuda}).

We wish to show $(\II_2)_n$ for a $C_3$-frame of type $(\II_1)$,
namely the set $\{ \pm a_0, \pm a_1, \pm a_2 \}$ satisfying
$(\II_2):\, (\phi|a_0)=(\phi|a_1)=1, \quad (\phi|a_2)=0$.
We can choose one more element $a_3$ from the root lattice,
such that $(\phi|a_3)=0$ (see Remark \ref{rem.enlarge_C3}). Then the sets
$\{ \pm a_0, \pm a_2, \pm a_3 \}$ and $\{ \pm a_1, \pm a_2, \pm a_3 \}$
are $C_3$-frames of type $(\II_1)$ respectively.
From the assumption of $(\II_1)_n$ 
we have
\begin{align}
& [(a_2\pm a_3 | X)] \uptau^{(n-1)} (X- a_i\vT) \uptau^{(n+1)} (X+ a_i\vT) \nonumber \\
& \qquad+[(a_3\pm a_i | X)]  \uptau^{(n)} (X\pm a_2\vT) +[(a_i\pm a_2 | X)] \uptau^{(n)} (X\pm  a_3\vT) =0 . \nonumber \\
\end{align}
for $i=0,1$. We can use these equations to compute
\begin{align}
\begin{split}
& [(a_1\pm a_2 | X)] \uptau^{(n-1)} (X- a_0\vT) \uptau^{(n+1)} (X+ a_0\vT) \\
&\qquad +[(a_2\pm a_0 | X)]  \uptau^{(n-1)} (X- a_1\vT)\uptau^{(n+1)} (X+ a_1\vT) \\
& \qquad+[(a_0\pm a_1 | X)]  \uptau^{(n)} (X\pm a_2\vT) \\
=
&- \frac{[(a_1\pm a_2 | X)]}{[(a_2\pm a_3 | X)]}  \left( [(a_3\pm a_0 | X)]  \uptau^{(n)} (X\pm a_2\vT)+[(a_0\pm a_2 | X)] \uptau^{(n)} (X\pm  a_3\vT) \right)  \\
&- \frac{[(a_2\pm a_0 | X)]}{[(a_2\pm a_3 | X)]}  \left( [(a_3\pm a_1 | X)]  \uptau^{(n)} (X\pm a_2\vT)+[(a_1\pm a_2 | X)] \uptau^{(n)} (X\pm  a_3\vT) \right)  \\
& +[(a_0\pm a_1 | X)]  \uptau^{(n)} (X\pm a_2\vT)\\
=
& -\frac{\uptau^{(n)} (X\pm a_2\vT)}{[(a_2\pm a_3 | X)]}
\left(
[(a_1\pm a_2 | X)] [(a_3\pm a_0 | X)]  +
[(a_2\pm a_0 | X)]   [(a_3\pm a_1 | X)] \right. \\
& \qquad \qquad\qquad\qquad\qquad\left.
-[(a_0\pm a_1 | X)][(a_2\pm a_3 | X)]
\right) .
\end{split}
\end{align}
This vanishes thanks to the three-term identity \eqref{three-term}. We have therefore proven $(\II_2)_{n}$.  The cases of $(\I)_{n+\frac{1}{2}}$, and  $(\II_0)_{n+1}$ are similar.
\end{proof}

\section{\texorpdfstring{Hypergeometric $\uptau$-Function}{Hypergeometric tau-Function}}\label{sec.main}

\subsection{Main Theorem}

In this section we give explicit lens elliptic hypergeometric solutions for the  $E_8$ $\uptau$-function on $D$,
as an infinite sequence of $E_7$ $\uptau^{(n)}$-functions on $D_n$.


\begin{definition}
A $\uptau$-function on $D$, with $\uptau^{(n)}=0$ for $n<0$, is called hypergeometric.
\end{definition}

The hypergeometric solution may be expressed in either a determinant form, and a multi-dimensional sum/integral form, and the latter two forms are equivalent to each other.

To state our main theorem it will be convenient to first define some additional functions.  
First we define the function $\psi_{ij}^{(n)}(X)$ $(X\in D_n)$, as 
\begin{align}
&\psi^{(n)}_{ij}(X):=\psi\left(X+v_{ij}^{(n)}\vT \right), \label{psi_n}\\
&v_{ij}^{(n)}:=(1-n)a_0+(n+1-i-j)a_1+(j-i)a_2, \label{v_n}
\end{align}
where $\{ \pm a_0, \pm a_1, \pm a_2 \}$ is a $C_3$-frame (which we fix for the moment),
and $\psi$ is given in terms of the elliptic hypergeometric sum/integral \eqref{ebsi2}, as 
\begin{align}
\psi(X)=I(\tilde{x},\tilde{\tilx}; \sigma,\tau),
\end{align}
where the transformed variables $\tilde{x}$ and $\tilde{\tilx}$ are defined in \eqref{tilde}.

Next, we define a function $\mathscr{G}^{(n)}$ in terms of the lens triple gamma functions \eqref{Gam3def}, by
\begin{align}
\begin{split}
\mathscr{G}^{(n)}(X)& := \hspace{-0.2cm}
\prod_{\substack{{0\le i \le 3 } \\ { 4\le j \le 7}}} \left\{\Gt_{\sigma}\left((1-n) \tau+ x_i+x_j, 1-n +\tilx_i+\tilx_j ; \sigma, \tau, \tau\right)\right. \\[-0.4cm]
&\qquad\phantom{\prod_{\genfrac{}{}{0pt}{1}{0\le i<j \le 3 }{\textrm{or } 4\le i<j \le 7}}}\times\left.\left(\gt_\tau(x_i+x_j;,\tilx_i+\tilx_j;\sigma,\tau,\mu)\right)^n\right\} \\[-0.3cm]
&\times \prod_{\substack{{0\le i<j \le 3 } \\ { \textrm{or } 4\le i<j \le 7}}} \left\{\Gt_\sigma(\tau+ x_i+x_j,1+ \tilx_i+\tilx_j ; \sigma, \tau, \tau)\right. \\[-0.4cm]
&\qquad\phantom{\prod_{\genfrac{}{}{0pt}{1}{0\le i<j \le 3 }{\textrm{or } 4\le i<j \le 7}}}\times\left.\left(\gt_\tau(x_i+x_j+\mu;,\tilx_i+\tilx_j;\sigma,\tau,\mu)\right)^n\right\} .
\end{split}
\label{G_def}
\end{align}

We define a function $d^{(n)}$ by 
\begin{align}
\begin{split}
d^{(n)}(X)&:=\EXP^{\frac{4\pi\ii}{r}(\tau-1)\binom{n}{3}}\EXP^{\frac{2\pi\ii}{r}(\sigma+\tau-x_0-x_1+\tilx_0+\tilx_1)\binom{n}{2}} \\
&\times\prod_{k=1}^n  \Big[ \theta_\sigma(x_0+x_3+(1-n)\tau,\tilx_0+\tilx_3+(1-n))_{k-1} \\[-0.2cm]
&\qquad\quad\theta_\sigma(x_1+x_2+(1-n)\tau,\tilx_1+\tilx_2+(1-n))_{k-1} \\
&\qquad\quad\theta_\sigma(x_0-x_3-(k-1)\tau,\tilx_0-\tilx_3-(k-1))_{n-k} \\
&\qquad\quad\theta_\sigma(x_1-x_2-(k-1)\tau,\tilx_1-\tilx_2-(k-1))_{n-k} \Big],
\end{split}
\label{d_def}
\end{align}
where 
\begin{align}\label{theta_k}
\theta_\sigma(x,\tilx)_k:=
\prod_{j=0}^{k-1} \theta_\sigma(x+j\tau,\tilx+j ), \quad (k=0,1,\dots),
\end{align}
and $\theta_\sigma$ is a lens theta function defined in \eqref{lthtdef}.

We also define $e^{(n)}$ by
\begin{align}
e^{(n)}(X)& :=\EXP^{\frac{2\pi\ii}{r}(\sigma+1) \binom{n}{2}} e^{-n Q(X)} , \label{e_def}
\end{align}
where
\begin{align}
Q(X)&:=\frac{2\pi \ii}{r}\left( \frac{1}{2 \tau}(x|x)- \frac{1}{2}(\tilx|\tilx)  \right). \label{Q_def}
\end{align}

Finally, we define a function $g^{(n)}(X)$, as the following combination of the above three functions
\begin{align}
g^{(n)}(X)& :=\frac{e^{(n)}(X)}{d^{(n)}(X)}
\mathscr{G}^{(n)}(X) .
\label{g_def}
\end{align}

We now come to the main theorem of this paper:
\begin{thm} \label{theorem.tau_n}

For a
$C_3$-frame $\{ \pm a_0, \pm a_1, \pm a_2 \}$ of type $\II_1$, the function $\uptau(X)$ on $D=\sqcup_{n\in \mathbb{Z}} D_n$ \eqref{D_H}, defined on each $D_n$ by
\begin{empheq}[box={\mybluebox[7pt]}]{align}
\uptau^{(n)}(X) =  \uptau(X) \big|_{D_n}:=
\begin{cases} g^{(n)}(X) \det \left(\psi_{i,j}^{(n)}(X)\right)_{i,j=1}^n & (n\ge 0), \\
0 &  (n<0) ,
\end{cases}
\label{def_tau_n}
\end{empheq}
is a hypergeometric $\uptau$-function which satisfies the Hirota equations \eqref{Hirota}. 
Moreover each $\uptau^{(n)}$ is invariant under the action of $W(E_7)$.

\end{thm}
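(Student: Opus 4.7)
By Proposition \ref{prop.follows} it suffices to verify the bilinear identity $(\II_1)_n$ for each $n\in\mathbb{Z}$. For $n<0$ all $\uptau^{(n)}$ vanish; the base case $n=0$ reduces to a two-term functional equation on $g^{(0)}$ that is handled by the same prefactor analysis as for general $n\geq 1$. For $n\geq 1$, I would write $\uptau^{(n)}(X)=g^{(n)}(X)\,D^{(n)}(X)$ with $D^{(n)}(X):=\det(\psi_{ij}^{(n)}(X))_{i,j=1}^n$, and split the verification of $(\II_1)_n$ into two parts: (a) a determinantal identity for the $D^{(n)}$, and (b) a bookkeeping of the prefactor ratios $g^{(n\pm 1)}(X\pm a_0\vT)/g^{(n)}(X)$ and $g^{(n)}(X\pm a_k\vT)/g^{(n)}(X)$ for $k=1,2$.

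For (a), the definition \eqref{v_n} implies that $v_{ij}^{(n+1)}+a_0=v_{ij}^{(n)}+a_1$ and $v_{ij}^{(n-1)}-a_0=v_{ij}^{(n)}-a_1$. Consequently the four determinants $D^{(n+1)}(X+a_0\vT)$, $D^{(n-1)}(X-a_0\vT)$, $D^{(n)}(X+a_1\vT)$ and $D^{(n)}(X-a_1\vT)$ all have entries of the form $\psi(X+(v_{ij}^{(n)}\pm a_1)\vT)$, and fit together as the full determinant, the doubly-bordered minor, and two diagonal corner $n\times n$ minors of a single $(n+1)\times(n+1)$ matrix $M(X)$. The two off-diagonal $n\times n$ corner minors of $M(X)$ account for $D^{(n)}(X\pm a_2\vT)$ once one applies the contiguity relation \eqref{contrel} for the triple $\{a_0,a_1,a_2\}$, which exchanges $\psi$-shifts in the $a_0$-direction for bracket-weighted combinations of shifts in the $a_1$- and $a_2$-directions. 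The Jacobi--Desnanot (Lewis Carroll) identity for $M(X)$ then yields a three-term bilinear relation among its minors which, after dividing through by the prefactor analysis of (b), is precisely $(\II_1)_n$.

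The main technical obstacle is (b). One must check that the ratios of $g^{(n)}$'s conspire with the bracket coefficients produced in (a) to reproduce exactly $[(a_i\pm a_j|X)]$. This requires: (i) applying the recurrence \eqref{shift_Gamma3} to shifts of the triple-gamma factor $\mathscr{G}^{(n)}$, converting them into lens elliptic gamma factors $\Gamma(\cdot,\cdot)$ which cancel pairwise across the identity; (ii) using Proposition \ref{prop.theta}(3) on the definition \eqref{d_def} to collapse ratios of $d^{(n)}$ into products of $\theta_\sigma$-factors that will become the bracket values; and (iii) collecting all $\EXP^{2\pi\ii(\cdot)/r}$ prefactors generated by the bracket \eqref{def_bracket}, the $e^{(n)}$-factor \eqref{e_def}, the lens theta normalization \eqref{lthtnorm}, and the lens elliptic gamma normalization \eqref{ellnorm}. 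The binomial exponents $\binom{n}{2}$ and $\binom{n}{3}$ appearing in $d^{(n)}$ and $e^{(n)}$ are tuned precisely as the two-cocycles needed to absorb the non-additive parts of these exponential shifts; the balancing $\sum x_i=2(n\tau+\sigma)$, $\sum\smb{x}_i=2(n+r-1)$ on $D_n$ is essential for all these cancellations to close. This accounting is elementary but is the heart of the argument.

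The $W(E_7)$-invariance of each $\uptau^{(n)}$ is verified on generators. Permutations within each quartet $\{0,1,2,3\}$ and $\{4,5,6,7\}$, as well as their simultaneous swap, act manifestly symmetrically on $\psi$, $\mathscr{G}^{(n)}$, $d^{(n)}$ and $e^{(n)}$. The only nontrivial generator is the reflection \eqref{tilde} of Proposition \ref{prop.e7trans}; on $D_n$ the balancing condition \eqref{balanc} is satisfied with $(\sigma,r)$ replaced by $(n\tau+\sigma,n+r-1)$, so Proposition \ref{prop.e7trans} shows that $\psi$ transforms by the product $\prod\Gamma(x_i+x_j,\smb{x}_i+\smb{x}_j)$ (over the index pairs appearing in \eqref{trans3}). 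The recurrence \eqref{shift_Gamma3} applied to $\mathscr{G}^{(n)}$ produces precisely the inverse of this product under the same reflection, and the remaining factors in $g^{(n)}$ are symmetric under this reflection by inspection of \eqref{d_def}--\eqref{e_def}. Hence the transformation of the determinant $D^{(n)}$ and of the prefactor $g^{(n)}$ cancel, and $\uptau^{(n)}$ is $W(E_7)$-invariant.
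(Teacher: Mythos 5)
Your overall architecture --- factor $\uptau^{(n)}=g^{(n)}\det(\psi^{(n)}_{ij})$, prove ratio lemmas for the prefactor $g^{(n)}$, and reduce $(\II_1)_{n\ge 1}$ to the Desnanot--Jacobi (Lewis Carroll) identity for an $(n+1)\times(n+1)$ Casorati matrix --- matches the paper's. But there is a genuine gap at the boundary $n=0$. Proposition \ref{prop.follows} yields $(\I)_{n+\frac12}$, $(\II_0)_{n+1}$, $(\II_2)_n$ only from a \emph{non-degenerate} $(\II_1)_n$; since $\uptau^{(-1)}=0$, the identity $(\II_1)_0$ collapses to a two-term relation among values of $\uptau^{(0)}$ alone, and so cannot imply $(\I)_{1/2}$, which couples $\uptau^{(0)}$ to $\uptau^{(1)}=g^{(1)}\psi$ and therefore involves $\psi$ non-trivially. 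This is exactly the point where the hypergeometric nature of $\psi$ must enter: after dividing out the $g$-factors, $(\I)_{1/2}$ becomes the three-term contiguity relation \eqref{contrel} for the sum/integral $I$. Your plan never invokes \eqref{contrel} for this purpose; if your argument closed as written, it would establish the Hirota equations for an \emph{arbitrary} kernel $\psi$, which is false. (The paper proves $(\II_2)_0$, $(\II_0)_0$ and $(\I)_{1/2}$ separately, the last by precisely this contiguity argument.) Relatedly, you invoke \eqref{contrel} in the wrong place: the off-diagonal corner minors of the matrix with entries $\psi\bigl(X+(v^{(n+1)}_{ij}+a_0)\vT\bigr)$ are \emph{literally} the determinants $D^{(n)}(X\pm a_2\vT)$, because $v^{(n+1)}_{i,j+1}+a_0=v^{(n)}_{ij}+a_2$ and $v^{(n+1)}_{i+1,j}+a_0=v^{(n)}_{ij}-a_2$; no contiguity relation is needed there.

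A secondary concern is your $W(E_7)$-invariance argument for $n\ge 2$. The reflection $w_0$ does not commute with the shifts by $v^{(n)}_{ij}\vT$ (for the frame \eqref{II1_frame} one has $w_0(a_0)=\tfrac12(v_4+v_5+v_6+v_7)\ne a_0$), so applying Proposition \ref{prop.e7trans} entrywise produces $(i,j)$-dependent gamma-function factors, and showing that the determinant transforms by a single overall factor cancelling against $g^{(n)}$ is not the routine bookkeeping you describe. The paper avoids this entirely: it verifies invariance directly only for $\uptau^{(0)}$ and $\uptau^{(1)}$, and then propagates it to all $n$ using the recursion $(\II_1)_n$ together with the uniqueness of the hypergeometric $\uptau$-function with given $\uptau^{(0)},\uptau^{(1)}$ (Proposition \ref{rem.recursion}). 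Adopting that route both repairs this step and makes transparent that the analytic content of the theorem is concentrated in the $n=0,1$ data, i.e.\ in the contiguity relation \eqref{contrel} and the $W(E_7)$ transformation \eqref{trans3} of the sum/integral.
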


\begin{rem}
The definition of the $\uptau$-function in Theorem \ref{theorem.tau_n},
a priori depends on the choice of the $C_3$-frame  of type $\II_1$. However, the $W(E_7)$-invariance,
together with the fact that any two $C_3$-frames  of type $\II_1$ are related by an element of $W(E_7)$ 
(Proposition \ref{prop.C3_frame}), means that the $\uptau$-function is actually independent of such a choice.
\end{rem}

The $n=0$, and $n=1$ cases of Theorem \ref{theorem.tau_n} are explicitly given  by:
\begin{align}
\uptau^{(0)}(X)&=g^{(0)}(X)=\prod_{0\le i<j \le 7} \Gt_\sigma(\tau+ x_i+x_j, 1+ \tilx_i+\tilx_j ; \sigma, \tau, \tau)  , \label{tau_0} 
\end{align}
and
\begin{align}
\begin{split}
&\uptau^{(1)}(X)=g^{(1)}(X) \,\psi(X)\\
&=e^{-Q(X)}I(\tilde{x},\tilde{\tilx},\sigma,\tau) \prod_{\substack{{0\le i \le 3 } \\ { 4\le j \le 7}}} \Gt_{\sigma\tau}(x_i+x_j, \tilx_i+\tilx_j ; \sigma, \tau, \tau,\mu) \\
& \times \hspace{-0.1cm} \prod_{\substack{0\le i<j \le 3 \\ \textrm{ or } 4\le i<j \le 7}} \hspace{-0.1cm} \Gt_\sigma( \tau+x_i+x_j,1+ \tilx_i+\tilx_j ; \sigma, \tau, \tau)\, \gt_\tau(x_i+x_j+\mu;,\tilx_i+\tilx_j;\sigma,\tau,\mu) 
  \\
&=e^{-Q(X)}I(x, \tilx; \sigma, \tau) \prod_{0\le i<j\le 7} \Gt_{\sigma\tau}( x_i+x_j, \tilx_i+\tilx_j ; \sigma, \tau, \tau,\mu) ,
\end{split}
\label{tau_1}
\end{align}
where in the last line we have used the transformation \eqref{trans3}.  This last equality of \eqref{tau_1}  gives a manifestly $\mathfrak{S}_8$-symmetric expression for $\tau^{(1)}$.

\begin{prop} \label{rem.recursion}
The $\uptau$-function given by Theorem \ref{theorem.tau_n} is non-zero for all $n\geq 0$.
It is also the unique hypergeometric $\uptau$-function for the initial conditions \eqref{tau_0}, \eqref{tau_1}.
\end{prop}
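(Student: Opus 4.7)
The plan is to prove both claims by simultaneous induction on $n$, exploiting Proposition \ref{prop.follows} to reduce all four families of Hirota equations to the $(\II_1)_n$ family alone. Rearranging $(\II_1)_n$ for $n\ge 1$ gives the explicit three-term recursion
\[
\uptau^{(n+1)}(X+a_0\vT)\,\uptau^{(n-1)}(X-a_0\vT) = -\frac{[(a_2\pm a_0\,|\,X)]\,\uptau^{(n)}(X\pm a_1\vT)+[(a_0\pm a_1\,|\,X)]\,\uptau^{(n)}(X\pm a_2\vT)}{[(a_1\pm a_2\,|\,X)]},
\]
which expresses $\uptau^{(n+1)}$ in terms of $\uptau^{(n)}$ and $\uptau^{(n-1)}$ wherever $\uptau^{(n-1)}(X-a_0\vT)$ and the bracket factor in the denominator do not vanish.

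For the uniqueness claim, I would argue as follows. Assume inductively that $\uptau^{(0)},\ldots,\uptau^{(n)}$ are determined and that $\uptau^{(n-1)}$ is not identically zero on $D_{n-1}$. The recursion then pins down $\uptau^{(n+1)}(X+a_0\vT)$ for generic $X\in D_n$, and by varying $X$ as well as the choice of $C_3$-frame of type $\II_1$, the shifted point $X+a_0\vT$ sweeps out all of $D_{n+1}$. Extending by meromorphy fixes $\uptau^{(n+1)}$ on the exceptional locus as well. Starting from the fixed initial data $\uptau^{(0)}$, $\uptau^{(1)}$ of \eqref{tau_0}, \eqref{tau_1} then yields uniqueness of the entire sequence $\{\uptau^{(n)}\}_{n\ge 0}$.

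For the non-vanishing claim, I would show by induction that $\uptau^{(n)}$ defined by \eqref{def_tau_n} is not identically zero on $D_n$. The prefactor $g^{(n)}(X)$, built from the lens triple gamma functions \eqref{Gam3def}, the lens theta functions, and exponentials of the quadratic form $Q(X)$, is manifestly not identically zero: its zeros and poles are confined to explicit hyperplane divisors coming from the arguments of the elementary factors. For the determinant $\det\bigl(\psi_{i,j}^{(n)}(X)\bigr)_{i,j=1}^n$ of elliptic hypergeometric sum/integrals, non-vanishing can be established by a limiting argument---either the $r=1$ reduction, in which $\psi$ degenerates to the elliptic hypergeometric integral used in \cite{Noumi} and the determinant reduces to one whose non-vanishing was already established there, or a further trigonometric/rational degeneration in which $I(x,\smb{x};\sigma,\tau)$ becomes explicitly computable. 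Analyticity of the sum/integral \eqref{ebsi2} in $\sigma,\tau$ and the external data then propagates non-vanishing from the degenerate limit back to the generic lens elliptic setting.

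The hardest part will be rigorously controlling the determinant $\det(\psi_{i,j}^{(n)}(X))$ for arbitrary $n$. While each individual factor is transparent, the cancellations inside an $n\times n$ determinant of multi-dimensional lens elliptic sum/integrals are subtle, and a clean proof likely requires combining a careful degeneration argument with the contiguity relation \eqref{contrel} and the $W(E_7)$-invariance asserted in Theorem \ref{theorem.tau_n}, so that the determinant can be reduced to a manageable base case where non-vanishing is checked directly.
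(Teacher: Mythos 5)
Your uniqueness argument is essentially the paper's: rearrange $(\II_1)_n$ to determine $\uptau^{(n+1)}$ from $\uptau^{(n)}$ and $\uptau^{(n-1)}$ (this requires $\uptau^{(n-1)}\not\equiv 0$, which you correctly flag), so the sequence is pinned down by $\uptau^{(0)}$ and $\uptau^{(1)}$. That half is fine.

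The non-vanishing half has a genuine gap. You propose to show directly that $\det\bigl(\psi_{i,j}^{(n)}(X)\bigr)$ is not identically zero by degenerating --- either to $r=1$ (Noumi's case) or to a trigonometric/rational limit --- and then ``propagating'' non-vanishing back by analyticity. Two problems: first, $r$ is a discrete parameter, so there is no analytic continuation from $r=1$ to general $r$; the $r=1$ reduction tells you nothing about $r\geq 2$. Second, the trigonometric/rational degeneration of an $n\times n$ determinant of multi-dimensional lens elliptic sum/integrals is exactly the ``hardest part'' you acknowledge but do not carry out, so as written this is a strategy sketch rather than a proof. The paper avoids the determinant entirely: it argues by contradiction using the already-established Hirota identity $(\II_2)_n$. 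If $\uptau^{(n+1)}\equiv 0$ for some $n>0$, then in \eqref{II_2} the two terms containing $\uptau^{(n+1)}$ drop out, leaving $[(a_0\pm a_1\,|\,X)]\,\uptau^{(n)}(X\pm a_2\vT)=0$; since the bracket is not identically zero on $D_n$ for $n\geq 1$, this forces $\uptau^{(n)}\equiv 0$, and iterating the descent reaches $\uptau^{(1)}\equiv 0$, contradicting the explicit closed form \eqref{tau_1}. The crucial structural point you missed is that $(\II_2)$ --- not $(\II_1)$ --- is the identity in which \emph{two} of the three terms carry $\uptau^{(n+1)}$, which is what makes the clean downward induction possible and reduces all of the non-vanishing to the single, manifestly non-trivial object $\uptau^{(1)}$. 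I would recommend replacing your degeneration argument with this descent.
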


\begin{proof}

Suppose that $\uptau^{(n+1)}\equiv 0$ for some $n>0$.
Then from the recurrence relation \eqref{II_2}
we find $\uptau^{(n)} (X\pm a_2\vT)=0$. By repeating this recurrence we arrive at
$\uptau^{(1)}=0$, which contradicts the expression for $\uptau^{(1)}$ given in \eqref{tau_1}. 
Therefore $\uptau^{(n)}\ne 0$ for all $n\geq 0$.

Next, the recurrence relation \eqref{II_1} means that we can recursively determine $\uptau^{(n+1)}$ $(n>1)$  from
$\uptau^{(n)}$ and $\uptau^{(n-1)}$.
This in turn means that the $\uptau$-function in Theorem \ref{theorem.tau_n} is 
a unique hypergeometric function with given 
$\uptau^{(0)}$ and $\uptau^{(1)}$ (\it{cf.}\ Theorem 4.2 of \cite{Noumi}). 
\end{proof}

\subsection{Multi-Dimensional Sum/Integral Expression}
In the definition \eqref{def_tau_n},
the $\uptau$-function $\uptau^{(n)}$
was expressed in terms of an $n\times n$ determinant. The $\uptau$-function also has
the following equivalent expression given in terms of an $n$-dimensional 
elliptic hypergeometric sum/integral:

\begin{thm}\label{theorem.tauint}
The $\uptau$-function
$\uptau^{(n)}$ on  $D_n$, can be written in terms of a multi-dimensional elliptic hypergeometric sum/integral, as
\begin{align}
\label{tau_n_product}
\tau^{(n)}(X)=g^{(n)}(X) \, d^{(n)}(X) \, I_n(X)= e^{(n)}(X) \, \mathscr{G}^{(n)}(X) \, I_n(X),
\end{align}
where
\begin{align}
\label{tau_n_integral}
I_n(X)=\frac{\lambda^n}{n!}\sum_{\smb{z}_1,\ldots,\smb{z}_n=0}^{r-1}\int_{[0,1]^n}\prod_{1\leq i<j\leq n}\theta_\sigma(\pm z_i\pm z_j,\pm \hat{\smb{z}}_i\pm \hat{\smb{z}}_j)\prod_{k=1}^n\Delta(z_k, \hat{\smb{z}}_k; \tilde{x}, \tilde{\tilx})dz_k,
\end{align}
$\Delta(z,\smb{z}; \tilde{x}, \tilde{\tilx})$ is the integrand defined in \eqref{integrand}
with transformed variables $(\tilde{x}, \tilde{\tilx})$ \eqref{tilde}, $\lambda$ is given by \eqref{lambda_def}, and 
\begin{align}
\hat{\smb{z}_i}: =\smb{z}_i+(r+(r+1\bmod2))(\tilde{\smb{x}}_1\bmod 1),\quad i=1,\ldots,n.
\label{hat_def}
\end{align}

\end{thm}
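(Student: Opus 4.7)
The plan is to begin from the determinant form \eqref{def_tau_n} of $\uptau^{(n)}$, and rewrite each matrix entry $\psi_{ij}^{(n)}(X) = \psi(X + v_{ij}^{(n)} \vT)$ as a one-dimensional sum/integral whose integrand is the "base" integrand $\Delta(z,\smb{z};\tilde{x},\tilde{\tilx})$ times a theta-function prefactor depending on $(i,j)$. Since $\psi(X)=I(\tilde{x},\tilde{\tilx};\sigma,\tau)$ and the affine transformation $x\mapsto\tilde{x}$ is linear, the lattice shift $X\mapsto X+v_{ij}^{(n)}\vT$ induces a shift $\tilde{x}_k \mapsto \tilde{x}_k + w_k^{(ij)}\tau$ (and the corresponding integer shift on $\tilde{\tilx}_k$) for some integer vector $w^{(ij)}$ determined by the $C_3$-frame of type $\II_1$. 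Applying the recurrence relation \eqref{Gamma_ratio_2} to the eight factors of the lens elliptic gamma function in the integrand of $\psi$ then expresses every shifted $\Gamma$ as the product of an unshifted $\Gamma$ with Pochhammer-type products $\theta_\sigma(\cdot\,)_k$ of the form appearing in \eqref{d_def}. This yields
\begin{align*}
\psi_{ij}^{(n)}(X) = \frac{\lambda}{2}\sum_{\smb{z}=0}^{r-1}\int_{[0,1]} dz\, P_{ij}(z,\smb{z};X)\,\Delta(z,\smb{z};\tilde{x},\tilde{\tilx}),
\end{align*}
where $P_{ij}(z,\smb{z};X)$ is an explicit product of lens theta functions.

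Next I would expand the determinant as a sum over $S_n$, introducing $n$ independent pairs of variables $(z_k,\smb{z}_k)$:
\begin{align*}
\det\!\left(\psi_{ij}^{(n)}(X)\right)_{i,j=1}^n = \left(\frac{\lambda}{2}\right)^n \sum_{\smb{z}_1,\dots,\smb{z}_n=0}^{r-1}\int_{[0,1]^n}\det\!\left(P_{ij}(z_j,\smb{z}_j;X)\right)_{i,j=1}^n \prod_{k=1}^n \Delta(z_k,\smb{z}_k;\tilde{x},\tilde{\tilx})\,dz_k.
\end{align*}
The central step is to evaluate the Vandermonde-like determinant $\det(P_{ij}(z_j,\smb{z}_j;X))_{i,j=1}^n$. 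By a lens version of the elliptic Frobenius/Warnaar--Rosengren determinant identity, this determinant should factor as
\begin{align*}
\det\!\left(P_{ij}(z_j,\smb{z}_j;X)\right)_{i,j=1}^n = d^{(n)}(X)\cdot\prod_{1\le i<j\le n}\theta_\sigma(\pm z_i\pm z_j,\pm\hat{\smb{z}}_i\pm\hat{\smb{z}}_j),
\end{align*}
where the $X$-dependent constant is precisely $d^{(n)}(X)$ of \eqref{d_def}. Finally, symmetrization over the pairs $(z_k,\smb{z}_k)$ (which is legitimate because both $\prod_k\Delta$ and the Vandermonde product are $S_n$-symmetric) produces the combinatorial factor $\tfrac{1}{n!}\,2^n$ matching \eqref{tau_n_integral}, yielding $\det(\psi_{ij}^{(n)})=d^{(n)}(X)\,I_n(X)$. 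Combining with the prefactor $g^{(n)}(X)$ gives the first equality in \eqref{tau_n_product}, and the second equality follows immediately from the definition $g^{(n)}=e^{(n)}\mathscr{G}^{(n)}/d^{(n)}$ in \eqref{g_def}.

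The main obstacle is the determinant evaluation in the displayed equation above. This is a lens generalisation of the elliptic Frobenius determinant, and its verification is the only nonroutine analytical step: one must check that both sides, viewed as functions of $(z_1,\dots,z_n)$, have the same quasi-periodicity under the shifts $z_k\mapsto z_k+r\tau$ and $z_k\mapsto z_k+r\sigma$, the same reflection symmetry $z_k\mapsto -z_k$, the same zeros (forced by the antisymmetry of the determinant in the $(z_k,\smb{z}_k)$), and moreover the same periodicity in the discrete variables $\smb{z}_k\in\{0,\dots,r-1\}$. Once these match the divisor of the right hand side, their ratio is a constant in each $z_k$, and comparison at a single special point (for example $z_k$ chosen so that most $P_{ij}$ vanish) fixes that constant to be $d^{(n)}(X)$. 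The remaining work --- tracking the accumulated theta-function normalisations and verifying that they assemble exactly into \eqref{d_def} --- is bookkeeping that follows from Proposition \ref{prop.theta} and \eqref{Gamma_ratio_2}.
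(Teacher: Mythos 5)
Your overall strategy matches the paper's (determinant form $\to$ integral representation of each entry $\to$ multi-dimensional sum/integral $\to$ determinant evaluation), but the central step --- the displayed evaluation of $\det\bigl(P_{ij}(z_j,\smb{z}_j;X)\bigr)_{i,j=1}^n$ as $d^{(n)}(X)\prod_{i<j}\theta_\sigma(\pm z_i\pm z_j,\pm\hat{\smb{z}}_i\pm\hat{\smb{z}}_j)$ --- is false as stated, and the divisor/quasi-periodicity argument you propose to verify it cannot succeed. The right-hand side is a symmetric function of the pairs $(z_k,\smb{z}_k)$, whereas the left-hand side is not: transposing $z_1\leftrightarrow z_2$ does not permute columns of the matrix $\bigl(P_{ij}(z_j)\bigr)$, because the functions $P_{i1},P_{i2}$ themselves differ. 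So the two sides have different transformation behaviour under $S_n$ and cannot agree identically. What actually makes the argument work --- and what your proposal never uses --- is that the theta prefactor separates in the matrix indices, $P_{ij}(z,\smb{z})=f_i(z,\smb{z})\,g_j(z,\smb{z})$ with $f_i=\theta_\sigma(\tilde{x}_0\pm z,\cdot)_{n-i}\,\theta_\sigma(\tilde{x}_3\pm z,\cdot)_{i-1}$ and $g_j=\theta_\sigma(\tilde{x}_1\pm z,\cdot)_{n-j}\,\theta_\sigma(\tilde{x}_2\pm z,\cdot)_{j-1}$. This rank-one structure in the indices is what permits the Andr\'eief (Cauchy--Binet) identity
\begin{align*}
\det\Bigl(\int f_i g_j \,d\mu\Bigr)_{i,j=1}^n=\frac{1}{n!}\int \det\bigl(f_i(z_j)\bigr)\det\bigl(g_i(z_j)\bigr)\prod_{k=1}^n d\mu(z_k),
\end{align*}
and it is only the \emph{product} of the two separate determinants (each evaluated by Lemma \ref{lem.det}) that assembles into the full four-sign Vandermonde $\prod_{i<j}\theta_\sigma(\pm z_i\pm z_j,\cdot)$ together with $d^{(n)}(X)$; each single determinant contributes only a two-sign half of it plus exponential prefactors. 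Equivalently: the symmetrization you invoke at the end must be performed \emph{before} any factorization, and after symmetrizing one obtains $\frac{1}{n!}\det(f_i(z_j))\det(g_i(z_j))$, not a single Frobenius-type determinant.

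Concretely, to repair the proof you should: (i) fix the $C_3$-frame \eqref{II1_frame} explicitly and use \eqref{Gamma_ratio_2} to show $\psi_{ij}^{(n)}(X)=\lambda\sum_{\smb{z}}\int dz\,\Delta(z,\hat{\smb{z}};\tilde{x},\tilde{\tilx})\,f_i(z,\hat{\smb{z}})\,g_j(z,\hat{\smb{z}})$ with $f_i,g_j$ as above; (ii) apply Andr\'eief to get the $\frac{\lambda^n}{n!}$-normalized $n$-fold sum/integral of $\det(f_i(z_j))\det(g_i(z_j))\prod_k\Delta$; (iii) apply Lemma \ref{lem.det} twice, once with $(x_1,x_2)=(\tilde{x}_3,\tilde{x}_0)$ and once with $(\tilde{x}_2,\tilde{x}_1)$, and use \eqref{thtinv} to combine the two half-Vandermondes and the exponential factors into $\prod_{i<j}\theta_\sigma(\pm z_i\pm z_j,\pm\hat{\smb{z}}_i\pm\hat{\smb{z}}_j)$, with the residual $X$-dependent prefactors matching \eqref{d_tilde}. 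Your final observation that the second equality in \eqref{tau_n_product} follows from \eqref{g_def} is correct and requires no change.
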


Note that \eqref{hat_def} takes the following values,
\begin{align}
\hat{\smb{z} }_i=
\begin{cases}
\smb{z}_i&  (\tilde{\smb{x}} \in \mathbb{Z}^8),\\
\smb{z}_i+\frac{r}{2}&  (\tilde{\smb{x}} \in \left(\mathbb{Z}+\frac{1}{2}\right)^8, \, r \textrm{ odd}), \\
\smb{z}_i+\frac{r+1}{2} & (\tilde{\smb{x}} \in \left(\mathbb{Z}+\frac{1}{2}\right)^8, \, r \textrm{ even}).
\end{cases}
\end{align}

To prove Theorem \ref{theorem.tauint}, we will use the following $r\geq1$ analogue of Warnaar's elliptic Krattenthaler determinant formula \cite{MR2093076}.

\begin{lem}\label{lem.det}
For complex $z_i$, $i=1,\ldots,n$, integer $\smb{z}_i$, $i=1,\ldots,n$, complex parameters $x_1,x_2$, and integer parameters $\smb{x}_1,\smb{x}_2$,
\begin{align}
\begin{split}
&\det\left(\theta_\sigma(x_1\pm z_i,\smb{x}_1\pm \smb{z}_i)_{j-1}\theta_\sigma(x_2\pm z_i,\smb{x}_2\pm \smb{z}_i)_{n-j}\right)^n_{i,j=1} \\
& \qquad =\EXP^{\frac{2\pi\ii}{r}(\tau-1) \binom{n}{3}}\EXP^{\frac{2\pi\ii}{r}(x_1-\smb{x}_1)\binom{n}{2}}\\
&\qquad \times\prod_{k=1}^n\theta_\sigma(x_2\pm(x_1+(k-1)\tau),\smb{x}_2\pm(\smb{x}_1+(k-1)))_{n-k} \\
&\qquad\times\prod_{1\leq i<j\leq n}\EXP^{-\frac{2\pi\ii}{r}(z_i-\smb{z}_i)}\theta_\sigma(z_i\pm z_j,\smb{z}_i\pm \smb{z}_j).
\end{split}
\end{align}
\end{lem}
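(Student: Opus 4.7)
My approach is to adapt Warnaar's proof \cite{MR2093076} of the elliptic Krattenthaler determinant formula to the lens setting, proceeding by induction on $n$. The base case $n=1$ is immediate: the single $1\times 1$ entry is $\theta_\sigma(\cdot)_0\,\theta_\sigma(\cdot)_0=1$, which matches the right-hand side since $\binom{1}{2}=\binom{1}{3}=0$, the Pochhammer product reduces to $\theta_\sigma(\cdot)_0=1$, and the Vandermonde-like product over $1\le i<j\le 1$ is empty.

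For the induction step, I would apply a theta-function uniqueness argument. Denote the left-hand side by $D_n=D_n(z_n,\smb{z}_n)$, viewed as a function of the last pair with all other variables held fixed. If $(z_n,\smb{z}_n)=\pm(z_i,\smb{z}_i)$ for some $i<n$, then by the reflection identity \eqref{thtinv}, rows $n$ and $i$ of the matrix become proportional, so $D_n$ vanishes; this accounts for the $2(n-1)$ zeros in $\prod_{i<n}\theta_\sigma(z_n\pm z_i,\smb{z}_n\pm \smb{z}_i)$. Under the shifts $(z_n,\smb{z}_n)\mapsto(z_n+r\sigma,\smb{z}_n)$ and $(z_n,\smb{z}_n)\mapsto(z_n+\sigma,\smb{z}_n-1)$, equations \eqref{thtshft}--\eqref{thtshft2} show that the entries of row $n$ all acquire explicit exponential multipliers, and a short computation shows that $D_n$ has the quasi-periodic behaviour of a product of $2(n-1)$ lens theta functions (up to the characteristic lens correction $\EXP^{-\frac{2\pi\ii}{r}(n-1)(z_n-\smb{z}_n)}$). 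Theta-function uniqueness then forces
\[
D_n(z_n,\smb{z}_n) = C\prod_{i=1}^{n-1}\theta_\sigma(z_n\pm z_i,\smb{z}_n\pm \smb{z}_i),
\]
where $C$ absorbs the $\EXP^{-\frac{2\pi\ii}{r}(z_i-\smb{z}_i)}$ factors that distinguish the lens case from the $r=1$ case and is independent of $(z_n,\smb{z}_n)$.

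To identify $C$, I would specialize $(z_n,\smb{z}_n)=(x_1+(n-1)\tau,\smb{x}_1+(n-1))$. Under this choice, the recurrence \eqref{thtshft} ensures that the factor $\theta_\sigma(x_1-z_n,\smb{x}_1-\smb{z}_n)_{j-1}$ in the $(n,j)$-entry vanishes for all $j\ge 2$, so only the $j=1$ entry in the last row survives. The determinant then collapses to the product of this surviving entry with an $(n-1)\times(n-1)$ minor of the same Krattenthaler shape, but with the parameter $(x_1,\smb{x}_1)$ effectively replaced by $(x_1+\tau,\smb{x}_1+1)$ and with $(z_n,\smb{z}_n)$ removed. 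The induction hypothesis applies to this minor, and after iterating $n$ times the $(x_1,\smb{x}_1)$-shift by $(\tau,1)$, the cumulative $\theta_\sigma$-factors assemble into the product $\prod_{k=1}^n\theta_\sigma(x_2\pm(x_1+(k-1)\tau),\smb{x}_2\pm(\smb{x}_1+(k-1)))_{n-k}$ of the statement.

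The principal technical obstacle is the careful accounting of the exponential prefactors that distinguish the lens theta function from the regular elliptic theta function. Each application of \eqref{thtshft} or \eqref{thtshft2} produces an exponential whose argument is bilinear in the continuous and discrete variables, and the binomial coefficients $\binom{n}{3}$ and $\binom{n}{2}$ appearing in the target indicate a quadratic-in-$n$ accumulation that must telescope cleanly across the induction. As a consistency check one could instead use the lens three-term relation \eqref{tht13term}, with its own $\EXP^{\frac{2\pi\ii}{r}(x_k-\smb{x}_k)}$ weights, to perform a column reduction in the spirit of Krattenthaler, giving the same recursion from a different angle. In both approaches the theta-function content of the argument parallels the $r=1$ case of \cite{MR2093076} essentially verbatim, and the genuinely new work is the bookkeeping of the lens exponentials.
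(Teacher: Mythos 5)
Your strategy is sound and would yield a complete proof, but it takes a genuinely different route from the paper, which disposes of the lemma in one line by reducing it to Warnaar's $r=1$ identity: since $\theta_\sigma(z,\smb{z})=\EXP^{\phi_\sigma(z,\smb{z})}\,\theta(\EXP^{2\pi\ii(z+\sigma\smb{z})}\,|\,\EXP^{2\pi\ii r\sigma})$ is an ordinary theta function in the combined variable $z+\sigma\smb{z}$ with nome $\EXP^{2\pi\ii r\sigma}$ up to an explicit exponential, the lens determinant identity \emph{is} Warnaar's identity in the variables $z_i+\sigma\smb{z}_i$, $x_a+\sigma\smb{x}_a$, and the only work is checking that the normalisation exponentials $\EXP^{\phi_\sigma}$ on both sides assemble into the stated prefactors $\EXP^{\frac{2\pi\ii}{r}(\tau-1)\binom{n}{3}}\EXP^{\frac{2\pi\ii}{r}(x_1-\smb{x}_1)\binom{n}{2}}\prod\EXP^{-\frac{2\pi\ii}{r}(z_i-\smb{z}_i)}$. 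You instead re-run Warnaar's argument (zeros, quasi-periodicity, ellipticity of the ratio, specialisation at $(z_n,\smb{z}_n)=(x_1+(n-1)\tau,\smb{x}_1+(n-1))$) directly on the lens objects; I have checked that the multipliers under $(z_n,\smb{z}_n)\mapsto(z_n+\sigma,\smb{z}_n-1)$ and $z_n\mapsto z_n+r\sigma$ do match between the determinant and your comparison function, \emph{provided} the latter is taken in the ordering $\prod_{i<n}\theta_\sigma(z_i\pm z_n,\smb{z}_i\pm\smb{z}_n)$ appearing in the statement, or equivalently with the correction $\EXP^{-\frac{2\pi\ii}{r}(n-1)(z_n-\smb{z}_n)}$ you flag — this ordering matters because \eqref{thtinv} makes $\theta_\sigma$ odd only up to an exponential. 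The trade-off: the paper's reduction does the exponential bookkeeping once, algebraically, while your route is self-contained but threads that bookkeeping through every step of the induction (in particular the telescoping of the entry $\theta_\sigma(x_2\pm(x_1+(n-1)\tau))_{n-1}$ against the denominator $\prod_{i<n}\theta_\sigma(z_i\pm z_n)$ and the row factors $\prod_i\theta_\sigma(x_1\pm z_i)$ in the constant-identification step is left implicit and is where most of the remaining labour lies).
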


\begin{proof}
This follows from the analogous identity with the regular theta functions for $r=1$ given in \cite{MR2093076}.
\end{proof}

\begin{proof}[Proof of Theorem \ref{theorem.tauint}] 

Let us compute the determinant of $\psi_{ij}^{(n)}(X)=\psi(X+v_{ij}^{(n)}\vT)$.
We choose a $C_3$-frame of type $\II_1$ as
\begin{align}
\begin{split}
a_0 &=\frac{1}{2}(v_0+v_1+v_2+v_3) ,\\
a_1 &=\frac{1}{2}(v_0+v_1-v_2-v_3) ,\\
a_2 &=\frac{1}{2}(v_0-v_1+v_2-v_3) ,
\end{split}
\label{II1_frame}
\end{align}
so that 
\begin{align}
v_{ij}^{(n)}=(1-i)v_0+(1-j)v_1+(j-n)v_2+(i-n)v_3 .
\end{align}
The dependence on $i,j$, shifts the variable $X=(x,\smb{x})$,
and when converted into the transformed variables
$\tilde{X}=(\tilde{x}, \tilde{\tilx})$ \eqref{tilde}, 
this amounts to the shift 
$\tilde{X} \to \tilde{X} + \tilde{v}_{ij}^{(n)} T$
with
\begin{align}
\tilde{v}_{ij}^{(n)}=v_{ij}^{(n)} + (n-1) \sum_{i=0}^3 v_i =(n-i)v_0+(n-j)v_1+(j-1)v_2+(i-1)v_3 .
\end{align}

From the definition \eqref{psi_n}
it follows that 
\begin{align}
\label{psicalc}
\psi_{ij}^{(n)}(X)=\psi(X+v_{ij}^{(n)}\vT)=\lambda\sum_{\smb{z}=0}^{r-1}\int_{[0,1]}dz\, \Delta(z, \hat{\smb{z}} ; \tilde{x}, \tilde{\tilx})f_i(z, \hat{\smb{z}})g_j(z, \hat{\smb{z}}),
\end{align}
where
\begin{align}
f_i(z,\smb{z})&:=\theta_\sigma(\tilde{x}_0\pm z,\tilx_0\pm \smb{z})_{n-i}\theta_\sigma(\tilde{x}_3\pm z,\tilx_3\pm \smb{z})_{i-1}, \\
g_j(z,\smb{z})&:=\theta_\sigma(\tilde{x}_1\pm z, \tilx_1\pm \smb{z})_{n-j}\theta_\sigma(\tilde{x}_2\pm z, \tilx_2\pm \smb{z})_{j-1},
\end{align}
for $i,j=1,2,\ldots, n$, where $\theta_\sigma(z,\smb{z})_k$ and $\lambda$, are defined in \eqref{theta_k}, and \eqref{lambda_def},
respectively.

The determinant of $\psi^{(n)}_{ij}(x)$ \eqref{psicalc}, may be written as
\begin{align}
\begin{split}
&\det(\psi^{(n)}_{ij}(x))^n_{i,j=1} \\
&=\frac{\lambda^n}{n!}\sum_{\smb{z}_1,\ldots,\smb{z}_n=0}^{r-1}\int_{[0,1]^n}\hspace{-0.3cm}\det(f_i(z_j, \hat{\smb{z}}_j))_{i,j=1}^n\det(g_i(z_j, \hat{\smb{z}}_j))_{i,j=1}^n
\prod_{k=1}^n\Delta(z_k,\hat{\smb{z}}_k; \tilde{x}, \hat{\tilde{\tilx}})dz_k.
\end{split}
\end{align}
This determinant may be evaluated using Lemma \eqref{lem.det}, which results in the expression \eqref{tau_n_product}, where $d^{(n)}(X)$ defined in terms of transformed variables $(\tilde{x}, \tilde{\tilx})$, is given by (equivalent to previous definition given in \eqref{d_def}), 
\begin{align}\label{d_tilde}
\begin{split}
d^{(n)}(X)&=\EXP^{\frac{4\pi\ii}{r} (\tau-1) \binom{n}{3}}\EXP^{\frac{2\pi\ii}{r}(\tilde{x}_0+\tilde{x}_1-\tilde{\tilx}_0-\tilde{\tilx}_1)\binom{n}{2}} \\
&\qquad\times\prod_{k=1}^n\theta_\sigma(\tilde{x}_0\pm(\tilde{x}_3+(k-1)\tau),\tilde{\tilx}_0\pm(\tilde{\tilx}_3+(k-1)))_{n-k} \\
&\qquad \qquad \quad  \theta_\sigma(\tilde{x}_1\pm(\tilde{x}_2+(k-1)\tau),\tilde{\tilx}_1\pm(\tilde{\tilx}_2+(k-1)))_{n-k}.
\end{split}
\end{align}
Here we have used the properties of the lens theta function $\theta_\sigma$ which appear in \eqref{thtinv} and \eqref{thtshft}.
\end{proof}

\section{Proof of Theorem \ref{theorem.tau_n}}\label{sec.proof}

In this final section we will prove Theorem \ref{theorem.tau_n}.  We begin by proving the $W(E_7)$-invariance of the $\uptau$-function of Theorem \ref{theorem.tau_n}.

\subsection{\texorpdfstring{$W(E_7)$ Invariance}{W(E7) Invariance}}

\begin{prop}\label{prop.tau0_E7}
$\uptau^{(0)}$ is $W(E_7)$-invariant.
\end{prop}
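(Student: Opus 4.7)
The plan is to reduce $W(E_7)$-invariance to invariance under two kinds of generators: the symmetric group $\mathfrak{S}_8$ acting by permutations of the coordinates $\{v_0,\ldots,v_7\}$ (coming from the reflections by the roots $\pm(v_i-v_j)$ listed in \eqref{vectors_2}), together with a single ``spinor'' reflection $s_\alpha$ along a root of the form $\alpha=\frac{1}{2}(v_0+v_1+v_2+v_3-v_4-v_5-v_6-v_7)$. Since $\mathfrak{S}_8$ together with such an $s_\alpha$ generates $W(E_7)$, it suffices to verify invariance under these two pieces separately. The $\mathfrak{S}_8$-invariance is manifest from the product expression \eqref{tau_0}, because the product ranges symmetrically over all unordered pairs $\{i,j\}\subset\{0,\ldots,7\}$.

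For the spinor reflection I would first exploit the defining constraints of $D_0$, namely $\sum_{i=0}^{7} x_i = 2\sigma$ and $\sum_{i=0}^{7} \tilx_i = 2(r-1)$, to compute $(\alpha|x)=\sum_{j=0}^{3}x_j-\sigma$ and $(\alpha|\tilx)=\sum_{j=0}^{3}\tilx_j-(r-1)$. Using $s_\alpha(X)=X-(\alpha|X)\alpha$, a short calculation then shows that pair sums transform as
\begin{align*}
x_i+x_j \;\longmapsto\; \sigma - x_k - x_l, \qquad \tilx_i+\tilx_j \;\longmapsto\; (r-1) - \tilx_k - \tilx_l,
\end{align*}
whenever $\{i,j,k,l\}=\{0,1,2,3\}$ (or analogously $\{4,5,6,7\}$), while for mixed pairs $i\in\{0,1,2,3\}$, $j\in\{4,5,6,7\}$ the combinations $x_i+x_j$ and $\tilx_i+\tilx_j$ are fixed. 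Thus the 16 ``mixed'' factors in \eqref{tau_0} are individually preserved, while within each of the two groups of six factors the reflection acts by pair-complementation.

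It remains to check that the transformed factor $\Gt_\sigma\bigl(\tau+\sigma-x_k-x_l,\,r-\tilx_k-\tilx_l;\sigma,\tau,\tau\bigr)$ coincides with the factor $\Gt_\sigma\bigl(\tau+x_k+x_l,\,1+\tilx_k+\tilx_l;\sigma,\tau,\tau\bigr)$ already present in the product. This is exactly the content of the inversion identity \eqref{invert_Gamma3} with $\omega=\tau$: setting $z=\tau+x_k+x_l$ and $\smb{z}=1+\tilx_k+\tilx_l$ gives $\sigma+2\tau-z = \tau+\sigma-x_k-x_l$ and $r+1-\smb{z} = r-\tilx_k-\tilx_l$, as required. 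Therefore $s_\alpha$ merely permutes the 28 factors of $\uptau^{(0)}$ and the expression is invariant.

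I do not expect any serious obstacle here, since once the generators of $W(E_7)$ are correctly identified the argument is essentially a combinatorial re-indexing powered by one application of the inversion identity for $\Gt_\sigma$. The only care needed is in the bookkeeping of the discrete component: one must verify that $(\alpha|\tilx)\in\mathbb{Z}$ for $\tilx\in Q(E_8)$ (which is automatic because $\alpha\in Q(E_8)$) and that the discrete shifts match the form $r+1-\smb{z}$ demanded by \eqref{invert_Gamma3}, which is precisely why the constraint $\sum_i\tilx_i=2(r-1)$ (rather than $2r$) appears in the definition \eqref{D_H} of $D_0$.
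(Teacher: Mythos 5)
Your proof is correct and takes essentially the same route as the paper: reduce to $\mathfrak{S}_8$ plus one spinor reflection, observe the manifest permutation symmetry of the product, and apply the inversion identity \eqref{invert_Gamma3} with $\omega=\tau$. Your bookkeeping is in fact slightly more careful than the paper's, since you correctly record that the reflection sends the pair sum $x_i+x_j$ to $\sigma-x_k-x_l$ (the complementary pair), so that the six factors within each block are permuted rather than individually fixed — either way the product is invariant.
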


\begin{proof}

The Weyl group $W(E_7)$ is generated by Weyl reflection with respect to the roots listed in 
\eqref{vectors_2}. The Weyl reflections with respect to the roots
$\pm (v_i - v_j)$, generate the symmetric group $\mathfrak{S}_8$,
under which the $\uptau^{(0)}$ given in \eqref{tau_0} is manifestly invariant.
The remaining roots $\left( \pm v_0 \pm v_1 \pm \dots \pm v_7 \right)/2$
(with four minus signs), are mapped to each other under the symmetric group $\mathfrak{S}_8$,
and we conclude the $W(E_7)$ is generated by $\mathfrak{S}_8$ together with
an extra element $w_0 \in W(E_7)$, representing the Weyl reflection with respect to 
$\left( -v_0 -v_1-v_2-v_3+v_4+v_5+v_6+v_7\right)/2$.

Let us consider $\uptau^{(0)}$.
The Weyl reflection  $w_0 \in W(E_7)$ acts on the coordinate $X=(x, \tilx) \in D_0$  as
\begin{align}
w_0(x_i+x_j)=
\begin{cases} 
\sigma-x_i-x_j & (\{ i,j,k,l \}=\{0,1,2,3\} \textrm{ or } \{4,5,6,7\} ) ,\\
x_i+x_j & (i \in \{0,1,2,3\}, j\in \{4,5,6,7\}) , \\
\end{cases}
\end{align}
for continuous variables
and 
\begin{align}
w_0(\tilx_i+ \tilx_j)=
\begin{cases} 
r-1-\tilx_i-\tilx_j & (\{ i,j,k,l \}=\{0,1,2,3\} \textrm{ or } \{4,5,6,7\} ), \\
\tilx_i+\tilx_j & (i \in \{0,1,2,3\}, j\in \{4,5,6,7\}) , \\
\end{cases}
\end{align}
for discrete variables. Note that the constants $\sigma$, and $r-1$, in the two equations, come from the fact that we are considering the specific hyperplane $D_0\subset H_{\sigma}\times H_{r-1}$.

We  find for $\{ i,j,k,l \}=\{0,1,2,3\} \textrm{ or } \{4,5,6,7\}$, that
\begin{align}
\begin{split}
&w_0\left(\Gt_\sigma(\tau+x_i+x_j, 1+ \tilx_i+\tilx_j ; \sigma, \tau, \tau) \right)\\
&\quad =\Gt_\sigma(\tau+\sigma-x_i-x_j, r- \tilx_i- \tilx_j; \sigma, \tau, \tau) \\
&\quad =\Gt_\sigma(\tau+x_i+x_j , 1+\tilx_i+\tilx_j; \sigma, \tau, \tau),
\end{split}
\end{align}
where in the last line we used \eqref{invert_Gamma3}.
This shows the invariance of $\uptau^{(0)}$ in \eqref{tau_0} under the Weyl reflection $w_0$.
\end{proof}

\begin{prop}
$\uptau^{(1)}$ is $W(E_7)$-invariant.
\end{prop}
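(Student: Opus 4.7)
The plan is to check invariance under each generator of $W(E_7)$. As in the preceding proof for $\uptau^{(0)}$, $W(E_7)$ is generated by the symmetric group $\mathfrak{S}_8$ (permuting $v_0,\dots,v_7$) together with the single additional Weyl reflection $w_0$ associated with $\alpha=\frac{1}{2}(-v_0-v_1-v_2-v_3+v_4+v_5+v_6+v_7)$. The $\mathfrak{S}_8$-invariance is immediate from the manifestly $\mathfrak{S}_8$-symmetric form of $\uptau^{(1)}$ given on the last line of \eqref{tau_1}: each of the three factors $e^{-Q(X)}$, $I(x,\smb{x};\sigma,\tau)$, and $\prod_{0\le i<j\le 7}\Gt_{\sigma\tau}(x_i+x_j,\smb{x}_i+\smb{x}_j;\sigma,\tau,\tau,\mu)$ is obviously invariant under permutations of the eight indices.

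The main task is then to check $w_0$-invariance. I would first observe that, on $D_1$, the constraints $\sum x_i=2(\sigma+\tau)$ and $\sum\smb{x}_i=2r$ force $w_0$ to act on the coordinates exactly as the transformation $(x,\smb{x})\mapsto(\tilde{x},\tilde{\smb{x}})$ of \eqref{tilde}. Proposition \ref{prop.e7trans} then gives
\begin{align*}
I(x,\smb{x})=I(\tilde{x},\tilde{\smb{x}})\prod_{\substack{0\le i<j\le 3\\ \text{or}\ 4\le i<j\le 7}}\Gamma(x_i+x_j,\smb{x}_i+\smb{x}_j),
\end{align*}
so that $\uptau^{(1)}(\tilde X)=\uptau^{(1)}(X)$ will reduce to showing that the product of $\Gt_{\sigma\tau}$ factors transforms under $w_0$ so as to produce exactly these $\Gamma$ prefactors, while the exponential $e^{-Q(X)}$ is itself fixed.

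For the $\Gt_{\sigma\tau}$ product I would split the pairs into two classes. For a mixed pair with $i\in\{0,1,2,3\}$, $j\in\{4,5,6,7\}$, one has $\tilde{x}_i+\tilde{x}_j=x_i+x_j$ and $\tilde{\smb{x}}_i+\tilde{\smb{x}}_j=\smb{x}_i+\smb{x}_j$, so the corresponding factors are unchanged. For a same-half pair $\{i,j\}$, the balancing constraint on $D_1$ yields
\begin{align*}
\tilde{x}_i+\tilde{x}_j=(\sigma+\tau)-(x_k+x_l),\qquad \tilde{\smb{x}}_i+\tilde{\smb{x}}_j=r-(\smb{x}_k+\smb{x}_l),
\end{align*}
where $\{k,l\}$ is the complement of $\{i,j\}$ within the same half. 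Applying the second line of \eqref{shift_Gamma3}, namely $\Gt_{\sigma\tau}(\sigma+\tau-z,r-\smb{z})=\Gamma(z,\smb{z})\,\Gt_{\sigma\tau}(z,\smb{z})$, will then give
\begin{align*}
\Gt_{\sigma\tau}(\tilde{x}_i+\tilde{x}_j,\tilde{\smb{x}}_i+\tilde{\smb{x}}_j)=\Gamma(x_k+x_l,\smb{x}_k+\smb{x}_l)\,\Gt_{\sigma\tau}(x_k+x_l,\smb{x}_k+\smb{x}_l).
\end{align*}
Since $(i,j)\leftrightarrow(k,l)$ is an involution on the three unordered pairs in each four-element half, taking the full product and relabelling shows that the extra $\Gamma$ factors are exactly those appearing in Proposition \ref{prop.e7trans}, and they cancel. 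Finally, $Q(X)$ is built from the Euclidean inner products $(x|x)$ and $(\smb{x}|\smb{x})$, both of which are preserved by any Weyl reflection, so $e^{-Q(\tilde X)}=e^{-Q(X)}$.

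The step I expect to be hardest is the bookkeeping in the third paragraph: verifying the involutive pairing $(i,j)\leftrightarrow(k,l)$ on complementary pairs within each half, and checking that the shifted arguments $(\sigma+\tau-(x_k+x_l),\,r-(\smb{x}_k+\smb{x}_l))$ in the inversion-recurrence for $\Gt_{\sigma\tau}$ reproduce exactly the $\Gamma$ prefactor from the $W(E_7)$-transformation of $I$, with no leftover exponential or sign factor.
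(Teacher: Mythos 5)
Your proposal is correct and follows essentially the same route as the paper: start from the manifestly $\mathfrak{S}_8$-symmetric form on the last line of \eqref{tau_1}, reduce to the single reflection $w_0$, identify its action on $D_1$ with the transformation \eqref{tilde} so that Proposition \ref{prop.e7trans} applies to $I$, use the inversion--recurrence \eqref{shift_Gamma3} on the same-half $\Gt_{\sigma\tau}$ factors, and cancel the resulting $\Gamma$'s, with $Q(X)$ invariant because it is built from the bilinear form. Your explicit handling of the complementary pair $(k,l)$ and the involution on pairs is, if anything, slightly more careful than the paper's displayed formula for $w_0(x_i+x_j)$.
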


\begin{proof}

Let us start with the manifestly $\mathfrak{S}_8$-symmetric
expression for $\uptau^{(1)}$ previously given in 
\eqref{tau_1}.
Since $W(E_7)$ is generated by $w_0$ and $\mathfrak{S}_8$ 
we only need to check invariance under $w_0$,
as in the proof of Proposition \ref{prop.tau0_E7}.
The difference from the proof there is that now we have $X\in D_1\subset H_{\tau+\sigma}\times H_r$.

Since $Q(X)$ \eqref{Q_def} is defined from the $W(E_7)$-invariant
bilinear form, 
we only need to check the $W(E_7)$-invariance of the factors
\begin{align}
\label{Gamma_I}
I(x, \tilx; \sigma, \tau) \prod_{0\le i<j\le 7} \Gt_{\sigma\tau}( x_i+x_j, \tilx_i+\tilx_j ; \sigma, \tau, \tau,\mu)   .
\end{align}

The Weyl reflection  $w_0 \in W(E_7)$ acts on the coordinates $X\in D_1\subset H_{\tau+\sigma}\times H_r$ as
\begin{align}
w_0(x_i)=
\begin{cases}
x_i+\frac{\sigma+\tau}{2}-\frac{1}{2}\sum_{i=0}^3x_i & i \in \{0,1,2,3\}, \\
x_i+\frac{\sigma+\tau}{2}-\frac{1}{2}\sum_{i=4}^7x_i & i \in \{4,5,6,7\},
\end{cases}
\end{align}
\begin{align}
w_0(\tilx_i)=
\begin{cases}
\tilx_i+\frac{r}{2}-\frac{1}{2}\sum_{i=0}^3\tilx_i & i \in \{0,1,2,3\}, \\
\tilx_i+\frac{r}{2}-\frac{1}{2}\sum_{i=4}^7\tilx_i & i \in \{4,5,6,7\}.
\end{cases}
\end{align}
Note that this $w_0$ transformation takes the same form as the transformation rule of $I(x, \tilx; \sigma, \tau)$, given in \eqref{tilde}, 
where the coordinates of $X\in D_1\subset H_{\tau+\sigma}\times H_r$,
also exactly satisfy the balancing condition
\begin{align}
\sum_{i=0}^7 \vt_i=2(\sigma+\tau)\,,\qquad\sum_{i=0}^7 \smb{x}_i=2r.
\end{align}
Thus the action of $w_0$, followed by the transformation \eqref{trans3}, gives
\begin{align}
\label{trans32}
w_0\left(I(x, \tilx; \sigma, \tau)\right)=I(x, \tilx; \sigma, \tau)\hspace{-0.2cm}\prod_{\substack{0\le i<j \le 3 \\ \textrm{ or } 4\le i<j \le 7 }}\left( \Gamma(\vt_i+\vt_j, \smb{\vt}_i+\smb{\vt}_j;\sigma,\tau )\right)^{-1}\hspace{-0.2cm}.
\end{align}
We also have
\begin{align}
w_0(x_i+x_j)=
\begin{cases} 
\sigma+\tau-x_i-x_j & (\{ i,j,k,l \}=\{0,1,2,3\} \textrm{ or } \{4,5,6,7\} ), \\
x_i+x_j & (i \in \{0,1,2,3\}, j\in \{4,5,6,7\}) , 
\end{cases}
\end{align}
and 
\begin{align}
w_0(\tilx_i+ \tilx_j)=
\begin{cases} 
r-\tilx_i-\tilx_j & (\{ i,j,k,l \}=\{0,1,2,3\} \textrm{ or } \{4,5,6,7\} ) ,\\
\tilx_i+\tilx_j & (i \in \{0,1,2,3\}, j\in \{4,5,6,7\}) , 
\end{cases}
\end{align}
which for $i,j\in\{0,1,2,3\}$ or $i,j\in\{4,5,6,7\}$, results in
\begin{align}
\begin{split}
&w_0 \left( \Gt_{\sigma\tau}({\vt}_i+{\vt}_j,{\smb{x}}_i+{\smb{x}}_j;\sigma,\tau,\tau,\mu) \right)\\
&\qquad=\Gt_{\sigma\tau}(\sigma+\tau-(\vt_k+\vt_l),r-(\smb{x}_k+\smb{x}_l);\sigma,\tau,\tau,\mu) \\
&\qquad=\Gt_{\sigma\tau}(\vt_k+\vt_l,\smb{x}_k+\smb{x}_l;\sigma,\tau,\tau,\mu)\,\Gamma(\vt_k+\vt_l,\smb{x}_k+\smb{x}_l;\sigma,\tau),
\end{split}
\end{align}
where in the last line we have used \eqref{shift_Gamma3}.
The contributions of the type $\Gamma(\vt_k+\vt_l,\smb{x}_k+\smb{x}_l;\sigma,\tau)$ in the last line, exactly cancel the
contribution coming from the factors in the product of \eqref{trans32}, and thus \eqref{Gamma_I} is invariant under the Weyl reflection $w_0$.
\end{proof}

\begin{prop}\label{prop.E7}
$\uptau^{(n)}$ is  $W(E_7)$-invariant for $n=0,1, \dots$.
\end{prop}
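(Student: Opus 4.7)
The plan is to reduce the statement for arbitrary $n$ to the two base cases $n=0$ (Proposition \ref{prop.tau0_E7}) and $n=1$ (the preceding proposition), by invoking the uniqueness part of Proposition \ref{rem.recursion}. Fix $w\in W(E_7)$. Because $W(E_7)$ is by definition the stabilizer in $W(E_8)$ of the highest root $\phi$, the element $w$ preserves the hyperplane decomposition \eqref{D_H}, sending each $D_n$ to itself; hence $w\cdot D=D$ and the restriction of $w\cdot\uptau$ to $D_n$ is $(w\cdot\uptau)^{(n)}=w\cdot\uptau^{(n)}$. By the $W(E_8)$-covariance of the Hirota system recorded at the end of Section \ref{sec.tau}, $w\cdot\uptau$ is again a $\uptau$-function on $D$, and it is manifestly hypergeometric, since $(w\cdot\uptau)^{(n)}=w\cdot\uptau^{(n)}=0$ for $n<0$.

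I would next observe that the initial data of $w\cdot\uptau$ coincide with those of $\uptau$, namely $(w\cdot\uptau)^{(0)}=\uptau^{(0)}$ and $(w\cdot\uptau)^{(1)}=\uptau^{(1)}$; these are precisely the contents of Proposition \ref{prop.tau0_E7} and its $n=1$ companion. Proposition \ref{rem.recursion}, which asserts that a hypergeometric $\uptau$-function is determined uniquely by $\uptau^{(0)}$ and $\uptau^{(1)}$ via the recursion $(\II_1)_n$, then forces $w\cdot\uptau=\uptau$ on all of $D$. Restricting to $D_n$ yields $w\cdot\uptau^{(n)}=\uptau^{(n)}$ for every $n\ge 0$, which is exactly the required $W(E_7)$-invariance.

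The only nontrivial ingredient in this argument, and the step I expect to be the main obstacle, is that the explicit $\uptau$ defined in \eqref{def_tau_n} actually satisfies the Hirota system; without this, Proposition \ref{rem.recursion} cannot be applied. This is the other half of Theorem \ref{theorem.tau_n} and is the principal remaining work in Section \ref{sec.proof}; the $W(E_7)$-invariance, by contrast, follows essentially for free once the Hirota equations are in hand.

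As a direct alternative bypassing uniqueness, one could instead perform induction on $n$: solving $(\II_1)_n$ from \eqref{II_1} for $\uptau^{(n+1)}(X\pm a_0\vT)$ expresses it rationally in terms of $\uptau^{(n-1)}(X\pm a_0\vT)$, $\uptau^{(n)}(X\pm a_1\vT)$, $\uptau^{(n)}(X\pm a_2\vT)$ and brackets of the form $[(a_i\pm a_j|X)]$. Applying $w$ to this expression, the inductive invariance of the lower $\uptau^{(k)}$ for $k\le n$, together with the equivariance $w^{*}[(a|X)]=[(wa|X)]$ of the bracket (a direct consequence of the orthogonality of $w$ with respect to $(-|-)$) and the fact that $W(E_7)$ permutes $C_3$-frames of type $\II_1$ among themselves (Proposition \ref{prop.C3_frame}), combine to give $w\cdot\uptau^{(n+1)}=\uptau^{(n+1)}$, closing the induction.
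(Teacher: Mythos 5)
Your proof is correct and takes essentially the same route as the paper: the paper's own argument for Proposition \ref{prop.E7} simply appeals to the recursive determination of $\uptau^{(n)}$ from $\uptau^{(0)}$ and $\uptau^{(1)}$ via $(\II_1)_n$ (as in Proposition \ref{rem.recursion}) together with the already-established invariance of those two base cases. Your explicit flagging of the dependence on the Hirota system, and the supporting observations about the equivariance of the bracket and the $W(E_7)$-permutation of $\II_1$-frames, only make explicit what the paper leaves implicit.
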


\begin{proof}
As stated in the proof of Proposition \ref{rem.recursion},
$\uptau^{(n)}$ is defined recursively from $\uptau^{(0)}$ and $\uptau^{(1)}$,
which we have shown already to be $W(E_7)$-invariant.
This proves the $W(E_7)$-invariance of $\uptau^{(n)}$.
\end{proof}

\begin{cor}
The product $\mathcal{G}^{(n)}(X) I_n(X)$,
with $\mathcal{G}^{(n)}(X)$ given in \eqref{G_def} and $I_n$
given in \eqref{tau_n_integral}, is $W(E_7)$-invariant.
\end{cor}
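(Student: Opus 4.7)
The plan is to read off the corollary directly from Theorem \ref{theorem.tauint} combined with Proposition \ref{prop.E7}. Recall that Theorem \ref{theorem.tauint} gives the factorisation
\begin{align*}
\uptau^{(n)}(X) = e^{(n)}(X)\, \mathscr{G}^{(n)}(X)\, I_n(X),
\end{align*}
and Proposition \ref{prop.E7} asserts that the left hand side is $W(E_7)$-invariant. Since $e^{(n)}(X)$ is a (nowhere-vanishing) exponential, it suffices to verify that $e^{(n)}(X)$ itself is $W(E_7)$-invariant, for then the product $\mathscr{G}^{(n)}(X)\, I_n(X) = \uptau^{(n)}(X)/e^{(n)}(X)$ inherits the invariance.

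The $W(E_7)$-invariance of $e^{(n)}(X) = \EXP^{\frac{2\pi\ii}{r}(\sigma+1)\binom{n}{2}} e^{-nQ(X)}$ is immediate from the definitions. The prefactor $\EXP^{\frac{2\pi\ii}{r}(\sigma+1)\binom{n}{2}}$ does not depend on $X$ and so is trivially invariant. In the exponent,
\begin{align*}
Q(X)=\frac{2\pi\ii}{r}\left(\frac{1}{2\tau}(x|x)-\frac{1}{2}(\tilx|\tilx)\right)
\end{align*}
is built entirely out of the canonical symmetric bilinear form $(-|-)$ on $\mathbb{C}^8$ and on $Q(E_8)$. Since $W(E_7)\subset W(E_8)$ acts by Euclidean reflections with respect to this bilinear form, the quadratic forms $(x|x)$ and $(\tilx|\tilx)$ are preserved by every element of $W(E_7)$, hence so is $Q(X)$, and so is $e^{(n)}(X)$.

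Combining these two ingredients finishes the proof: the ratio of two $W(E_7)$-invariant functions (with non-vanishing denominator) is $W(E_7)$-invariant. There is no real obstacle here; the corollary is genuinely a corollary, and the only point that deserves mention is that the factor $e^{(n)}(X)$ is nowhere zero, so that the division $\uptau^{(n)}/e^{(n)}$ is well-defined pointwise on $D_n$.
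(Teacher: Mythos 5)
Your proof is correct and follows essentially the same route as the paper, which also deduces the corollary from the factorisation $\uptau^{(n)}=e^{(n)}\mathscr{G}^{(n)}I_n$ of Theorem \ref{theorem.tauint}, the $W(E_7)$-invariance of $\uptau^{(n)}$ (Proposition \ref{prop.E7}), and the invariance of $e^{(n)}(X)$ via the $W(E_7)$-invariant bilinear form in $Q(X)$. Your added remarks about the non-vanishing of $e^{(n)}$ are a harmless elaboration of the same argument.
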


\begin{proof}
This immediately follows from Proposition \eqref{prop.E7},
the expression of $\uptau^{(n)}$ given in \eqref{tau_n_product},
and the $W(E_7)$-invariance of $e^{(n)}(X)$.
\end{proof}

\subsection{Bilinear Identities}

Having proven the $W(E_7)$-invariance of the $\uptau^{(n)}$,
we will now prove that the Hirota equations are satisfied.
For this purpose we start with a few lemmas concerning
the function $g^{(n)}$.

\begin{lem}\label{lem.ggg}
For a $C_3$-frame $\{\pm a_0, \pm a_1, \pm a_2 \}$ of type ${\rm II}_1$, 
we have
\begin{align}
\begin{split}
& g^{(n-1)}(X- a_0 \vT) g^{(n+1)}(X+ a_0 \vT)
=\frac{[(a_0\pm a_2 |X)]}{[(a_1\pm a_2 |X)]}  g^{(n)}(X\pm a_1 \vT), \\
& (X\in D_n, \quad n=1,2, \ldots) .
\end{split}
\label{G_rec}
\end{align}
\end{lem}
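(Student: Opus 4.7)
My plan is to verify \eqref{G_rec} by exploiting the multiplicative decomposition $g^{(n)} = (e^{(n)}/d^{(n)})\,\mathscr{G}^{(n)}$ and working out the three ratios
\begin{align*}
R_e &:= \frac{e^{(n-1)}(X-a_0\vT)\,e^{(n+1)}(X+a_0\vT)}{e^{(n)}(X+a_1\vT)\,e^{(n)}(X-a_1\vT)}, \\
R_d &:= \frac{d^{(n)}(X+a_1\vT)\,d^{(n)}(X-a_1\vT)}{d^{(n-1)}(X-a_0\vT)\,d^{(n+1)}(X+a_0\vT)}, \\
R_{\mathscr{G}} &:= \frac{\mathscr{G}^{(n-1)}(X-a_0\vT)\,\mathscr{G}^{(n+1)}(X+a_0\vT)}{\mathscr{G}^{(n)}(X+a_1\vT)\,\mathscr{G}^{(n)}(X-a_1\vT)}
\end{align*}
separately. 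Since the right-hand side of \eqref{G_rec} involves only lens theta brackets, the triple-gamma contributions from $R_{\mathscr{G}}$ must cancel entirely, and the leftover pieces of $R_{\mathscr{G}}$ must combine with $R_e$ and $R_d$ to reconstruct $[(a_0\pm a_2|X)]/[(a_1\pm a_2|X)]$. By the $W(E_7)$-invariance established in Proposition \ref{prop.E7}, together with the transitivity of the $W(E_7)$-action on $\II_1$-frames (Proposition \ref{prop.C3_frame}), it suffices to verify the identity for the concrete frame \eqref{II1_frame}: $a_0 = (v_0+v_1+v_2+v_3)/2$, $a_1 = (v_0+v_1-v_2-v_3)/2$, $a_2 = (v_0-v_1+v_2-v_3)/2$.

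The first ratio $R_e$ is a direct Gaussian computation: using $Q(X+a\vT) - Q(X) = \frac{2\pi\ii}{r}\bigl((x|a) - (\tilx|a) + \tfrac{\tau-1}{2}(a|a)\bigr)$ together with $(a_i|a_j)=\delta_{ij}$, one expresses $R_e$ as a single explicit exponential phase. The ratio $R_d$ telescopes: each factor $\theta_\sigma(\cdot,\cdot)_k$ is by \eqref{theta_k} a product of $k$ shifted lens thetas, so shifting $n$ and $X$ modifies only a bounded number of factors, the others being identical on both sides; the remaining boundary theta factors are normalized using \eqref{thtshft}.

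The core of the proof is the evaluation of $R_{\mathscr{G}}$. With the chosen $\II_1$-frame, the pair $(x_i+x_j,\tilx_i+\tilx_j)$ is shifted by $(0,0)$, $(\tau/2,1/2)$, or $(\tau,1)$ under $X \to X+a_k\vT$, depending on whether $(i,j)$ lies in one of the same-half blocks $\{0,1,2,3\}^2$ or $\{4,5,6,7\}^2$ versus the mixed block $\{0,1,2,3\}\times\{4,5,6,7\}$. Applying the recurrences \eqref{shift_Gamma3} and \eqref{shift_lgam3} once or twice per factor moves each $\Gt_\sigma$ and each $(\gt_\tau)^n$ appearing in $\mathscr{G}^{(n\pm 1)}$ to a reference configuration common to all four factors. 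The split in \eqref{G_def} between the mixed-block product (whose $\Gt_\sigma$-argument depends on $n$) and the same-half product (whose $\Gt_\sigma$-argument is $n$-independent) is precisely what allows all $\Gt_\sigma$'s and $\Gt_{\sigma\tau}$'s at the reference point to cancel between numerator and denominator, leaving only $\gamma_\sigma$- and $\gt_\tau$-factors together with exponentials of the normalization $\phi_e$.

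The concluding assembly step recombines the surviving $\gamma_\sigma,\gt_\tau,\EXP^{\phi_e}$ pieces into lens elliptic gamma ratios via \eqref{legf2}, and \eqref{Gamma_ratio_1}--\eqref{Gamma_ratio_2} then rewrite these as explicit products of $\theta_\sigma$ and $\theta_\tau$. Combining with $R_e$ and $R_d$ and absorbing the bracket prefactors $\EXP^{\pi\ii(-z+\tilz)/r}$ from \eqref{def_bracket} should yield exactly $[(a_0\pm a_2|X)]/[(a_1\pm a_2|X)]$. The main obstacle is the bookkeeping: for each of the $28$ index pairs $(i,j)$ I must track precisely how many $\tau$-shifts of $\Gt_\sigma$ or $\gt_\tau$ are required in each of the four $g$-terms on the left of \eqref{G_rec}, and then verify that the accumulated $\phi_e$-phases together with the $R_e$-exponential reconstitute exactly the four normalization functions $\phi_\sigma$ hidden inside the four target lens theta brackets, with no residual phase.
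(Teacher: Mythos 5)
Your proposal follows essentially the same route as the paper's own proof: decompose $g^{(n)}$ into the factors $e^{(n)}$, $d^{(n)}$, $\mathscr{G}^{(n)}$, use the $W(E_7)$-covariance argument to reduce to the concrete frame \eqref{II1_frame}, and evaluate each ratio separately, with the surviving $\Gt_\sigma$-cross-terms converted into lens theta functions exactly as in the paper's manipulation \eqref{ratio_simplify} via \eqref{shift_Gamma3}, \eqref{legf2} and \eqref{Gamma_ratio_2}. The bookkeeping you defer is also only carried out for one representative pair in the paper (the others being analogous), so your plan matches the published argument in both structure and level of detail.
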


\begin{proof}

Since each $\uptau^{(n)}$ has manifest $W(E_7)$-symmetry (Proposition \ref{prop.E7}),
it is sufficient to write down the Hirota equations
\eqref{Hirota} for a special example of $C_3$-frame of type $\II_1$.
Let us choose the $C_3$-frame to be as given in \eqref{II1_frame}.

We compute the ratio $g^{(n-1)}(X- a_0 \vT) g^{(n+1)}(X+ a_0 \vT)/g^{(n)}(X\pm a_1 \vT)$
for each factor $e^{(n)}, d^{(n)}, \mathscr{G}^{(n)}$, of the function $g^{(n)}$ \eqref{g_def}.
For $e^{(n)}$ we find from the definition \eqref{e_def}
\begin{align}
\begin{split}
\frac{e^{(n-1)}(X-a_0\vT)e^{(n+1)}(X+a_0\vT)}{e^{(n)}(X\pm a_1\vT)}
&=\EXP^{\frac{2\pi\ii}{r}(\sigma+1)}e^{-(Q(X+a_0\vT)-Q(X-a_0\vT))} \\
&=\EXP^{\frac{2\pi\ii}{r}(\sigma+1- 2 (a_0|x) +2(a_0|\tilx))} \\
&=\EXP^{\frac{2\pi\ii}{r}(\sigma+1-x_0-x_1-x_2-x_3+\tilx_0+\tilx_1+\tilx_2+\tilx_3)} .
\end{split}
\end{align}
For $d^{(n)}$ we compute from the definition \eqref{d_def}, after many cancellations,
\begin{align}
\begin{split}
&\frac{d^{(n-1)}(X-a_0\vT)d^{(n+1)}(X+a_0\vT)}{d^{(n)}(X\pm a_1\vT)}\\
&\qquad=\EXP^{\frac{2\pi\ii}{r}(\sigma+1-x_0-x_1+\smb{x}_0+\smb{x}_1)}\theta_\sigma(X_0\pm X_3)\theta_\sigma(X_1\pm X_2),
\end{split}
\end{align}
Let us next compute the ratio 
\begin{align}
&\frac{\mathscr{G}^{(n-1)}(X-a_0\vT) \mathscr{G}^{(n+1)}(X+a_0\vT)}{\mathscr{G}^{(n)}(X\pm a_1\vT)},
\end{align}
for $\mathscr{G}^{(n)}$.
In this computation
most of the gamma function factors in the definition of $\mathscr{G}^{(n)}$ in \eqref{G_def} cancel out; the exceptions are the 
cross terms involving $\Gt_\sigma$, for $0\le i<j\le 3$ (hence the result is independent of the value of $n$). 
For example, for the term with $i=0, j=2$,
the expression $X\pm a_0 \vT$ gives $(x_0+x_2\pm \tau, \tilx_0+\tilx_2\pm 1)$
while $X\pm a_1 \vT$ gives $(x_0+x_2, \tilx_0+\tilx_2)$ twice, so that we have
\begin{align}
\begin{split}
&\frac{\Gt_\sigma(x_0+x_2+2 \tau, \tilx_0+\tilx_2+2; \sigma, \tau, \tau) \Gt_\sigma(x_0+x_2, \tilx_0+\tilx_2; \sigma, \tau, \tau)}{\Gt_\sigma(x_0+x_2+\tau, \tilx_0+\tilx_2+1; \sigma, \tau, \tau)^2}  \\
&=\frac{\EXP^{\phi_e(x_0+x_2+ \tau, \tilx_0+\tilx_2+ 1; \sigma, \tau, \tau)}}{\EXP^{\phi_e(x_0+x_2, \tilx_0+\tilx_2; \sigma, \tau, \tau)}}\frac{\gamma_\sigma(x_0+x_2+ \tau, \tilx_0+\tilx_2+ 1; \sigma, \tau, \tau)}{\gamma_\sigma(x_0+x_2, \tilx_0+\tilx_2; \sigma, \tau, \tau)}  \\
&=\frac{\Gamma(x_0+x_2+\tau, \tilx_0+\tilx_2+ 1; \sigma, \tau)}{\Gamma(x_0+x_2, \tilx_0+\tilx_2; \sigma, \tau)} 
\\
&=\theta_\sigma(x_0+x_2, \tilx_0+\tilx_2) ,
\end{split}
\label{ratio_simplify}
\end{align}
where we used \eqref{shift_Gamma3}, \eqref{litgamrels}, \eqref{legf2} and then \eqref{Gamma_ratio_2}. 
By repeating this manipulation we obtain
\begin{align}
\begin{split}
&\frac{\mathscr{G}^{(n-1)}(X-a_0\vT) \mathscr{G}^{(n+1)}(X+a_0\vT)}{\mathscr{G}^{(n)}(X\pm a_1\vT)}\\
&\qquad =
\theta_\sigma(X_0+X_2) \theta_\sigma(X_0+X_3 )\theta_\sigma(X_1+X_2 ) \theta_\sigma(X_1+X_3 ) .
\end{split}
\end{align}
Finally by combining  all of the above, we obtain
\begin{align}
\begin{split}
\frac{g^{(n-1)}(X-a_0\vT) g^{(n+1)}(X+a_0\vT)}{g^{(n)}(X\pm a_1\vT)}
&=
\EXP^{\frac{2\pi\ii}{r}(x_2+x_3-\tilx_2-\tilx_3)}
\frac{
\theta_\sigma(X_0+X_2) \theta_\sigma(X_1+X_3)}
{\theta_\sigma(X_0- X_3)\theta_\sigma(X_1- X_2)} \\
&=\frac{[(a_0\pm a_2 |X)]}{[(a_1\pm a_2 |X)]}.
\end{split}
\end{align}
\end{proof}

\begin{lem}\label{lem.gg}
For a $C_3$-frame $\{\pm a_0, \pm a_1, \pm a_2 \}$ of type ${\rm II}_1$, 
we have 
\begin{align}
\begin{split}
\frac{g^{(n)}(X\pm a_1 \vT)}{g^{(n)}(X\pm a_2 \vT)}
=\frac{[(a_0\pm a_1 |X)]}{[(a_0\pm a_2 |X)]} , \quad (X\in D_n),\; n=0,1,2,\dots .
\end{split}
\label{g_ratio}
\end{align}
\end{lem}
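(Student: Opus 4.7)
The plan is to follow the structure of the proof of Lemma~\ref{lem.ggg}: by the $W(E_7)$-invariance of $\uptau^{(n)}$ (Proposition~\ref{prop.E7}) it suffices to verify the identity for the specific $C_3$-frame of type $\II_1$ given in \eqref{II1_frame}. I will factor $g^{(n)}=e^{(n)}\mathscr{G}^{(n)}/d^{(n)}$, compute each of the ratios $e^{(n)}(X\pm a_1\vT)/e^{(n)}(X\pm a_2\vT)$, $d^{(n)}(X\pm a_1\vT)/d^{(n)}(X\pm a_2\vT)$, $\mathscr{G}^{(n)}(X\pm a_1\vT)/\mathscr{G}^{(n)}(X\pm a_2\vT)$ separately, and argue that the first two are identically $1$ so that all of the content sits in the third.

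For the $e^{(n)}$ ratio, the key observation is that $Q(X)$ in \eqref{Q_def} is quadratic and $(a_1|a_1)=(a_2|a_2)=1$, so the combination $Q(X+a_i\vT)+Q(X-a_i\vT)=2Q(X)+\tfrac{2\pi\ii}{r}(\tau-1)$ is independent of $i\in\{1,2\}$ and gives $e^{(n)}(X\pm a_1\vT)=e^{(n)}(X\pm a_2\vT)$. For the $d^{(n)}$ ratio, I will carefully track how the bilinear combinations appearing in \eqref{d_def} transform under the frame \eqref{II1_frame}: $x_0+x_3$ and $x_1+x_2$ are invariant under both $\pm a_1$ and $\pm a_2$; $x_0-x_3$ shifts by $\pm\tau$ under both; $x_1-x_2$ shifts by $\pm\tau$ under $\pm a_1$ and by $\mp\tau$ under $\pm a_2$; and $x_0+x_1$ (appearing in the prefactor) shifts by $\pm\tau$ under $\pm a_1$ but is invariant under $\pm a_2$. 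The crucial point is that once the full $\pm$-symmetrized product is formed, only the \emph{multiset} of shifts enters, and this multiset is $\{+\tau,-\tau\}$ or $\{0,0\}$ in a way that agrees between $\pm a_1$ and $\pm a_2$ for every combination above; hence $d^{(n)}(X\pm a_1\vT)=d^{(n)}(X\pm a_2\vT)$.

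The substantive work is in the $\mathscr{G}^{(n)}$ ratio. Cross-term factors (one index in $\{0,1,2,3\}$, the other in $\{4,\dots,7\}$) cancel because for each such pair the individual $\pm a_i$-shifts always form the multiset $\{+\tau/2,-\tau/2\}$ regardless of $i\in\{1,2\}$; the same-group factors within $\{4,\dots,7\}$ are invariant; and the same-group $\gt_\tau$-factors within $\{0,1,2,3\}$ cancel using the shift symmetry \eqref{shift_lgam3}. What remains are the $\Gt_\sigma$-factors within $\{0,1,2,3\}$: under $\pm a_1$ only the pairs $(0,1)$ and $(2,3)$ receive non-trivial shifts, while under $\pm a_2$ only $(0,2)$ and $(1,3)$ do. The key identity
\begin{align*}
\frac{\Gt_\sigma(z+2\tau,\tilz+2;\sigma,\tau,\tau)\,\Gt_\sigma(z,\tilz;\sigma,\tau,\tau)}{\Gt_\sigma(z+\tau,\tilz+1;\sigma,\tau,\tau)^2}=\theta_\sigma(z,\tilz),
\end{align*}
derivable by iterating \eqref{shift_Gamma3} and invoking \eqref{Gamma_ratio_2} (exactly as in the display \eqref{ratio_simplify} within the proof of Lemma~\ref{lem.ggg}), then produces
\begin{align*}
\frac{\mathscr{G}^{(n)}(X\pm a_1\vT)}{\mathscr{G}^{(n)}(X\pm a_2\vT)}=\frac{\theta_\sigma(x_0+x_1,\tilx_0+\tilx_1)\,\theta_\sigma(x_2+x_3,\tilx_2+\tilx_3)}{\theta_\sigma(x_0+x_2,\tilx_0+\tilx_2)\,\theta_\sigma(x_1+x_3,\tilx_1+\tilx_3)}.
\end{align*}

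To finish I match this expression to the right-hand side $[(a_0\pm a_1|X)]/[(a_0\pm a_2|X)]$. With the frame \eqref{II1_frame} one has $\{a_0+a_1,a_0-a_1\}=\{v_0+v_1,\,v_2+v_3\}$ and $\{a_0+a_2,a_0-a_2\}=\{v_0+v_2,\,v_1+v_3\}$, and evaluating $[(a_0\pm a_j|X)]$ via \eqref{def_bracket} produces the same ratio of $\theta_\sigma$'s together with an exponential prefactor $\EXP^{\frac{\pi\ii}{r}(-\sum_{i=0}^{3}x_i+\sum_{i=0}^{3}\tilx_i)}$ that is common to numerator and denominator and therefore cancels. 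The main obstacle in this plan is the detailed but elementary shift-bookkeeping within $d^{(n)}$ and $\mathscr{G}^{(n)}$; once the ``multiset of shifts'' viewpoint is adopted the cancellations become transparent and only the four pairs $(0,1),(2,3),(0,2),(1,3)$ end up contributing via the key identity above.
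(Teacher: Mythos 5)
Your proof is correct, but it takes a genuinely different route from the paper's for $n\ge 2$. The paper verifies \eqref{g_ratio} directly only for $n=0$ and $n=1$ (where $d^{(n)}\equiv 1$ and the $e^{(n)}$ contribution cancels trivially, so only the same-group $\Gt_\sigma$ factors of $\mathscr{G}^{(n)}$ matter), and then handles $n\ge 2$ by induction on $n$, applying the relation \eqref{G_rec} of Lemma \ref{lem.ggg} twice to express the ratio $g^{(n+1)}(X\pm a_1\vT)/g^{(n+1)}(X\pm a_2\vT)$ in terms of the same ratio at level $n-1$. You instead give a uniform direct computation for all $n$, which requires the additional (but correct) verifications that $e^{(n)}(X\pm a_1\vT)=e^{(n)}(X\pm a_2\vT)$ and $d^{(n)}(X\pm a_1\vT)=d^{(n)}(X\pm a_2\vT)$; your treatment of $\mathscr{G}^{(n)}$ --- cross terms cancelling by matching multisets of $\pm\tau/2$ shifts, the $\gt_\tau$ powers cancelling by \eqref{shift_lgam3}, and the four pairs $(0,1),(2,3),(0,2),(1,3)$ producing the theta ratio via \eqref{ratio_simplify} --- coincides with the paper's $n=0$ computation, and your final matching to the bracket ratio is also what the paper does. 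Your route is self-contained (it does not use Lemma \ref{lem.ggg}) at the price of the shift bookkeeping in $d^{(n)}$; the paper's route trades that bookkeeping for an induction that leans on Lemma \ref{lem.ggg}. One small imprecision: for the prefactor $\EXP^{\frac{2\pi\ii}{r}(\sigma+\tau-x_0-x_1+\tilx_0+\tilx_1)\binom{n}{2}}$ of $d^{(n)}$, the multisets of shifts of $x_0+x_1$ under $\pm a_1$ and $\pm a_2$ are $\{+\tau,-\tau\}$ and $\{0,0\}$ respectively, which do not literally agree; the argument still goes through because this factor is the exponential of a linear form, so only the sum of the two shifts (zero in both cases) matters, unlike for the theta-function factors where the multisets genuinely coincide.
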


\begin{proof}

Let us first consider the case $n=0$.
Let us choose the $C_3$-frame of type $\II_1$ to be \eqref{II1_frame} as before.
In the previous computation of the ratio $g^{(0)}(X\pm a_1 \vT)/g^{(0)}(X\pm a_2 \vT)$
most of the gamma function factors cancels out; the exceptions are the 
cross terms involving $\gt_\sigma$, for $0\le i<j\le 3$. For example, for the term with $i=0, j=1$,
the expression $X\pm a_1 \vT$ gives $(x_0+x_1\pm \tau, \tilx_0+\tilx_1\pm 1)$
while $X\pm a_2 \vT$ gives $(x_0+x_1, \tilx_0+\tilx_1 )$ twice.
We can then appeal to the manipulations \eqref{ratio_simplify},
to obtain
\begin{align}
\begin{split}
\frac{g^{(0)}(X\pm a_1 \vT)}{g^{(0)}(X\pm a_2 \vT)}
&=\frac{\theta_\sigma(x_0+x_1, \tilx_0+\tilx_1; \tau)\, \theta_\sigma(x_2+x_3, \tilx_2+\tilx_3; \tau)}{\theta_\sigma(x_0+x_2 , \tilx_0+\tilx_2; \tau)\, \theta_\sigma(x_1+x_3, \tilx_1+\tilx_3; \tau)} \\
&=\frac{[X_0+X_1][X_2+X_3]}{[X_0+X_2][X_1+X_3]}=\frac{[(a_0\pm a_1 |X)]}{[(a_0\pm a_2 |X)]} .
\end{split}
\end{align}

The case of $n=1$ is similar. We can use the expression for $g^{(1)}$ coming from \eqref{g_def},
\begin{align}
\begin{split}
&g^{(1)}(X) = e^{-Q(X)}
 \prod_{\substack{{0\le i \le 3 } \\ { 4\le j \le 7}}} \Gt_{\sigma\tau}(x_i+x_j, \tilx_i+\tilx_j ; \sigma, \tau, \tau,\mu) \\
& \times \hspace{-0.1cm} \prod_{\substack{0\le i<j \le 3 \\ \textrm{ or } 4\le i<j \le 7}} \hspace{-0.1cm} \Gt_\sigma( \tau+x_i+x_j,1+ \tilx_i+\tilx_j ; \sigma, \tau, \tau)\, \gt_\tau(x_i+x_j+\mu;,\tilx_i+\tilx_j;\sigma,\tau,\mu) .
\end{split}
\end{align}
In the computation of the ratio $g^{(0)}(X\pm a_1 \vT)/g^{(0)}(X\pm a_2 \vT)$ the contribution from the prefactor $e^{-Q(X)}$
cancels out, and the only relevant part
which remains after taking the ratio is 
\begin{align}
 \prod_{0\le i<j \le 3 } \Gt_{\sigma}(\tau+ x_i+x_j, 1+ \tilx_i+\tilx_j ; \sigma, \tau, \tau).
\end{align}
This is exactly the same factor from the definition of $g^{(0)}$ that contributes to the case of $n=0$,
and hence the computation will goes through exactly the same as for the $n=0$ case, to obtain
\eqref{g_ratio} for $n=1$.

For the case $n>1$ we use the induction with respect to the integer $n$.
From \eqref{G_rec} we have
\begin{align}
\begin{split}
g^{(n+1)}(X+ a_0 \vT)
&=\frac{g^{(n)}(X\pm a_1 \vT)}{g^{(n-1)}(X- a_0 \vT) } \frac{[(a_0\pm a_2 |X)]}{[(a_1\pm a_2 |X)]}\\
& =\frac{g^{(n)}(X\pm a_2 \vT)}{g^{(n-1)}(X- a_0 \vT) } \frac{[(a_0\pm a_1 |X)]}{[(a_1\pm a_2 |X)]} ,
\end{split}
\end{align}
where in the last line we used the assumption for $n$, to obtain yet another expression for 
$g^{(n+1)}$. Shifting the value of $X$, and using one of the two expressions above, gives
\begin{align}
\begin{split}
g^{(n+1)}(X\pm a_1 \vT)
&=\frac{g^{(n)}(X-a_0 \vT \pm a_1 \vT\pm a_2 \vT)}{g^{(n-1)}(X- 2 a_0 \vT\pm a_1 \vT) } \frac{[(a_0\pm a_1 |X)- \vT\pm \vT]}{[(a_1\pm a_2 |X)\pm  \vT]} , \\
g^{(n+1)}(X\pm a_2 \vT)
&=\frac{g^{(n)}(X-a_0 \vT\pm a_1 \vT\pm a_2 \vT)}{g^{(n-1)}(X- 2a_0 \vT\pm a_2 \vT) } \frac{[(a_0\pm a_2 |X)- \vT\pm  \vT]}{[(a_1\pm a_2 |X) \pm  \vT]},
\end{split}
\end{align}
where we used $(a_i|a_j)=\delta_{ij}$. We thus obtain 
\begin{align}
\begin{split}
\frac{g^{(n+1)}(X\pm a_1 \vT)}{g^{(n+1)}(X\pm a_2 \vT)} 
& =\frac{g^{(n-1)}(X- 2a_0 \vT\pm a_2 \vT)}{g^{(n-1)}(X- 2 a_0 \vT\pm a_1 \vT)}
\frac{[(a_0\pm a_1 |X)- \vT\pm \vT]}{[(a_0\pm a_2 |X)- \vT\pm  \vT]} \\
& =\frac{g^{(n-1)}(X- 2a_0 \vT\pm a_2 \vT)}{g^{(n-1)}(X- 2 a_0 \vT\pm a_1 \vT)}
\frac{[(a_0\pm a_1 |X)- 2 \vT]}{[(a_0\pm a_2 |X)- 2 \vT]} 
 \frac{[(a_0\pm a_1 |X)]}{[(a_0\pm a_2 |X)]} \\
& =  \frac{[(a_0\pm a_1 |X)]}{[(a_0\pm a_2 |X)]} ,
\end{split}
\end{align}
where in the last line we used the assumption for $n-1$.  This is what we wanted to show.
\end{proof}

We finally come to the proof that the $\uptau^{(n)}$-functions satisfy the desired bilinear identities. 
We first prove the special cases $({\rm II}_2)_0$, $({\rm II}_1)_0$, $({\rm II}_0)_0$, $(\I)_{1/2}$.
We then prove $(\II_1)_{n=1,2, \dots}$, from which the remaining cases will follow, by Proposition \ref{prop.follows}.

\begin{prop}
$({\rm II}_2)_0$ holds. 
\end{prop}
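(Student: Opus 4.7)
The plan is to reduce $({\rm II}_2)_0$ to a single identity for the bracket $[\cdot]$, using the hypergeometric assumption and the structure of type $\II_2$ $C_3$-frames established in Remark~\ref{rem.enlarge_C3}.

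First I would specialize the equation $({\rm II}_2)_n$ displayed in \eqref{II_2} to $n=0$. Because our $\uptau$-function is hypergeometric, $\uptau^{(-1)}\equiv 0$, so both of the bilinear terms involving $\uptau^{(-1)}(X-a_0 \vT)$ and $\uptau^{(-1)}(X-a_1 \vT)$ vanish automatically. Thus $({\rm II}_2)_0$ reduces to the single statement
\begin{align*}
[(a_0\pm a_1 | X)]\,\uptau^{(0)}(X\pm a_2 \vT)=0 \quad (X\in D_0).
\end{align*}
Since $\uptau^{(0)}$ is in general nonzero (by \eqref{tau_0} and Proposition \ref{rem.recursion}), the claim is that the prefactor $[(a_0+a_1|X)][(a_0-a_1|X)]$ already vanishes on $D_0$, independently of the second factor.

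Second, I would identify the relevant vector $a_0+a_1$ using Remark~\ref{rem.enlarge_C3}. By that remark, any $C_3$-frame is contained in a unique $C_8$-frame, which is of type $\I$ or $\II$. Because a type $\II_2$ $C_3$-frame has $(\phi|a_0)=(\phi|a_1)=1$, whereas any type $\I$ $C_8$-frame would force $(\phi|a_i)=\tfrac12$, the enveloping $C_8$-frame must be of type $\II$. Inside a $C_8$-frame of type $\II$ exactly two elements pair with $\phi$ to give $1$, and these must coincide with our $a_0,a_1$. Remark~\ref{rem.enlarge_C3} then yields $a_0+a_1=\phi$. (One can also verify this directly: $(a_0+a_1|a_0+a_1)=2=(\phi|\phi)$ and $(\phi|a_0+a_1)=2=(\phi|\phi)$, so $a_0+a_1-\phi$ has zero norm.)

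Third, I would evaluate the bracket on $D_0$. By the definition of $D_0$ in \eqref{D_H}, every $X=(x,\tilx)\in D_0$ satisfies $(\phi|x)=\sigma$ and $(\phi|\tilx)=r-1$, hence $(a_0+a_1|X)=(\phi|X)=(\sigma,r-1)$. Using \eqref{def_bracket} and \eqref{lthtdef}, evaluating $\theta_\sigma(\sigma,r-1;\sigma,\tau)$ reduces to evaluating the regular theta at argument $\EXP^{2\pi\ii\sigma}\EXP^{2\pi\ii\sigma(r-1)}=\EXP^{2\pi\ii\sigma r}=q_\sigma$ with nome $q_\sigma=\EXP^{2\pi\ii\sigma r}$. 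Then $\theta(q_\sigma\,|\,q_\sigma)=(q_\sigma;q_\sigma)_\infty(1;q_\sigma)_\infty=0$ because of the vanishing factor $1-1$ in the second Pochhammer. Therefore $[(a_0+a_1|X)]=0$, and the whole expression in \eqref{II_2} vanishes on $D_0$.

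The step I would expect to require the most care is the second one—verifying that for any $C_3$-frame of type $\II_2$ the relation $a_0+a_1=\phi$ is forced (rather than $a_0+a_1=-\phi$, or some sign ambiguity obstructing the vanishing). This is addressed by observing that even under relabeling by sign, one of the four combinations $\pm a_0\pm a_1$ equals $\phi$, so at least one of the two factors in $[(a_0\pm a_1|X)]=[(a_0+a_1|X)][(a_0-a_1|X)]$ still carries the zero of $\theta_\sigma$ at $(\sigma,r-1)$; no further input is required.
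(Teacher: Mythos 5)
Your proposal is correct and follows essentially the same route as the paper: reduce $({\rm II}_2)_0$ to $[(a_0\pm a_1|X)]\,\uptau^{(0)}(X\pm a_2\vT)=0$, invoke $a_0+a_1=\phi$ from Remark~\ref{rem.enlarge_C3}, and observe that $[(\phi|X)]$ vanishes on $D_0$ because the lens theta function has a zero at $(\sigma,r-1)$ (the paper writes the equivalent point $(\sigma,-1)$ using $r$-periodicity). Your extra verification that $a_0+a_1=\phi$ via the norm computation, and your discussion of the sign ambiguity, are sound but not needed beyond what the cited remark already provides.
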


\begin{proof}
The identity $({\rm II}_2)_0$ reads
\begin{align}
[(a_0 \pm a_1| X)] \tau^{(0)}(X\pm a_2)=0,
\end{align}
for a $C_3$-frame $\{\pm a_0, \pm a_1, \pm a_2 \}$
with $(\phi|a_0)=(\phi|a_1)=1, (\phi|a_2)=0$ (recall \eqref{phi_a}). 
Since $a_0+a_1=\phi$ (see Remark \ref{rem.enlarge_C3}),
one finds for $X\in D_0$ that
\begin{align}
[(a_0 \pm a_1| X)] =[(\phi|X)]=[(\sigma, -1)]=e^{\phi_{\sigma}(\sigma, -1)} \theta(1|\, e^{2\pi \ii \sigma r}) =0 .
\end{align}
\end{proof}

\begin{prop}
$({\rm II}_1)_0$ holds. 
\end{prop}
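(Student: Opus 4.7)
The plan is to reduce $(\II_1)_0$ to the ratio identity already established in Lemma \ref{lem.gg}. First I would use the hypergeometric condition $\uptau^{(n)} \equiv 0$ for $n<0$ to kill the first term of $(\II_1)_0$: since $\uptau^{(-1)} \equiv 0$, the term
\[
[(a_1\pm a_2|X)]\,\uptau^{(-1)}(X-a_0\vT)\,\uptau^{(1)}(X+a_0\vT)
\]
vanishes trivially. The claim therefore reduces to showing, for $X\in D_0$,
\[
[(a_2\pm a_0|X)]\,\uptau^{(0)}(X\pm a_1\vT) + [(a_0\pm a_1|X)]\,\uptau^{(0)}(X\pm a_2\vT) = 0.
\]

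Next, I would note that for $n=0$ the determinant in \eqref{def_tau_n} is an empty determinant equal to $1$, so $\uptau^{(0)}(X) = g^{(0)}(X)$. The desired identity is then purely a statement about $g^{(0)}$. Applying Lemma \ref{lem.gg} at $n=0$ and cross-multiplying yields
\[
[(a_0\pm a_2|X)]\,g^{(0)}(X\pm a_1\vT) - [(a_0\pm a_1|X)]\,g^{(0)}(X\pm a_2\vT) = 0.
\]

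Finally, I would align the sign structure using the reflection property $[-Z]=-[Z]$ from Proposition \ref{prop.three-term}. Since $[(a_2+a_0|X)] = [(a_0+a_2|X)]$ while $[(a_2-a_0|X)] = -[(a_0-a_2|X)]$, the composite shorthand satisfies
\[
[(a_2\pm a_0|X)] = -[(a_0\pm a_2|X)].
\]
Substituting into the preceding equation and multiplying by $-1$ gives precisely $(\II_1)_0$. The entire content has been delegated to Lemma \ref{lem.gg}, which is already established, so the only thing that needs care is the bookkeeping of signs concealed in the $\pm$ shorthand \eqref{compnot}; there is no substantive obstacle.
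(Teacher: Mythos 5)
Your proposal is correct and is essentially the paper's own proof, which simply states that $(\II_1)_0$ follows immediately from Lemma \ref{lem.gg} once one notes $\uptau^{(0)}=g^{(0)}$ and $\uptau^{(-1)}\equiv 0$. You have merely made explicit the sign bookkeeping $[(a_2\pm a_0|X)]=-[(a_0\pm a_2|X)]$ that the paper leaves implicit, and that step is handled correctly.
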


\begin{proof}
This is an immediate consequence of \eqref{g_ratio}, since $\tau^{(0)}=g^{(0)}$.
\end{proof}

\begin{prop}
$(\II_0)_0$ holds.
\end{prop}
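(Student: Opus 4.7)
The plan is to deduce $(\II_0)_0$ from the already-established Lemma \ref{lem.gg} together with the three-term identity \eqref{three-term} for the bracket, using the $C_8$-frame enlargement of Remark \ref{rem.enlarge_C3}.

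First, I would enlarge the given $C_3$-frame $\{\pm a_0, \pm a_1, \pm a_2\}$ of type $\II_0$ to a $C_8$-frame. Since $(\phi|a_i)=0$ for each $i$, the enclosing $C_8$-frame cannot be of type $(\I)$, so it must be of type $(\II)$. In particular, the $C_8$-frame contains two additional vectors $b_0, b_1$ with $(\phi|b_0)=(\phi|b_1)=1$ and $b_0+b_1=\phi$, both orthonormal to all three $a_i$. Consequently, for each pair $i\neq j$ in $\{0,1,2\}$, the set $\{\pm b_0, \pm a_i, \pm a_j\}$ is a $C_3$-frame of type $(\II_1)$, with $b_0$ in the role of the vector paired with the highest root.

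Applying Lemma \ref{lem.gg} to this $\II_1$-frame yields
$$\frac{g^{(0)}(X\pm a_i\vT)}{g^{(0)}(X\pm a_j\vT)} = \frac{[(b_0\pm a_i|X)]}{[(b_0\pm a_j|X)]},$$
and hence there exists a factor $c=c(X)$, independent of $i\in\{0,1,2\}$, such that
$$g^{(0)}(X\pm a_i\vT) = c\,[(b_0\pm a_i|X)], \quad i=0,1,2.$$
Substituting this into the left-hand side of $(\II_0)_0$, and using $\uptau^{(0)}=g^{(0)}$, the identity reduces to
$$c\bigl([(a_1\pm a_2|X)][(b_0\pm a_0|X)] + [(a_2\pm a_0|X)][(b_0\pm a_1|X)] + [(a_0\pm a_1|X)][(b_0\pm a_2|X)]\bigr) = 0,$$
which is precisely the three-term identity \eqref{three-term} with $Z_i=(a_i|X)$ and $Z=(b_0|X)$.

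The main obstacle to iron out is the parity condition in \eqref{three-term}: we need the second components of $(a_i|X)$ and $(b_0|X)$ to lie simultaneously in $\mathbb{Z}$ or simultaneously in $\mathbb{Z}+\tfrac{1}{2}$. This follows from the $C_l$-frame axioms, since $a_i\pm a_j\in Q(E_8)$ and $a_i\pm b_0\in Q(E_8)$ force all pairwise differences of the relevant second coordinates to be integers. Beyond this, the only conceptual point to check is that Lemma \ref{lem.gg} is indeed available on all of $D_0$, which holds because the shifts $\pm a_i\vT$ used preserve $D_0$ (as $(\phi|a_i)=0$).
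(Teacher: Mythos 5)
Your proof is correct and is essentially the paper's own argument: the paper likewise enlarges the type-$\II_0$ frame by an auxiliary vector $a_3$ with $(\phi|a_3)=1$ (your $b_0$, obtained from the type-$(\II)$ $C_8$-frame of Remark \ref{rem.enlarge_C3}), applies Lemma \ref{lem.gg} to the resulting $\II_1$-frames to express $\uptau^{(0)}(X\pm a_i\vT)=g^{(0)}(X\pm a_i\vT)$ as a common factor times $[(a_3\pm a_i|X)]$, and concludes by the three-term identity \eqref{three-term}. Your extra remarks on the parity of the second components and on the shifts preserving $D_0$ are valid but do not change the route.
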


\begin{proof}
For a $C_3$-frame $\{\pm a_0,\pm a_1,\pm a_2 \}$  of type $(\II_0)_0$, we can 
choose an extra vector $a_3$ such that $\{\pm a_i,\pm a_j,\pm a_3 \}$
for $0\le i<j\le 2$, are all $C_3$-frames of type  $(\II_1)$ (see Remark \ref{rem.enlarge_C3}).
This implies that by using \eqref{g_ratio}, for $\tau^{(0)}=g^{(0)}$ 
\begin{align}
\tau^{(0)}(X \pm a_i T)=\frac{[(a_3\pm a_i | X)]}{ [(a_3\pm a_j | X)]} \tau^{(0)}(X \pm a_j T) \quad (0\le i<j\le 2) .
\end{align}
The identity in question, namely $(\II_0)_0$ (\eqref{II_0_n} with $n=0$), now reduces to
\begin{align}
\begin{split}
 &[(a_1\pm a_2 | X)]  [(a_3\pm a_0 | X)]  +[(a_2\pm a_0 | X)] [(a_3\pm a_1 | X)] \\
& \qquad +[(a_0\pm a_1 | X)] [(a_3\pm a_2 | X)]=0 .
\end{split}
\end{align}
This holds due to the three-term identity \eqref{three-term}.
\end{proof}

\begin{prop}
$(\I)_{1/2}$ holds.
\end{prop}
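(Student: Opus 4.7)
The plan is to reduce $(\I)_{1/2}$ to the contiguity relation~\eqref{contrel} satisfied by the sum/integral $I(x, \smb{x}; \sigma, \tau)$. By the $W(E_7)$-invariance of $\uptau^{(n)}$ (Proposition~\ref{prop.E7}) together with the fact that all $C_3$-frames of type $\I$ form a single $W(E_7)$-orbit (Proposition~\ref{prop.C3_frame}), it suffices to verify $(\I)_{1/2}$ for the specific frame $a_i = v_i$, $i=0,1,2$. For this frame the shift $X \mapsto X \pm a_i \vT$ simply translates a single coordinate pair $x_i \mapsto x_i \pm \tau$, $\smb{x}_i \mapsto \smb{x}_i \pm 1$, which is exactly the elementary shift $T_{\tau, i}$ appearing in~\eqref{contrel}.

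I would then use the manifestly $\mathfrak{S}_8$-symmetric form of $\uptau^{(1)}$ given in the last line of~\eqref{tau_1}, together with the expression~\eqref{tau_0} for $\uptau^{(0)}$. The product $\uptau^{(0)}(X - a_i \vT)\, \uptau^{(1)}(X + a_i \vT)$ then decomposes into the factor $I(x + v_i \tau, \smb{x} + v_i; \sigma, \tau)$, the exponential prefactor $e^{-Q(X + a_i \vT)}$, and a product of $\Gt_\sigma$ and $\Gt_{\sigma\tau}$ factors indexed by pairs $0 \le j < k \le 7$. Pairs with $i \notin \{j,k\}$ give contributions manifestly independent of $i$, while for each $\ell \ne i$ the pair $\{i, \ell\}$ produces the ratio
\begin{align*}
\frac{\Gt_\sigma(x_i+x_\ell,\smb{x}_i+\smb{x}_\ell;\sigma,\tau,\tau)\, \Gt_{\sigma\tau}(\tau+x_i+x_\ell,1+\smb{x}_i+\smb{x}_\ell;\sigma,\tau,\tau,\mu)}{\Gt_\sigma(\tau+x_i+x_\ell,1+\smb{x}_i+\smb{x}_\ell;\sigma,\tau,\tau)\, \Gt_{\sigma\tau}(x_i+x_\ell,\smb{x}_i+\smb{x}_\ell;\sigma,\tau,\tau,\mu)} ,
\end{align*}
which I claim reduces to unity. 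Indeed, the $\Gt_\sigma$-recurrence~\eqref{shift_Gamma3} with $\omega = \tau$, combined with the shift-invariance $\gt_\tau(z+\tau, \smb{z}+1) = \gt_\tau(z,\smb{z})$ from Proposition~\ref{lgam3prop}, implies that both $\Gt_\sigma$ and $\Gt_{\sigma\tau}$ obey the same shift rule $\Gt(z+\tau,\smb{z}+1) = e^{\phi_e(z,\smb{z})}\gamma_\sigma(z,\smb{z})\Gt(z,\smb{z})$; substituting into the numerator and denominator then causes the $e^{\phi_e}$ and $\gamma_\sigma$ contributions to cancel pairwise.

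With this cancellation, all $i$-dependence of the product collapses to the factor $e^{-\frac{2\pi \ii}{r}(x_i - \smb{x}_i)}$ coming from $e^{-Q(X + a_i \vT)}$, via a short computation using $(v_i|v_j) = \delta_{ij}$. Denoting the common $i$-independent prefactor by $\mu(X)$, one obtains
\begin{align*}
\uptau^{(0)}(X - a_i \vT)\, \uptau^{(1)}(X + a_i \vT) = \mu(X)\, e^{-\frac{2\pi \ii}{r}(x_i - \smb{x}_i)}\, I(x + v_i \tau, \smb{x} + v_i; \sigma, \tau) .
\end{align*}
Evaluating the bracket coefficients via~\eqref{def_bracket} gives $[(v_j \pm v_k | X)] = e^{\frac{2\pi \ii}{r}(-x_j + \smb{x}_j)}\, \theta_\sigma(x_j \pm x_k, \smb{x}_j \pm \smb{x}_k)$. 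Substituting into the three-term sum of $(\I)_{1/2}$ and pulling out the $i$-independent overall phase $e^{\frac{2\pi \ii}{r}(-x_0-x_1-x_2+\smb{x}_0+\smb{x}_1+\smb{x}_2)}$, the remaining expression coincides precisely with the contiguity relation~\eqref{contrel} for $I(x,\smb{x};\sigma,\tau)$ applied to the triple $(i,j,k) = (0,1,2)$, and therefore vanishes.

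The main obstacle I anticipate is the combinatorial verification that the $\Gt$-ratio reduces to unity. The subtle point is that the $\Gt_\sigma$-factor originates in $\uptau^{(0)}$ while the $\Gt_{\sigma\tau}$-factor originates in $\uptau^{(1)}$, and these lens triple gamma functions are defined somewhat differently in Section~\ref{sec.functions}; only after recognising that $\Gt_{\sigma\tau}$ inherits the $\Gt_\sigma$ shift rule---through the shift-invariance of the $\gt_\tau$ factor in Proposition~\ref{lgam3prop}---do the contributions in the numerator and denominator cancel. Once this alignment is established, matching the exponential phases against the contiguity coefficients is a routine bookkeeping exercise using $(v_i|v_j) = \delta_{ij}$.
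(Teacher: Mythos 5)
Your proposal is correct and follows essentially the same route as the paper's proof: reduce to the representative frame $a_i=v_i$ via $W(E_7)$-invariance, show that all the triple-gamma factors in $\uptau^{(0)}(X-a_i\vT)\,\uptau^{(1)}(X+a_i\vT)$ are independent of $i$ (the paper organises this as the inverse-ratio cancellation of a function $\mathcal{F}$ with $\uptau^{(0)}=\mathcal{F}(X+\vT)$ and $\uptau^{(1)}=\mathcal{F}(X)J(X)$, which is the same pair-by-pair cancellation you obtain from $\Gt_{\sigma\tau}=\Gt_\sigma\gt_\tau$ and the $\gt_\tau$ shift-invariance), extract the phase $e^{-\frac{2\pi\ii}{r}(x_i-\smb{x}_i)}$ from $Q(X+v_i\vT)$, and identify the result with the contiguity relation \eqref{contrel}. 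The bookkeeping you outline matches the paper's computation.
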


\begin{proof}

As a $C_3$-frame of type $(\I)_{1/2}$ we can choose from the $W(E_7)$-orbit
a representative $\{ \pm a_0, \pm a_1, \pm a_2 \}=\{ \pm v_1, \pm v_2, \pm v_3 \}$.
We wish to show
\begin{align}
\begin{split}
&[X_2\pm X_3] \uptau^{(0)}(X-v_0 \vT) \uptau^{(1)}(X+v_0 \vT)
+[X_3\pm X_1] \uptau^{(0)}(X-v_0 \vT)\uptau^{(1)}(X+v_1 \vT) \\
&+[X_1\pm X_2] \uptau^{(0)}(X-v_2 \vT) \uptau^{(1)}(X+v_2 \vT) =0.
\end{split}
\label{show1}
\end{align}
Let us define 
\begin{align}
\begin{split}
\mathcal{F}(X)&:= \prod_{0\le i<j\le 7}  \Gt_\sigma( x_i+x_j , \tilx_i+\tilx_j ; \sigma, \tau, \tau),\\
J(X)&:=e^{-Q(X)}  I(X)\prod_{0\le i<j\le 7}  \gt_\tau( x_i+x_j , \tilx_i+\tilx_j ; \sigma, \tau, \mu) ,
\end{split}
\end{align}
so that 
\begin{align}
\begin{split}
\uptau^{(0)}&=\mathcal{F}(X+\vT),\qquad \uptau^{(1)}=\mathcal{F}(X) J(X) .
\end{split}
\end{align}
We see that the following two ratios
\begin{align}
\begin{split}
&\frac{\mathcal{F}(X+\vT-v_0 \vT) }{\mathcal{F}(X+\vT)}= \prod_{0<j\le 7} 
\frac{ \Gt_\sigma( x_0+x_j , \tilx_0+\tilx_j ; \sigma, \tau, \tau)}{ \Gt_\sigma(\tau+ x_0+x_j , 1+\tilx_0+\tilx_j  ; \sigma, \tau, \tau)}  ,  \\
&\frac{\mathcal{F}(X+v_0 \vT) }{\mathcal{F}(X)}= \prod_{0< j\le 7} 
\frac{ \Gt_\sigma(\tau+ x_0+x_j , 1+\tilx_0+\tilx_j  ; \sigma, \tau, \tau)}{ \Gt_\sigma( x_0+x_j , \tilx_0+\tilx_j ; \sigma, \tau, \tau)}  , 
\end{split}
\end{align}
are the inverse of each other. This means that
\begin{align}
\begin{split}
\uptau^{(0)}(X-v_0 \vT) \uptau^{(1)}(X+v_0 \vT) &=
\mathcal{F}(X+\vT-v_0 \vT)  \mathcal{F}(X+v_0 \vT) J(X+v_0 \vT)
\\
&=\mathcal{F}(X+\vT)  \mathcal{F}(X) e^{-Q(X+v_0 \vT)} I(X+v_0 \vT),
\end{split}
\label{tautau}
\end{align}
Substituting \eqref{tautau} into \eqref{show1},
\eqref{show1}, gives
\begin{align}
\begin{split}
&[X_1\pm X_2] e^{-Q(X+v_0 \vT)} I(X+v_0 \uptau)
+[X_2\pm X_0]  e^{-Q(X+v_1 \vT)} I(X+v_1 \vT) \\
&+[X_0\pm X_1] e^{-Q(X+v_2 \vT)} I(X+v_2 \vT) =0.
\end{split}
\label{show2}
\end{align}
Next from the definition of $Q(X)$ in \eqref{Q_def}, we obtain
\begin{align}
Q(X+v_i \vT)=Q(X)+\frac{2\pi \ii}{r} \left( x_i-\tilx_i+\frac{\tau-1}{2} \right),
\end{align}
and consequently \eqref{show2} reduces to
\begin{align}
\label{contrel2}
\begin{split}
&\left[X_1\pm X_2\right]\EXP^{-\frac{2\pi\ii}{r}(\vt_i-\smb{x}_i)}T_{\tau,i}I(t,a) \\
&\qquad +\left[X_2\pm X_0\right]\EXP^{-\frac{2\pi\ii}{r}(\vt_j-\smb{x}_j)}T_{\tau,j}I(t,a)  \\
&\qquad +\left[X_0\pm X_1\right]\EXP^{-\frac{2\pi\ii}{r}(\vt_k-\smb{x}_k)}T_{\tau,k}I(t,a)=0,
\end{split}
\end{align}
where the shift operator $T_{\tau,k}$ is defined in \eqref{T_shift}.
This is exactly the  contiguity relation \eqref{contrel}.
\end{proof}

\begin{prop}
$(\II_1)_n$ holds for $n=1,2,\dots$.
\end{prop}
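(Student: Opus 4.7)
The strategy is to use the explicit form \eqref{def_tau_n} for $\uptau^{(n)}$, apply Lemmas \ref{lem.ggg} and \ref{lem.gg} to strip away the scalar prefactors $g^{(n)}$, and recognize the residual identity among the determinants $D^{(n)}(X):=\det\bigl(\psi^{(n)}_{ij}(X)\bigr)_{i,j=1}^n$ as the classical Desnanot--Jacobi (Dodgson condensation) identity for a single $(n+1)\times(n+1)$ matrix.

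Substituting \eqref{def_tau_n} into $(\II_1)_n$, Lemma \ref{lem.ggg} converts $g^{(n-1)}(X-a_0\vT)\,g^{(n+1)}(X+a_0\vT)$ into $([(a_0\pm a_2|X)]/[(a_1\pm a_2|X)])\,g^{(n)}(X\pm a_1\vT)$, while Lemma \ref{lem.gg} converts $g^{(n)}(X\pm a_2\vT)$ into $([(a_0\pm a_2|X)]/[(a_0\pm a_1|X)])\,g^{(n)}(X\pm a_1\vT)$. Consequently every term acquires the common prefactor $[(a_0\pm a_2|X)]\,g^{(n)}(X\pm a_1\vT)$. Using the antisymmetry $[-Z]=-[Z]$ to identify $[(a_2\pm a_0|X)]=-[(a_0\pm a_2|X)]$, and dividing out this common factor, the Hirota equation $(\II_1)_n$ reduces to the purely determinantal identity
\begin{align*}
& D^{(n-1)}(X-a_0\vT)\,D^{(n+1)}(X+a_0\vT)-D^{(n)}(X+a_1\vT)\,D^{(n)}(X-a_1\vT)\\
& \qquad +D^{(n)}(X+a_2\vT)\,D^{(n)}(X-a_2\vT)=0.
\end{align*}

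By the $W(E_7)$-invariance of $\uptau^{(n)}$ (Proposition \ref{prop.E7}) and the transitive action of $W(E_7)$ on type-$\II_1$ $C_3$-frames (Proposition \ref{prop.C3_frame}), it suffices to verify this identity for the representative frame \eqref{II1_frame}. Using the formula \eqref{v_n} for $v^{(n)}_{ij}$, a direct calculation yields the five vector identities
\begin{align*}
a_0+v^{(n+1)}_{i,j}&=v^{(n)}_{i,j}+a_1, & a_0+v^{(n+1)}_{i+1,j+1}&=v^{(n)}_{i,j}-a_1,\\
a_0+v^{(n+1)}_{i,j+1}&=v^{(n)}_{i,j}+a_2, & a_0+v^{(n+1)}_{i+1,j}&=v^{(n)}_{i,j}-a_2,\\
a_0+v^{(n+1)}_{i+1,j+1}&=v^{(n-1)}_{i,j}-a_0. &&
\end{align*}
Introducing the $(n+1)\times(n+1)$ matrix $M$ with entries $M_{ij}:=\psi^{(n+1)}_{ij}(X+a_0\vT)$, these identities translate directly into identifications of the minors of $M$ with the shifted determinants above: deletion of the corner row/column pairs $(1,1), (n+1,n+1), (1,n+1), (n+1,1)$ produces $D^{(n)}(X-a_1\vT), D^{(n)}(X+a_1\vT), D^{(n)}(X-a_2\vT), D^{(n)}(X+a_2\vT)$ respectively, while simultaneously deleting rows $\{1,n+1\}$ and columns $\{1,n+1\}$ produces $D^{(n-1)}(X-a_0\vT)$. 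Since $\det(M)=D^{(n+1)}(X+a_0\vT)$ by definition, the classical Desnanot--Jacobi identity
\[
\det(M)\,\det(M^{\{1,n+1\}}_{\{1,n+1\}})=\det(M^{\{1\}}_{\{1\}})\det(M^{\{n+1\}}_{\{n+1\}})-\det(M^{\{1\}}_{\{n+1\}})\det(M^{\{n+1\}}_{\{1\}}),
\]
where $M^{I}_{J}$ denotes the minor of $M$ with rows $I$ and columns $J$ deleted, rearranges to exactly the displayed determinantal identity.

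The principal step in this plan is the identification of the $(\pm a_0,\pm a_1,\pm a_2)$-shifts of the six determinants appearing in the reduced Hirota equation with the six minors appearing in Desnanot--Jacobi; once this correspondence is set up through the five vector identities above, the proof is immediate. A remark on the base case $n=1$: the empty determinant $D^{(0)}\equiv 1$ is well-defined, and Desnanot--Jacobi reduces to the $2\times 2$ determinant expansion, so the argument goes through with no modification. The remaining cases $(\I)_{n+\frac{1}{2}}, (\II_0)_{n+1}, (\II_2)_n$ then follow by Proposition \ref{prop.follows}.
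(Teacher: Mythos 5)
Your proof is correct and follows essentially the same route as the paper: Lemmas \ref{lem.ggg} and \ref{lem.gg} strip the $g^{(n)}$-prefactors from $(\II_1)_n$, leaving a determinantal identity among the Casorati determinants that is exactly the Desnanot--Jacobi (Lewis Carroll) formula applied to the matrix $\bigl(\psi^{(n+1)}_{ij}(X+a_0\vT)\bigr)$. Your write-up is in fact more explicit than the paper's, which merely cites ``the Lewis Carroll formula''; your five vector identities and minor identifications (and the careful sign bookkeeping via $[(a_2\pm a_0|X)]=-[(a_0\pm a_2|X)]$) correctly fill in that step.
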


\begin{proof}
For a $C_3$-frame $\{\pm a_0, \pm a_1, \pm a_2 \}$ of type ${\rm II}_1$, we obtain
from Lemma \ref{lem.ggg} and Lemma \ref{lem.gg} that for $n=0, 1, \dots$,
\begin{align}
\begin{split}
& [(a_1\pm a_2 | X)] g^{(n-1)} (X- a_0\vT) g^{(n+1)} (X+ a_0\vT) \\
& =[(a_2\pm a_0 | X)]  g^{(n)} (X\pm a_1\vT) \\
& =[(a_0\pm a_1 | X)] g^{(n)} (X\pm  a_2\vT) .
\end{split}
\end{align}
This means that each of the above factors may be cancelled out of $({\rm II}_1)_{n}$,
resulting in
\begin{align}
\begin{split}
& K^{(n-1)} (X- a_0\vT) K^{(n+1)} (X+ a_0\vT) \\
& + K^{(n)} (X\pm a_1\vT) 
+ K^{(n)} (X\pm  a_2\vT) =0 , \quad (n=1,2,\ldots) ,
\end{split}
\end{align}
where $K^{(n)}(x):=\det\left(\psi^{(n)}_{ij}(x)\right)^n_{i,j=1}$ is the Casorati determinant.
The last equation is satisfied as a consequence of the Lewis Carroll formula.
\end{proof}

\appendix

\section{\texorpdfstring{Derivation of $W(E_7)$ Sum/Integral Transformation}{Derivation of W(E7) Sum/Integral Transformation}}\label{app.e7trans}

In this Appendix a proof will be given of Proposition \ref{prop.e7trans}.  Note that in the following, the function $\Gamma(z,\tilz)$ denotes the lens elliptic gamma function defined in \eqref{legf2}.  First consider the following $A_1 \leftrightarrow A_0$ transformation of \cite{Kels:2017toi}.  

\begin{prop}
Suppose that  $t=(t_0,\ldots,t_5)\in \mathbb{C}^6$, $\im(t_i)>0$, and $\tilt=(\tilt_0,\ldots,\tilt_5)\in \mathbb{Z}^6$, satisfy
\begin{align}
\label{balancing}
\sum_{i=0}^{5} t_i\equiv\sigma+\tau\quad (\bmod \, 2r)\,,\quad\sum_{i=0}^{5} \tilt_i \equiv0\quad (\bmod \, r)\,.
\end{align}
Then the sum/integral
\begin{align}
\label{ebsi}
\begin{split}
I_0(\sumY|t,\tilt)
=\frac{\lambda}{2} \hspace{-0.2cm} \sum_{\substack{\tilz_0,\tilz_1=0 \\[0.1cm] \sum_{i=0}^{1}\tilz_i=\sumY}}^{r-1} \hspace{-0.6cm}\int\limits_{\substack{ \hspace{0.4cm}[0,1]^2 \\[0.1cm] \hspace{0.8cm}\sum_{i=0}^1z_i=0}}\hspace{-0.8cm}dz_0\,dz_1 
\frac{\prod\limits_{i=0}^1\prod\limits_{j=0}^{2}\Gamma(t_j+z_i, \tilt_j+\tilz_i) \prod\limits_{j=3}^{5}\Gamma(t_j-z_i, \tilt_j-\tilz_i)}{\Gamma\left(\pm(z_0-z_1),\pm(\tilz_0-\tilz_1)\right)},
\end{split}
\end{align}
where $\sumY\in\mathbb{Z}$, and $\lambda$ is defined in \eqref{lambda_def}, may be evaluated as
\begin{align}
\label{eq2}
\begin{split}
I_0(\sumY|t,\tilt)=\prod_{\substack{0\leq i \leq 2 \\ 3\leq j\leq 5}}\Gamma(t_i+t_j,\tilt_i+\tilt_j)\prod_{0\leq i<j\leq 2}\Gamma(t_i+t_j,\tilt_i+\tilt_j+\sumY)\phantom{.}\\
\times 
\prod_{3\leq i<j\leq 5}\Gamma(t_i+t_j,\tilt_i+\tilt_j-\sumY) .
\end{split}
\end{align}
\end{prop}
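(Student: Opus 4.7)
My plan is to reduce this evaluation to Spiridonov's classical (non-lens) elliptic beta integral, by using the product decomposition \eqref{legfprod} of the lens elliptic gamma function into two ordinary elliptic gamma functions with moduli $(r\sigma,\sigma+\tau)$ and $(r\tau,\sigma+\tau)$.

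First I would resolve the balancing constraints in the integration: set $(z_1,\tilz_1)=(-z_0,\sumY-\tilz_0)$, so that \eqref{ebsi} collapses to a single contour integral in $z:=z_0\in[0,1]$ combined with a finite sum over $\tilz:=\tilz_0\in\{0,\ldots,r-1\}$. Applying \eqref{legfprod} to each factor $\Gamma(t_j\pm z,\tilt_j\pm\tilz)$, and to the denominator $\Gamma(\pm 2z,\pm(2\tilz-\sumY))$, rewrites the integrand as a product of regular elliptic gamma functions $\Gamma_1(\cdot;r\sigma,\sigma+\tau)$ and $\Gamma_1(\cdot;r\tau,\sigma+\tau)$, with the $\phi_e$ normalisations gathered into an explicit exponential prefactor. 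Then for each fixed $\tilz$ the $z$-integration takes the form of Spiridonov's six-parameter elliptic beta integral: the six arguments are $t_j+\sigma\tilt_j$ (or $t_j+\tau(r-\tilt_j)$, depending on the $\gamma$-factor), and the balancing $\sum_j t_j\equiv\sigma+\tau\pmod{2r}$ from \eqref{balancing} is precisely what ensures the Spiridonov balancing condition holds for both modulus choices simultaneously. Direct evaluation yields an explicit closed-form expression for the integral, leaving only the finite sum over $\tilz$.

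The main obstacle is this last step: consolidating the discrete sum over $\tilz$ and reassembling the product of regular elliptic gamma functions into the lens elliptic gamma functions $\Gamma(t_i+t_j,\tilt_i+\tilt_j)$, $\Gamma(t_i+t_j,\tilt_i+\tilt_j\pm\sumY)$ appearing on the RHS of \eqref{eq2}. The cleanest approach is to argue by contiguity: both sides satisfy the same shift relations under $t_k\mapsto t_k+\sigma$, $\tilt_k\mapsto\tilt_k-1$ (and the analogous $\tau$-shifts), because on the LHS the shift acts on the integrand by multiplication by a single $\theta_\tau$ factor via \eqref{Gamma_ratio_1}, and on the RHS it acts in the same way by the same shift relation. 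Since these shifts generate enough of the parameter space (modulo the balancing constraint), it then suffices to match the two sides at one convenient initial configuration, such as a pole-pinching limit where the integral localises to a single residue and the evaluation reduces to a finite theta function identity. An alternative, more direct, route is to invert \eqref{legfprod} together with the lens theta shifts \eqref{thtshft} to repackage the shifted products of $\Gamma_1$ back into lens $\Gamma$'s term by term; this works but is bookkeeping-intensive.
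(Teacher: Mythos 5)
This proposition is not proven in the paper at all: it is imported verbatim from \cite{Kels:2017toi}, and for $\sumY=0$ it is the lens elliptic beta sum/integral formula first proven in \cite{Kels:2015bda} (see also \cite{rarified}). So the real question is whether your standalone argument is sound, and it is not: the step ``for each fixed $\tilz$ the $z$-integration takes the form of Spiridonov's six-parameter elliptic beta integral'' fails. After applying \eqref{legfprod}, the $\gamma_\sigma$-half of the integrand is a Spiridonov integrand in the variable $z+\sigma\tilz$ with moduli $(r\sigma,\sigma+\tau)$, while the $\gamma_\tau$-half is a Spiridonov integrand in the \emph{different} combination $z+\tau(r-\tilz)$ with moduli $(r\tau,\sigma+\tau)$. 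Both halves involve the same single integration variable $z$, so the integrand is a product of two Spiridonov integrands evaluated at two linked arguments; a single contour integral of such a product is not the product of the two separate Spiridonov evaluations, and no closed form is available slice-by-slice in $\tilz$. The factorized right-hand side of \eqref{eq2} (whose $\gamma_\sigma$ and $\gamma_\tau$ pieces do look like two independent Spiridonov evaluations) emerges only after the sum over $\tilz$ is performed together with the integral --- that coupling of the discrete sum to the integral is precisely the nontrivial content of the lens generalisation, and it is why \cite{Kels:2015bda} required a new proof rather than a reduction to the $r=1$ case.

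Your fallback strategy (contiguity relations in the parameters plus normalization at a pole-pinching point) is in fact the correct general shape of the known proofs, but as written it is deployed only to ``consolidate the sum over $\tilz$'' after the invalid reduction, so it does not repair the argument. To make it work you would have to run it on the full sum/integral $I_0(\sumY|t,\tilt)$ from the start: show that the ratio of the two sides of \eqref{eq2} satisfies trivial first-order difference equations under simultaneous shifts such as $(t_k,\tilt_k)\mapsto(t_k+\tau,\tilt_k+1)$, $(t_l,\tilt_l)\mapsto(t_l-\tau,\tilt_l-1)$ (note that a single unpaired shift $t_k\mapsto t_k+\sigma$, $\tilt_k\mapsto\tilt_k-1$ violates the balancing condition \eqref{balancing}, so the shifts must be compensated), establish ellipticity and boundedness of the ratio to conclude it is constant, and fix the constant by a residue computation at a point where the contour pinches. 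That is essentially the proof given in \cite{Kels:2015bda,Kels:2017toi}, and none of it can be bypassed by invoking Spiridonov's integral at fixed $\tilz$.
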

Note that $\tilz_0,\tilz_{1}$, in the sum \eqref{ebsi}, are regarded as elements of $\mathbb{Z}/r \mathbb{Z}$.
For $\sumY=0$, \eqref{eq2} is the elliptic beta sum/integral formula \cite{Kels:2015bda}.

We wish to extend the above formula to the case of
$\tilt\in \mathbb{Z}^8 \cup (\mathbb{Z}+1/2)^8$.  
For this purpose, we set
\begin{align}\label{ZY}
\sumY=\begin{cases}   
0 & (\tilt\in\mathbb{Z}^8),  \\
r & \left(\tilt\in \left(\mathbb{Z}+\frac{1}{2}\right)^8 \right). \\
\end{cases}
\end{align}
The sum over $\tilz_0, \tilz_1$, satisfying $\tilz_0+\tilz_1=\sumY$ can be exchanged for a
sum over a new variable $\tilz$, where
\begin{align}
\begin{cases}
\tilz_0=\tilz,\quad \tilz_1= -\tilz  , &\quad  (\tilt\in\mathbb{Z}^8, \sumY=0) , \\
\tilz_0=\tilz+\frac{r}{2},\quad \tilz_1= -\tilz+\frac{r}{2}  , &\quad  (\tilt\in \left(\mathbb{Z}+\frac{1}{2}\right)^8, \sumY=r;\, r \textrm{ even}) , \\
\tilz_0=\tilz +\frac{r+1}{2},\quad \tilz_1= -\tilz+\frac{r-1}{2}  , &\quad  (\tilt\in \left(\mathbb{Z}+\frac{1}{2}\right)^8, \sumY=r; \, r \textrm{ odd}) .
\end{cases}
\end{align}
This may be concisely written as
\begin{align}
\label{Zchoice}
\begin{split}
\tilz_0 &=+\tilz+(r+((r+1)\bmod 2))(\tilt_1 \bmod 1),\quad \\
\tilz_1 &=-\tilz+(r-((r+1)\bmod 2))(\tilt_1 \bmod 1).
\end{split}
\end{align}

Using the $r$-periodicity of the lens elliptic gamma function, with the choice \eqref{ZY}, \eqref{Zchoice}, the sum/integral \eqref{ebsi} may be written as
\begin{align}
I_0(t,\tilt):=I_0(\sumY|t,\tilt)=\frac{\lambda}{2}\sum_{\substack{\tilz=0}}^{r-1}\int_{[0,1]}dz
&\frac{\prod_{j=0}^{5}\Gamma(t_j\pm z,\tilt_j\pm \tilz_0)}{\Gamma(\pm 2z,\pm 2\tilz_0)},
\end{align}
while the formula \eqref{eq2} becomes
\begin{align}
\label{eq22}
I_0(t,\tilt)&=\prod_{0\leq i\leq 5}\Gamma(t_i+t_j,\tilt_i+\tilt_j).
\end{align}
Note that \eqref{eq22} is valid for both $\tilt\in\mathbb{Z}^8$ and  $\tilt\in (\mathbb{Z}+\frac{1}{2})^8$. 

Proposition \ref{prop.e7trans} may be proven with the use of \eqref{eq22}, analogously to a derivation given by Spiridonov \cite{rarified}.\footnote{The derivation appearing in \cite{rarified} uses a different notation than is used here, but after a change of variables, both derivations are seen to be based on the same elliptic beta sum/integral formula \eqref{eq2} that was first proven by the authors \cite{Kels:2015bda,Kels:2017toi}.  Thus the derivations are equivalent.  The authors thank V.P. Spiridonov for pointing this out.}

\begin{proof}[Proof of Proposition \ref{prop.e7trans}]

Consider $\alpha\in\mathbb{C}$, $x,y\in\mathbb{C}^4$, and $\tilx,\tily\in\mathbb{Z}^4\cup (\mathbb{Z}+\frac{1}{2})^4$, where $\im(\alpha),\im(x_i),\im(y_i)>0$, and
\begin{align}
2\alpha+\sum_{j=0}^3 x_j=2\alpha+\sum_{j=0}^3 y_j=\sigma+\tau,\quad 2\bbalpha+\sum_{j=0}^3 \tilx_j=2\bbalpha+\sum_{j=0}^3 \tily_j  = k r,
\end{align}
for some integer $k$.  If both $\tilx,\tily\in\mathbb{Z}^4$ or both $\tilx,\tily\in(\mathbb{Z}+\frac{1}{2})^4$, then we choose $\bbalpha\in\mathbb{Z}$, otherwise we choose $\bbalpha\in(\mathbb{Z}+\frac{1}{2})$.

In terms of the above variables, consider the following sum/integral
\begin{align}
\label{eq32}
\begin{split}
&\sum_{\substack{\tilz=0}}^{r-1}\sum_{\substack{\tilw=0}}^{r-1}\int_{[0,1]^2}\hspace{-0.2cm} dwdz\,\Gamma(\alpha\pm z\pm w,\bbalpha\pm \hat{\tilz}\pm \hat{\tilw})\\[-0.2cm]
& \qquad\qquad \qquad \times \frac{\prod_{j=0}^3\Gamma(x_j\pm z, \tilx_j\pm \hat{\tilz})\,\Gamma(y_j\pm w,\tily_j\pm \hat{\tilw})}{\Gamma(\pm 2z,\pm 2 \hat{\tilz})\,\Gamma(\pm 2w,\pm 2 \hat{\tilw})},
\end{split}
\end{align}
where
\begin{align}
\begin{split}
\hat{\tilz} &=+\tilz+(r+((r+1)\bmod 2))(\tilx_1 \bmod 1),\quad \\
\hat{\tilw} &=+\tilw+(r+((r+1)\bmod 2))(\tily_1 \bmod 1).
\end{split}
\end{align}

The expression \eqref{eq32} may be sum/integrated in two different ways.  First using \eqref{eq22} to sum/integrate \eqref{eq32} over $z, \tilz$, gives
\begin{align}
\label{eq52}
\begin{split}
&\Gamma(2\alpha,2\bbalpha)\prod_{0\leq i<j\leq 3}\Gamma(x_i+x_j, \tilx_i+\tilx_j)\\
&\qquad \times \sum_{\substack{\tilw=0}}^{r-1}\int_{[0,1]}dw \frac{\prod_{j=0}^3\Gamma(\alpha\pm w+x_j,\bbalpha\pm \hat{\tilw}+\tilx_j)\,\Gamma(y_j\pm w,\tily_j\pm \hat{\tilw})}{\Gamma(\pm 2w,\pm 2 \hat{\tilw})}.
\end{split}
\end{align}
Next using \eqref{eq22} to sum/integrate \eqref{eq32} over $w, \tilw$, gives
\begin{align}
\label{eq62}
\begin{split}
&\Gamma(2\alpha,2\bbalpha)\prod_{0\leq i<j\leq 3}\Gamma(y_i+y_j, \tily_i+\tily_j) \\
&\qquad\times\sum_{\substack{\tilz=0}}^{r-1}\int_{[0,1]}dz\frac{\prod_{j=0}^3 \Gamma(\alpha\pm z+y_j,\bbalpha\pm \hat{\tilz}+\tily_j)\,\Gamma(x_j\pm z,\tilx_j\pm \hat{\tilz})}{\Gamma(\pm 2z,\pm 2 \hat{\tilz})}.
\end{split}
\end{align}

Define the variables $t=(t_0,\ldots,t_7)\in\mathbb{C}^8$, and $\tilt=(\tilt_0,\ldots,\tilt_7)\in\mathbb{Z}^8\cup (\mathbb{Z}+\frac{1}{2})^8$, as
\begin{align}
\label{vars2}
\begin{split}
&t_i=\alpha+x_i,  \quad  t_{i+4}=y_i  , \quad\tilt_i=\bbalpha+\tilx_i,   \quad \tilt_{i+4}=\tily_i , \quad  (i=0, \dots, 3),
\end{split}
\end{align}
and $I(x,\tilx)$ as the following sum/integral
\begin{align}
\label{sumint}
I(x,\tilx) =
\frac{\lambda}{2}\sum_{\substack{\tilz=0}}^{r-1}\int_{[0,1]} dz\frac{\prod_{j=0}^{7}\Gamma(x_j\pm z,\tilx_j\pm \hat{\tilz})}{\Gamma(\pm 2z,\pm 2 \hat{\tilz})}.
\end{align}
With these variables, the sum/integral appearing in \eqref{eq52} is given by
\begin{align}
I(t,\tilt),
\end{align}
while the sum/integral appearing in \eqref{eq62} is given by
\begin{align}
I(t',\tilt'),
\end{align}
where 
\begin{align}
\label{e7intt2}
\begin{split}
t'_i&=t_i-\alpha=t_i+\frac{\sigma+\tau}{2}-\frac{1}{2}\sum_{i=0}^3 t_i,\quad i=0,1,2,3, \\
t'_i&=t_i+\alpha=t_i+\frac{\sigma+\tau}{2}-\frac{1}{2}\sum_{i=4}^7 t_i,\quad i=4,5,6,7, \\
\tilt'_i&=\tilt_i-\bbalpha=\tilt_i+\frac{kr}{2}-\frac{1}{2}\sum_{i=0}^3\tilt_i,\quad i=0,1,2,3, \\
\tilt'_i&=\tilt_i+\bbalpha=\tilt_i+\frac{kr}{2}-\frac{1}{2}\sum_{i=4}^7\tilt_i,\quad i=4,5,6,7.
\end{split}
\end{align}
For the factors of the lens elliptic gamma functions that appear outside the sum/integrals in \eqref{eq52}, and \eqref{eq62}, we have for distinct $i,j\in\{0,1,2,3\}$,
\begin{align}
x_i+x_j=t_i+t_j-2\alpha,\quad & \tilx_i+\tilx_j=\tilt_i+\tilt_j-2\bbalpha,
\end{align}
leading to, for $i,j,k,l=\{0,1,2,3\}$,
\begin{align}
x_k+x_l=\sigma+\tau-t_i-t_j,\quad \tilx_k+\tilx_l=kr-\tilt_i-\tilt_j.
\end{align}
Equating \eqref{eq52}, with \eqref{eq62}, and collecting all factors, 
we finally obtain
\begin{align}
\label{e7int2}
I(t,\tilt)=I(t',\tilt')\prod_{0\leq i<j\leq 3}\Gamma(t_i+t_j,\tilt_i+\tilt_j)\prod_{4\leq i<j\leq 7}\Gamma(t_i+t_j,\tilt_i+\tilt_j),
\end{align}
where $t'$, and $\tilt'$ are given in \eqref{e7intt2}.  
\end{proof}

\section{Multiple Bernoulli Polynomials}\label{app.Bernoulli}

Let us define the multiple Bernoulli polynomials $B_{n,k}(z;\omega_1,\ldots,\omega_n)$ via the generating function
\begin{align}
\label{berngen}
\frac{x^n\,\EXP^{zx}}{\prod_{j=1}^n(\EXP^{\omega_j x}-1)}=\sum_{k=0}^\infty B_{n,k}(z;\omega_1,\ldots,\omega_n)\,\frac{x^k}{k!}\,,
\end{align}
where $z\in\mathbb{C}$, and $\omega_1,\ldots,\omega_n\in\mathbb{C}-\{0\}$.
These functions previously appeared in relation to the modular properties of multiple gamma functions \cite{Narukawa2004247}.  For this paper, only
 two particular multiple Bernoulli polynomials are needed.  
 
One of these is $B_{3,3}(z;\omega_1,\omega_2,\omega_3)$, which is given explicitly by
\begin{align}
\label{bernoulli}
\begin{split}
\ds B_{3,3}(z;\omega_1,\omega_2,\omega_3)&=\frac{z^3}{\omega_1\omega_2\omega_3}-\frac{3z^2\sum\limits_{i=1}^3\omega_i}{2\omega_1\omega_2\omega_3}+\frac{z\left(\sum\limits_{i=1}^3\omega_i^2+3\sum\limits_{1\leq i<j\leq 3}\omega_i\omega_j\right)}{2\omega_1\omega_2\omega_3}
\\
&-\frac{\left(\sum\limits_{i=1}^3\omega_i\right)\left(\sum\limits_{1\leq i<j\leq3}\omega_i\omega_j\right)}{4\omega_1\omega_2\omega_3}\,.
\end{split}
\end{align}
The other is $B_{4,4}(z;\omega_1,\omega_2,\omega_3,\omega_4)$, which is given by
\begin{align}
\label{bernoulli4}
\begin{split}
& B_{4,4}(z;\omega_1,\omega_2,\omega_3, \omega_4)\\
&= \frac{z^4}{\prod\limits_{i=1}^4\omega_i}-\frac{2z^3\sum\limits_{i=1}^4\omega_i}{\prod_{i=1}^4\omega_i}+\frac{z^2\left(\sum\limits_{i=1}^4\omega_i^2+3\sum\limits_{1\leq i<j\leq 4}\omega_i\omega_j\right)}{\prod\limits_{i=1}^4\omega_i}
-\frac{z\left(\sum\limits_{i=1}^4\omega_i\right)\left(\sum\limits_{1\leq i<j\leq3}\omega_i\omega_j\right)}{\prod\limits_{i=1}^4\omega_i} \\
&\qquad-\frac{\sum\limits_{i=1}^4\omega_i^4-5\sum\limits_{1\leq i<j\leq4}(\omega_i\omega_j)^2-15\sum\limits_{i=1}^4\sum\limits_{\substack{1\leq j<k\leq 4 \\ j,k\neq i}}\omega_i^2\omega_j\omega_k-45\prod\limits_{i=1}^4\omega_i}{30\prod\limits_{i=1}^4\omega_i}\,.
\end{split}
\end{align}
The above two multiple Bernoulli polynomials are related by
\begin{align}
B_{4,4}(z+\omega_4;\omega_1,\omega_2,\omega_3,\omega_4) - B_{4,4}(z;\omega_1,\omega_2,\omega_3,\omega_4)
=4 B_{3,3}(z;\omega_1,\omega_2,\omega_3) .
\end{align}

\bibliography{painleve}
\bibliographystyle{utphys}

\end{document}